\newcommand{\hide}[1]{}  
\newcommand{\formulasize}{\ensuremath{\code{f}\_\code{size}}}
\newcommand{\abot}{\ensuremath{\bot}} 
\newcommand{\mtop}{\ensuremath{^*\!\!\top}}
\newcommand{\lollipop}{\ensuremath{-\!\circ}}
\newcommand{\dpreceq}{\ensuremath{\hat{\preceq}}}
\newcommand{\IL}{\code{IL}}  
\newcommand{\MILL}{\code{MILL}}
\newcommand{\Lwedge}
{\AxiomC{$\Gamma(F; G) \vdash H$}  
\RightLabel{$\wedge L$} 
\UnaryInfC{$\Gamma(F \wedge G) \vdash H$} 
\DisplayProof
} 
\newcommand{\Rwedge} 
{\AxiomC{$\Gamma \vdash F$} 
\AxiomC{$\Gamma \vdash G$} 
\RightLabel{$\wedge R$} 
\BinaryInfC{$\Gamma \vdash F \wedge G$} 
\DisplayProof 
} 
\newcommand{\Lvee} 
{\AxiomC{$\Gamma(F) \vdash H$} 
\AxiomC{$\Gamma(G) \vdash H$} 
\RightLabel{$\vee L$} 
\BinaryInfC{$\Gamma(F \vee G) \vdash H$} 
\DisplayProof 
} 
\newcommand{\Rvee} 
{\AxiomC{$\Gamma \vdash F_i$}
\RightLabel{$\vee R$} 
\UnaryInfC{$\Gamma \vdash F_1 \vee F_2$} 
\DisplayProof} 
\newcommand{\LrightarrowTH}
{\AxiomC{$\mathbb{E}(\widetilde{\Gamma_1}; F {\supset} G) \vdash F$} 
    \AxiomC{$\Gamma(G; \mathbb{E}(\widetilde{\Gamma_1}; F {\supset} G)) \vdash H$} 
\RightLabel{$\supset L$} 
\BinaryInfC{$\Gamma(\mathbb{E}(\widetilde{\Gamma_1}; F {\supset} G)) \vdash H$} 
\DisplayProof}
\newcommand{\RrightarrowTH} 
{\AxiomC{$\Gamma; F \vdash G$} 
\RightLabel{$\supset R$} 
\UnaryInfC{$\Gamma \vdash F {\supset} G$} 
\DisplayProof}
\newcommand{\Lstar} 
{\AxiomC{$\Gamma(F, G) \vdash H$} 
\RightLabel{$* L$} 
\UnaryInfC{$\Gamma(F * G) \vdash H$} 
\DisplayProof}
\newcommand{\LstararrowTHO}
{
\AxiomC{$Re_1 \vdash F$} 
\AxiomC{$\Gamma((Re_2, G); (\Gamma', \mathbb{E}(\Gamma_1;
F \text{\wand} G))) \vdash H$} 
\RightLabel{$\text{\wand} L_1$} 
\BinaryInfC{$\Gamma(\Gamma', \mathbb{E}(\Gamma_1; F {\text{\wand}} G)) \vdash H$} 
\DisplayProof} 
\newcommand{\LstararrowTHT} 
{
\AxiomC{$\Gamma' \vdash F$} 
\AxiomC{$\Gamma(G; (\Gamma', \mathbb{E}(\Gamma_1; F {\text{\wand}} G))) \vdash H$} 
\RightLabel{$\text{\wand} L_2$} 
\BinaryInfC{$\Gamma(\Gamma', \mathbb{E}(\Gamma_1; F {\text{\wand}} G)) \vdash H$} 
\DisplayProof}
\newcommand{\LstararrowTHTH}
{
\AxiomC{$\O_m \vdash F$} 
\AxiomC{$\Gamma((\Gamma', G); (\Gamma', 
\mathbb{E}(\Gamma_1; F \text{\wand} G))) \vdash H$} 
\RightLabel{$\text{\wand} L_3$} 
\BinaryInfC{$\Gamma(\Gamma', \mathbb{E}(\Gamma_1; F {\text{\wand}} G)) \vdash H$} 
\DisplayProof} 
\newcommand{\LstararrowTHF} 
{ 
\AxiomC{$\O_m \vdash F$} 
\AxiomC{$\Gamma(G; F {\text{\wand}} G) \vdash H$} 
\RightLabel{$\text{\wand} L_4$} 
\BinaryInfC{$\Gamma(F {\text{\wand}} G) \vdash H$} 
\DisplayProof}  
\newcommand{\Rstararrow} 
{ 
\AxiomC{$\Gamma, F \vdash G$} 
\RightLabel{$\text{\wand} R$} 
\UnaryInfC{$\Gamma \vdash F {\text{\wand}} G$} 
\DisplayProof
}
\newcommand{\IDTH} 
{ 
\AxiomC{}
\RightLabel{$id$} 
\UnaryInfC{$\mathbb{E}(\widetilde{\Gamma}; p) \vdash p$} 
\DisplayProof}
\newcommand{\Rtop} 
{ 
\AxiomC{}
\RightLabel{$\top R$} 
\UnaryInfC{$\Gamma \vdash \top$} 
\DisplayProof}
\newcommand{\Lbot} 
{
\AxiomC{}
\RightLabel{$\abot L$} 
\UnaryInfC{$\Gamma(\abot) \vdash F$} 
\DisplayProof}
\newcommand{\Cut}{\text{\code{Cut}}}
\newcommand{\repeatnormalise}{\ensuremath{\code{repeat}\_\code{norm}}}
\newcommand{\norm}{\code{norm}}
\newcommand{\derivationlength}{\ensuremath{\code{der}\_\code{len}}}
\newcommand{\code}[1]{{\ensuremath{\tt #1}}} 
\newcommand{\GOI}{\code{G1i}}
\newcommand{\GTI}{\code{G3i}}
\newcommand{\GFI}{\code{G4i}}
\newcommand{\BI}{\code{BI}} 
\newcommand{\LBI}{\code{LBI}}
\newcommand{\LBIT}{\code{LBI2}} 
\newcommand{\LBITH}{\ensuremath{\alpha\code{LBI}}}
\newcommand{\derivationdepth}{\code{der\_depth}}
\newcommand{\wand}{$-\!\!*$}
\newtheorem{observation}{Observation}
\newcommand{\LBIN}{\ensuremath{\LBI}\code{Z}}
\newcommand\realtenfootnotesize{%
   \@setfontsize\footnotesize\@viiipt{9.5}%
} 
\newcommand{\hide}[1]{}
\newcommand{\GOI}{\textbf{G1i}} 
\newcommand{\GTI}{\textbf{G3i}}
\newcommand{\GFI}{\textbf{G4i}}
\newcommand{\BI}{\textbf{BI}}
\newcommand{\LBI}{\textbf{LBI}}
\newcommand{\LBIT}{\textbf{LBI2}} 
\newcommand{\LBITH}{\textbf{LBI3}}
\newcommand{\wand}{\ensuremath{-\!\!*}} 
\newtheorem{observation}{Observation}
\newtheorem{example}{Example}
\newtheorem{lemma}{Lemma}
\newtheorem{theorem}{Theorem}
\newtheorem{proposition}{Proposition}
\newtheorem{definition}{Definition}
\begin{document} 
\title{Structural Interactions and Absorption of Structural 
Rules in BI 
Sequent Calculus}
\author{Ryuta Arisaka}
\institute{INRIA Saclay--\^Ile-de-France, 
        Campus de l'\'Ecole Polytechnique} 
\maketitle
\begin{abstract}        
    Development of a contraction-free {\BI} sequent 
    calculus, be it in the 
    sense of {\GTI} or \GFI, has not been successful in literature. 
   We address the open problem by 
   presenting such a sequent system. 
   In fact our calculus involves no structural 
   rules. 
  \hide{
  Two key concepts for the solutions are 
  (1) essence of structures that recognises and bundles within 
  a sequent 
  relevant structures required for a logical inference rule, and 
  (2) deep absorption of 
  {\LBI} weakening. The former is a notational invention
  addressing 
  interactions between {\LBI} logical inference rules, 
  weakening and the structural equivalence around the multiplicative 
  unit, whereas the latter is a mechanism 
  that gives rise to a critical 
  observation of incremental weakening isolating
  the effect of contraction from that of weakening. 
  Cut admissibility in {\LBITH} is then proved. 
  The sequent calculus is further stripped off redundant artifacts;
  structural units dispelled altogether, into 
  \LBIN. The transition challenges 
  the widely followed wisdom of 
  coherent equivalence in {\BI} proof systems. 
\hide{  From the perspective of proof searches, the presence of 
  contraction and {\BI} structural equivalences are 
  bottlenecks in earlier {\BI} sequent calculi such as 
  \LBI. A contraction-free {\BI} sequent calculus has remained 
  in obscurity due to the difficulty of analysing interactions 
  among inference rules, not only those between {\LBI} contraction and 
  its logical inference rules, but also those between {\LBI} contraction 
  and the other 
  {\LBI} structural rules. }   Moving on, we 
  derive a purely syntactic decidability result for a subset of 
  {\BI} without the multiplicative unit and the multiplicative 
  implication, based on \LBIN. 
  Towards the goal, we incorporate a well-known implicit 
  contraction elimination technique by Dyckhoff and show 
  a successful extension beyond {\IL} logical connectives. 
  }
\end{abstract}
\section{Introduction}      
Propositional {\BI} \cite{OHearnP99} is a conservative extension 
of propositional intuitionistic logic {\IL}
and propositional multiplicative fragment of 
intuitionistic linear logic {\MILL} 
(\emph{Cf.} \cite{DBLP:journals/tcs/Girard87} for linear logic). 
It is conservative 
in the sense that all the theorems of {\IL} and {\MILL} 
are a theorem of \BI. But the extension is not the least 
conservative. That is, there are expressions of {\BI} 
that are not expressible in {\IL} or {\MILL} \cite{OHearnP99}. 
They shape logical characteristics unique to \BI, which 
must be studied. Structural interactions in sequent calculus 
(interactions between logical rules and structural 
rules)
is one of them for which the details matter. Earlier works
\cite{journals/mscs/GalmicheMJP05,Harland03,DonnellyGKMP04,Brotherston10-4,OHearn03} on {\BI} appear to 
suggest that the study is non-trivial, however. 
In this work  
we solve an open problem of absorption of 
structural rules,  which is of 
theoretical interest having a foundational implication
to automated reasoning. 
Techniques 
considered here should be of interest to proof-theoretical studies of 
other non-classical logics.  
\subsection{Logic \BI} 
{\BI} has a proof-theoretical origin.  
A proof system was defined \cite{OHearnP99}, 
followed by semantics
\cite{Pym02,journals/mscs/GalmicheMJP05}. 
To speak of the language of {\BI} first, if 
we denote propositional variables by {\small $\mathcal{P}$},
signatures of {\IL} by {\small $\{\top_0, \abot_0, \wedge_2, 
        \vee_2, \supset_2\}$} and 
those of {\MILL} by {\small $\{\mtop_0, *_2, \text{\wand}_2\}$}\footnote{The sub-scripts denote the arity.}
where {\small $\mtop$} is the multiplicative top element 
{\small $\textbf{1}$}, {\small $*$} is linear `times' 
{\small $\otimes$} and 
{\small $\text{\wand}$} is linear implication 
{\small $\lollipop$} \cite{DBLP:journals/tcs/Girard87}, 
then it comprises all the expressions that are constructable 
from {\small $(\mathcal{P}, 
    \{\top_0, \abot_0, \mtop_0, \wedge_2, \vee_2, \supset_2, 
        *_2, \text{\wand}_2\})$}. Let us suppose 
two arbitrary expressions (formulas) {\small $F$} and {\small $G$} 
in the language. Then like in \IL, we can construct 
{\small $F \wedge G, F \vee G, F \supset G$}; and, like 
in \MILL, we can construct {\small $F * G, F \text{\wand} G$}.  
The two types are actively distinguished in {\BI} proof systems by
two distinct structural connectives. The below examples are 
given in 
\cite{OHearnP99}. 
\begin{center} 
  {\small
  \AxiomC{$\Gamma; F \vdash G$} 
  \RightLabel{$\supset R$} 
  \UnaryInfC{$\Gamma \vdash F {\supset} G$} 
  \DisplayProof 
  \indent 
  \AxiomC{$\Gamma, F \vdash G$} 
  \RightLabel{$\text{\wand} R$} 
  \UnaryInfC{$\Gamma \vdash F {\text{\wand}} G$}
  \DisplayProof 
  }
\end{center}    
{\small $\Gamma$} denotes a structure.\footnote{Those proof-theoretical
     terms 
    are assumed familiar. They are found for example 
    in \cite{351148}. But formal definitions that we will need
    for technical discussions 
    will be found in the next section.}
 Note the use of 
two structural connectives ``;'' and ``,'' for 
a structural distinction. 
If there were only ``,'', 
both {\small $\supset R$} and {\small ${\text{\wand}} R$} 
could apply on {\small $\Gamma, F \vdash G$}. The contextual differentiation
is a simple way to isolate the two implications. Following 
the convention of linear logic, the {\IL} structures that 
``;'' form are termed additive; and the {\MILL} 
structures multiplicative, similarly. One axiom: 
{\small $F = F \wedge \top = F * \mtop$}, connects 
the two types. But ``;'' and ``,'' do not distribute 
over one another. So in general 
a {\BI} structure    is a nesting of additive structures 
    {\small $\Gamma_1; \Gamma_2$} and multiplicative 
    structures {\small $\Gamma_1, \Gamma_2$}.  
    In the first {\BI} sequent calculus {\LBI} \cite{Pym02}, 
    we have the following structural rules as expected: 
\begin{center} 
  {\small
  \AxiomC{$\Gamma(\Gamma_1; \Gamma_1) \vdash F$} 
  \RightLabel{Contraction} 
  \UnaryInfC{$\Gamma(\Gamma_1) \vdash F$} 
  \DisplayProof 
  \indent 
  \AxiomC{$\Gamma(\Gamma_1) \vdash F$} 
  \RightLabel{Weakening} 
  \UnaryInfC{$\Gamma(\Gamma_1; \Gamma_1) \vdash F$} 
  \DisplayProof
  }
\end{center}  
where {\small $\Gamma(...)$} abstracts any other structures 
surrounding the focused ones in the sequents. We will formally define 
the notation later. 
\subsection{Research problems and contributions}  
The formulation of {\BI} is intuitive, as we just saw. 
But that {\BI} is not the least conservative extension 
of {\IL} and {\MILL} means that {\IL} and {\MILL} interact 
in parts of \BI. Structurally we have an interesting phenomenon. 
When we consider instances of the contraction rule 
as were stated earlier, we find that there are several of them, 
including ones below. 
\begin{center}
  {\small
  \AxiomC{$\Gamma((F; F), G) \vdash H$} 
  \RightLabel{$Ctr_1$} 
  \UnaryInfC{$\Gamma(F, G) \vdash H$} 
  \DisplayProof 
  \indent 
  \AxiomC{$\Gamma(F, (G;G)) \vdash H$} 
  \RightLabel{$Ctr_2$} 
  \UnaryInfC{$\Gamma(F, G) \vdash H$} 
  \DisplayProof 
  \AxiomC{$\Gamma((F, G); (F, G)) \vdash H$}  
  \RightLabel{$Ctr_3$}
  \UnaryInfC{$\Gamma(F, G) \vdash H$} 
  \DisplayProof 
  }
\end{center}        
The first two are simply {\GOI} \cite{351148} contractions. 
The last is not, since what is duplicating bottom-up  
is a structure. And it poses some proof-theoretical problem: 
if it is not admissible\footnote{An inference rule in  sequent calculus
is admissible when any sequent which is derivable in the calculus 
is derivable without the particular rule.}
in \LBI, we cannot impose any general restriction 
on the size of what may duplicate bottom-up, and contraction analysis
becomes non-trivial. As we are to state in due course, 
indeed structural contraction is not admissible in \LBI. For a successful 
contraction absorption, we need to identify what 
in {\LBI} require the general contraction. \\
\indent Two issues
stand in the way of a successful {\LBI} contraction analysis, however.
The first is the structural equivalences
{\small $\Gamma, \O_m = \Gamma = \Gamma; \O_a$} (where {\small $\O_a$} 
denotes the additive nullary structural connective corresponding 
to {\small $\top$} and {\small $\O_m$} 
the multiplicative 
nullary structural connective corresponding to 
{\small $\mtop$}) which are by nature bidirectional: 
{\small 
\begin{center}
  \AxiomC{$\Gamma \vdash F$}  
  \UnaryInfC{$\Gamma; \O_a \vdash F$} 
  \DisplayProof 
  {\ }
  \AxiomC{$\Gamma; \O_a \vdash F$} 
  \UnaryInfC{$\Gamma \vdash F$} 
  \DisplayProof 
  {\ } 
  \AxiomC{$\Gamma \vdash F$} 
  \UnaryInfC{$\Gamma, \O_m \vdash F$} 
  \DisplayProof
  {\ }
  \AxiomC{$\Gamma, \O_m \vdash F$} 
  \UnaryInfC{$\Gamma \vdash F$} 
  \DisplayProof
\end{center} 
} 
\noindent Apart from being an obvious source of non-termination, 
it obscures
the core mechanism of 
structural interactions by seemingly implying
a free transformation
of an additive structure into a multiplicative one and vice versa.  
The second is the difficulty of isolating the effect of contraction 
from that of weakening, as a work by Donnelly \emph{et al} 
\cite{DonnellyGKMP04}
experienced (where
contraction is absorbed into weakening as well as into logical 
rules). It is also not so 
straightforward to
know whether, first of all, either weakening or contraction is immune 
to the effect of the structural equivalences. 
As the result of the technical complications, contraction-free {\BI} sequent calculi, be the 
contraction-freeness
in the sense of {\GTI} or of {\GFI} \cite{351148,Dyckhoff92},  
have remained in obscurity. \\
\indent The current status of the knowledge 
of structural interactions within {\BI} proof systems is   
not very satisfactory. 
From the perspective of 
theorem proving for example, the presence of the bidirectional 
rules and contraction as explicit structural rules
in {\LBI} means that it is difficult to actually prove that
an invalid {\BI} proposition is underivable 
within the calculus. This is because {\LBI} by itself does not provide termination 
conditions apart when a (backward) derivation actually terminates:   
the only case in which no more backward derivation on 
a {\LBI} sequent is possible is 
when the sequent is empty; the only case in which 
it is empty is when it is the premise of an axiom.\\
\indent We solve the open problem of contraction absorption, 
but even better, of absorbing all the structural rules. We also 
eliminate nullary structural connectives. 
The objective of this work is to solve 
the mentioned long unsolved open problem in proof theory. 
We do not even require an explicit semantics 
introduction. Therefore technical dependency on earlier 
works is pretty small. Only the knowledge 
of {\LBI} \cite{Pym02} is required. 
\subsection{Structure of the remaining sections} 
In Section 2 we present technical preliminaries 
of {\BI} proof theory. 
In Section 3 we introduce our {\BI} calculus {\LBIN} with 
no structural rules. In Section 4 we 
show its main properties including  admissibility  
of structural rules and equivalence to \LBI. We also 
show {\Cut} admissibility in [{\LBIN} + \Cut]. 
Section 5 concludes. 
\section{BI Proof Theory - Preliminaries}   
We assume the availability of the following meta-logical notations. ``If and only if'' 
is abbreviated by ``iff''. 
\begin{definition}[Meta-connectives]
  We denote logical conjunction (``and'') by {\small $\wedge^{\dagger}$}, 
  logical disjunction (``or'') by {\small $\vee^{\dagger}$}, 
  material implication (``implies'') by {\small $\rightarrow^{\dagger}$}, and 
  equivalence by {\small $\leftrightarrow^{\dagger}$}. 
  These follow the semantics of standard classical logic's.  
\end{definition}
We denote propositional variables by {\small $\mathcal{P}$} 
and refer to an element of {\small $\mathcal{P}$} by 
{\small $p$} or {\small $q$} with or without a sub-script.  
\begin{figure*}[!t]
  \renewcommand{\arraystretch}{1.3}
  \centering
  \scalebox{0.85}{ 
  \begin{tabular}{cccc}  
    \AxiomC{}
    \RightLabel{id}
    \UnaryInfC{$F \vdash F$}
    \DisplayProof
    &   
    \AxiomC{$\Gamma_1 \vdash G$}
    \AxiomC{$\Gamma(G) \vdash H$}
    \RightLabel{Cut}
    \BinaryInfC{$\Gamma(\Gamma_1) \vdash H$}
    \DisplayProof
    &
    \AxiomC{}
    \RightLabel{$\bot L$}
    \UnaryInfC{$\Gamma(\bot) \vdash H$}
    \DisplayProof 
    \\&&\\
    \AxiomC{}
    \RightLabel{$\top R$}
    \UnaryInfC{$\Gamma \vdash \top$}
    \DisplayProof
    &
    \AxiomC{} 
    \RightLabel{$\mtop R$}
    \UnaryInfC{$\mtop \vdash {\mtop}$}
    \DisplayProof
    &
    \AxiomC{$\Gamma(F; G) \vdash H$}
    \RightLabel{$\wedge L$}
    \UnaryInfC{$\Gamma (F \wedge G) \vdash H$}
    \DisplayProof
    \\&&\\
    \AxiomC{$\Gamma(F) \vdash H$}
    \AxiomC{$\Gamma(G) \vdash H$}
    \RightLabel{$\vee L$}
    \BinaryInfC{$\Gamma(F \vee G) \vdash H$}
    \DisplayProof
    & 
    \multicolumn{2}{c}{ 
    \AxiomC{$\Gamma_1 \vdash F$}
    \AxiomC{$\Gamma(\Gamma_1; G) \vdash H$}
    \RightLabel{$\supset L$}
    \BinaryInfC{$\Gamma(\Gamma_1; F \supset G) \vdash H$}
    \DisplayProof}
\\&&\\
    \AxiomC{$\Gamma(F, G) \vdash H$}
    \RightLabel{$* L$}
    \UnaryInfC{$\Gamma(F * G) \vdash H$}
    \DisplayProof
    &
    \AxiomC{$\Gamma_1 \vdash F$}
    \AxiomC{$\Gamma(G) \vdash H$}
    \RightLabel{$\text{\wand} L$}
    \BinaryInfC{$\Gamma(\Gamma_1, F \text{\wand} G) \vdash H$}
    \DisplayProof
    &  
    \AxiomC{$\Gamma \vdash F$}
    \AxiomC{$\Gamma \vdash G$}
    \RightLabel{$\wedge R$}
    \BinaryInfC{$\Gamma \vdash F \wedge G$}
    \DisplayProof
    \\&&\\ 
    \AxiomC{$\Gamma \vdash F_i$}
    \RightLabel{$\vee R$}
    \UnaryInfC{$\Gamma \vdash F_1 \vee F_2$}
    \DisplayProof
    &
    \AxiomC{$\Gamma; F \vdash G$}
    \RightLabel{$\supset R$}
    \UnaryInfC{$\Gamma \vdash F \supset G$}
    \DisplayProof
    &
    \AxiomC{$\Gamma_1 \vdash F$}
    \AxiomC{$\Gamma_2 \vdash G$}
    \RightLabel{$* R$}
    \BinaryInfC{$\Gamma_1, \Gamma_2 \vdash F * G$}
    \DisplayProof
    \\&&\\
    \AxiomC{$\Gamma, F \vdash G$}
    \RightLabel{$\text{\wand} R$}
    \UnaryInfC{$\Gamma \vdash F \text{\wand} G$}
    \DisplayProof
    & 
    \AxiomC{$\Gamma(\Gamma_1) \vdash H$}
    \RightLabel{Wk L}
    \UnaryInfC{$\Gamma(\Gamma_1; \Gamma_2) \vdash H$}
    \DisplayProof
    &
    \AxiomC{$\Gamma(\Gamma_1; \Gamma_1) \vdash H$}
    \RightLabel{Ctr L}
    \UnaryInfC{$\Gamma(\Gamma_1) \vdash H$}
    \DisplayProof 
    \\&&\\
    \AxiomC{$\Gamma(\Gamma_1; \top) \vdash H$} 
    \RightLabel{$EqAnt_1$} 
    \doubleLine\dottedLine 
    \UnaryInfC{$\Gamma(\Gamma_1) \vdash H$} 
    \DisplayProof 
    &
    \AxiomC{$\Gamma(\Gamma_1, \mtop) \vdash H$} 
    \RightLabel{$EqAnt_2$}  
    \doubleLine
    \dottedLine
    \UnaryInfC{$\Gamma(\Gamma_1) \vdash H$} 
    \DisplayProof
  \end{tabular}
  }
  \caption{\LBI: a {\BI } sequent calculus. 
  Inference rules with a double-dotted line are bidirectional. 
  {\small $i \in \{1,2\}$}. 
  Structural connectives are fully associative and commutative.}
  \label{LBI_calculus}
\end{figure*}
 
A {\BI } formula {\small $F(, G, H)$} with or without 
a sub-script is constructed from 
the following grammar:
 {\small $F := p \ | \ \top \ | \ \abot \ | \ \mtop \ | \ 
F \wedge F \ | \ F \vee F \ | \ F {\supset} F \ |$
$F * F \ | \ 
F {\text{\wand}} F$}. 
The set of {\BI} formulas is denoted by {\small $\mathfrak{F}$}. 
\hide{A {\BI} formula {\small $\alpha$} is defined by:\\
\indent {\small $\alpha := F \ | \ \emptyset \ | \ \O$}\\
where {\small ``$\emptyset$''} (resp. {\small ``$\O$''}) denotes a unary additive 
(resp. multiplicative) structural 
connective which acts as a proxy for {\small ``$\top$''} (resp. 
for {\small ``$\top^*$''}).} 
\begin{definition}[{\BI} structures]
{\BI } structure {\small $\Gamma (,  Re)$} with 
or without a sub-/super-script, 
commonly 
referred to as a bunch \cite{OHearnP99}, is 
defined by:
 {\small $\Gamma :=  F \ | \ \Gamma; \Gamma \ | \ 
     \Gamma, \Gamma$}.  We denote 
by {\small $\mathfrak{S}$} the set of 
{\BI} structures. 
\end{definition} 
For binding order,  
{\small [$
\wedge, \vee, *] \gg [\supset, \text{\wand}] \gg
[; \: ,] \gg [\forall\quad  \exists] \gg [\neg^{\dagger}] \gg [\wedge^{\dagger}, \vee^{\dagger}] \gg [ 
\rightarrow^{\dagger}, \leftrightarrow^{\dagger}]$} in a decreasing precedence. Connectives in the same group have the same 
precedence. \\
\indent Both of the structural connectives ``;'' and ``,'' are defined to be fully associative and 
commutative, and we assume as such everywhere 
we talk about {\BI} structures. On the other hand, 
we do not assume distributivity of ``;'' over `,' or 
vice versa. 
A context ``{\small $\Gamma(-)$}'' (with a hole ``{\small $-$}") takes the form of 
a tree because of the nesting of
additive/multiplicative structures.  
\begin{definition}[Context]
  A context {\small $\Gamma(-)$} is finitely constructed 
  from the following grammar: \\
    \indent {\small $\Gamma(-) := - \ | \ -; \Gamma \ | \ \Gamma; - \ | \
    -, \Gamma \ | \ \Gamma, - \ | \ \Gamma(-); \Gamma \ | \  
    \Gamma; \Gamma(-) \ | \ 
      \Gamma(-), \Gamma \ | \ \Gamma, \Gamma(-)$}. \\  
 Given any context {\small $\Gamma_1(-)$} and 
 any {\small $\Gamma_2 \in \mathfrak{S}$}, we assume that {\small $\Gamma_1(\Gamma_2)$} 
 is some {\BI} structure {\small $\Gamma_3$} 
 such that {\small $\Gamma_3 = \Gamma_1(\Gamma_2)$}. 
\end{definition}    
\begin{definition}[Sequents]   
  The set of {\BI} sequents {\small $\mathfrak{D}$} 
  is defined by: \\
  \indent {\small $\mathfrak{D} := 
  \{\Gamma \vdash F \ | \ \Gamma \in \mathfrak{S} \wedge^{\dagger} 
  F \in \mathfrak{F}\}$}.\\ 
  The left hand side of {\small $\vdash$} is termed 
  antecedent, and the right hand side of {\small $\vdash$} 
  consequent. 
\end{definition}
A variant of the first {\BI} sequent calculus {\LBI} is found in 
Figure \ref{LBI_calculus}. Notice how already 
we do not consier the nullary structural connectives. 
All the additive inference rules 
share contexts, \emph{e.g.} in {\small $\vee L$} the same 
context in the conclusion propagates onto both premises. 
Multiplicative inference rules are 
context-free \cite{351148} or resource sensitive. A good example to 
illustrate this is {\small $* R$}: both 
{\small $\Gamma_1$} and {\small $\Gamma_2$} in  
the conclusion sequent are viewed 
as resources for the inference rule, and are split into the 
premises of the rule. Note again our assumption of the full commutativity 
of ``,'' here. 
{\Cut} is admissible in \LBI.
\begin{lemma}[Cut admissibility in LBI]
  There is a direct cut elimination procedure which 
  proves admissibility of {\Cut} in {\LBI} (sketched in \cite{Pym02}; corrected in \cite{Arisaka2012-4}). 
\end{lemma}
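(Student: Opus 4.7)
The plan is to carry out a Gentzen-style cut elimination argument adapted to the bunched setting of \LBI. The primary induction will be on the complexity of the cut formula $G$, and the secondary induction on the sum of the heights of the two derivations $\mathcal{D}_1$ of $\Gamma_1 \vdash G$ and $\mathcal{D}_2$ of $\Gamma(G) \vdash H$ above the cut. To avoid the well-known blow-up when permuting cut past $\mathit{Ctr\ L}$, I would actually prove admissibility of a \emph{multi-cut} (mix) rule that eliminates several occurrences of the cut formula at once; this yields ordinary {\Cut} admissibility as a corollary.

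The bulk of the work is a case analysis on the last rules applied in the two sub-derivations. Axiomatic cases ($id$, $\abot L$, $\top R$, $\mtop R$) are disposed of immediately by returning the other premise (possibly weakened). In principal cases, where both final rules introduce $G$, one performs the standard key reduction to cuts on the proper subformulas of $G$: e.g.\ for $G = F_1 \supset F_2$ with $\supset R$ on the left and $\supset L$ on the right, one rearranges the derivation into two cuts on $F_1$ and $F_2$, using associativity and commutativity of ``;'' to splice the resulting bunches together. The multiplicative principal cases ($* R$/$* L$ and $\text{\wand} R$/$\text{\wand} L$) reduce analogously, but require care to match the resource-splitting in $* R$ against the position of $G$ inside the context $\Gamma(-)$, since cut cannot cross an additive/multiplicative boundary freely.

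In the commutative cases the cut is permuted upward past a non-principal last rule, with the inductive hypothesis on derivation height handling the recursion. These are routine for logical rules and for $\mathit{Wk\ L}$. Past $\mathit{Ctr\ L}$, the multi-cut formulation is essential: both duplicated occurrences of the cut formula are discharged in a single multi-cut step, so no induction measure is violated. The bidirectional equivalences $EqAnt_1$ and $EqAnt_2$, together with associativity/commutativity of the structural connectives, are treated as structural congruences that do not increase the cut measure and can simply be commuted past.

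The hard part, and the one that required correction in the literature, is the interaction of cut with the tree-like bunch contexts in the presence of both contraction and multiplicative rules: when the cut formula sits inside a sub-bunch that is duplicated by $\mathit{Ctr\ L}$ and threaded through an occurrence of $* R$ or $\text{\wand} L$, naive propagation can produce premises that are not smaller in the chosen measure. I would address this by tuning the induction measure to include, in addition to cut rank, a structural weight along the context path descending to the cut formula, and by discharging the contraction/multi-cut interaction uniformly via the multi-cut rule rather than via successive ordinary cuts.
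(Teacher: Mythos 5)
The paper offers no proof of this lemma at all --- it is an imported result, stated with citations to Pym's sketch and to the later corrected proof. Your sketch (double induction on cut rank and derivation height, a multi-cut/mix rule to absorb the bunch-level $\mathit{Ctr\ L}$, routine permutation past $EqAnt_{1,2}$, and special care where structural contraction meets the resource-splitting of $* R$ and $\text{\wand} L$) is essentially the strategy of that cited corrected proof, so it matches the argument the paper is relying on.
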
 
\hide{
\subsection{On representation of {\LBI} structures} 
A fine distinction between 
additive/multiplicative structures is
often encapsulated as a detail in coherent equivalence \cite{Pym02}. 
However, an arbitrary choice of a representation of {\BI} structures  
has a considerable downside of masking the semantically natural 
viewpoint about them, which is to view structures
as nestings of not individual formulas 
but of additive and multiplicative 
{\it structural layers}. \hide{care only 
about the focused principal formulas (\emph{Cf.} 
Figure \ref{LBI_calculus}). 
The principal formula of an inference 
rule is a formula in 
the conclusion sequent which is active for the rule. For example, in 
{\small $* L$}: 
{\small 
\begin{center}
   \AxiomC{$\Gamma(F, G) \vdash H$} 
   \RightLabel{$* L$} 
   \UnaryInfC{$\Gamma(F * G) \vdash H$} 
   \DisplayProof
\end{center}  
}
\noindent , it is the {\small ``$F * G$''} focused.  
From the mere aspect of formulation of {\LBI} inference 
rules, it indeed matters not,  in most of the rules, 
how a {\LBI} structure is represented.
Such a statement as  
``one particular representaion of structures is just as 
significant as all the others'' may appear to be justifiable.
One exception, however, 
is {\small $\rightarrow L$}. Suppose a simple {\LBI} sequent  
{\small $(\Gamma_1; \Gamma_2; F {\rightarrow} G), \Gamma_3 \vdash H$}. 
Which of the following should be ``correct'' when, 
in a bakckward proof search, $\rightarrow L$ applies?  
{\small 
\begin{center}
   \AxiomC{$\Gamma_1; \Gamma_2 \vdash F$} 
   \AxiomC{$(\Gamma_1; \Gamma_2; G), \Gamma_3 \vdash H$} 
   \RightLabel{$\rightarrow L_1$} 
   \BinaryInfC{$(\Gamma_1; \Gamma_2; F {\rightarrow} G), 
   \Gamma_3 \vdash H$} 
   \DisplayProof 
   {\ }\\ {\ }\\{\ }\\
   \AxiomC{$\Gamma_1 \vdash F$} 
   \AxiomC{$(\Gamma_1; \Gamma_2; G), \Gamma_3 \vdash H$} 
   \RightLabel{$\rightarrow L_2$} 
   \BinaryInfC{$(\Gamma_1; \Gamma_2; F {\rightarrow} G), \Gamma_3 
   \vdash H$} 
   \DisplayProof 
   {\ }\\{\ }\\{\ }\\ 
   \AxiomC{$\O_a \vdash F$} 
   \AxiomC{$(\Gamma_1; \Gamma_2; G), \Gamma_3 \vdash H$} 
   \RightLabel{$\rightarrow L_3$} 
   \BinaryInfC{$(\Gamma_1; \Gamma_2; F {\rightarrow} G), \Gamma_3 \vdash
   H$} 
   \DisplayProof
\end{center}
} 
\noindent Though the third is in general unsuitable, 
earlier studies do not necessarily  agree on either 
of the first two. Since both contraction and weakening in {\LBI} are 
an explicit structural rule, if {\small $\rightarrow L_1$} 
is derivable, then so is {\small $\rightarrow L_2$} and vice versa. 
However, it is not as much on derivability of one choice from another
that we intended in the posed question as on 
the degree of conformity of syntax to semantics.\\
\indent 
}
\begin{definition}[A {\BI} structure in
  nested structural layers]
  An antecedent structure {\small $\Gamma$} in 
  nested structural layers is defined as follows:     
  {\small  
  \begin{eqnarray}\nonumber
\Gamma &:=& F \ | \ \mathcal{M} \ | \ \mathcal{A}\\\nonumber
    \mathcal{M} &:=& F, \mathcal{M}' \ | \ 
    \mathcal{A}, \mathcal{M}'\\\nonumber
    \mathcal{M}' &:=& F \ | \ \mathcal{A} \ | 
    \ F, \mathcal{M}' \ | \ \mathcal{A}, \mathcal{M}'\\\nonumber
    \mathcal{A} &:=& F; \mathcal{A}' \ | \ 
    \mathcal{M}; \mathcal{A}'\\\nonumber
    \mathcal{A}' &:=& F \ | \ \mathcal{M} \ | 
    \ F; \mathcal{A}' \ | \ \mathcal{M}; \mathcal{A}'
  \end{eqnarray}  
  }
  Each of the {\small $\mathcal{A}$} (resp. 
  {\small $\mathcal{M}$}) substructures of {\small $\Gamma$}
  is termed an additive (resp. multiplicative) structural 
  layer.
\end{definition}    
By this definition we can recognise 
the boundary 
between additives and multiplicatives (in which incidentally lies
the distinct logical characteristics of \BI).  
An example of a {\BI} structure in nested structural layers
is found in Figure \ref{figure2}.  
\begin{figure}[!h]
  \centering
  \scalebox{0.85}{
\begin{tikzpicture}
  \node (root) at (4,0) {$\mathcal{M}_0$};
  \node (1l1) at (2,-1) {$F_1$};
  \node (1l2) at (4,-1) {$F_6$};
  \node (1l3) at (6,-1) {$\mathcal{A}$};
  \node (2l1) at (4, -2) {$F_2$};
  \node (2l2) at (6, -2) {$F_5$};
  \node (2l3) at (8, -2) {$\mathcal{M}_1$};
  \node (3l1) at (6, -3) {$F_3$};
  \node (3l2) at (10, -3) {$F_4$};
  \draw (root) -- (1l1);
  \draw (root) -- (1l2);
  \draw (root) -- (1l3);
  \draw (1l3) -- (2l1);
  \draw (1l3) -- (2l2);
  \draw (1l3) -- (2l3);
  \draw (2l3) -- (3l1);
  \draw (2l3) -- (3l2);
\end{tikzpicture}
}
\caption{{\small $F_1, ((F_3, F_4); F_2; F_5), F_6$} as represented 
in nested structural layers.}
\label{figure2}
\end{figure}

There are two multiplicative structural layers: {\small ``$F_1, F_6, \mathcal{A}$''} and {\small ``$F_3, F_4$''}; and one additive structural layer
{\small ``$F_2; F_5; \mathcal{M}_1$''}, with 
{\small $\mathcal{A}$} denoting {\small ``$F_2; F_5; \mathcal{M}_1$''} and 
{\small $\mathcal{M}_1$} denoting {\small ``$F_3, F_4$''}. 
For any structure in which 
two structural layers nest, the structural layer holding the other 
structural layer within is described as the outer structural layer 
of the two, while that enclosed in the other is described as the inner 
structural layer. 
}
\section{\LBIN: A Structural-Rule-Free BI Sequent Calculus}     
In this section we present a new {\BI} sequent calculus  
{\LBIN} (Figure \ref{LBIN_calculus}) 
in which 
no structural rules appear. 
We first introduce 
notations that are necessary to read inference rules 
in the calculus. First, from now on, whenever we write 
{\small $\widetilde{\Gamma}$} for any {\BI} structure, 
we indicate that it may be empty. The emptiness is in the following sense: 
{\small $\widetilde{\Gamma_1}; \Gamma_2 = \Gamma_2$} if 
{\small $\Gamma_1$} is empty; 
and {\small $\widetilde{\Gamma_1}, \Gamma_2 = \Gamma_2$}
if {\small $\Gamma_1$} is empty. Apart from this, 
we use two other notations.    
\subsection{Essence of antecedent structures} 
Co-existence of {\IL} and {\MILL} in {\BI} calls for new 
contraction-absorption techniques. 
Possible interferences to one structural rule from 
the others need considered. 
To illustrate the technical difficulty, 
{\small $EqAnt_{2\: \LBI}$} for instance interacts directly with 
{\small $Wk L_{\LBI}$}. When {\small $Wk L_{\LBI}$} is absorbed 
into the rest, the effect propagates to one direction of \linebreak
{\small $EqAnt_{2\:\LBI}$}, resulting in; 
{\small 
\begin{center}
  \AxiomC{$\Gamma(\Gamma_1) \vdash H$} 
  \RightLabel{$EA_2$} 
  \UnaryInfC{$\Gamma(\Gamma_1, ({\mtop}; \widetilde{\Gamma_2})) \vdash H$}  
  \DisplayProof
\end{center}
} 
\noindent Hence absorption of {\small $Wk L_{\LBI}$} must 
involve analysis of {\small $EqAnt_{2\:\LBI}$} as well. 
To solve this particular problem we define 
a new notation of `essence' of {\BI} structures. 
\begin{definition}[Essence of {\BI} structures]   
    Let {\small $\Gamma_1$} be a {\BI} structure.  
    Then we have a set of its essences as defined 
    in the following 
    inductive rules. 
    \begin{multicols}{2} 
    \begin{itemize} 
        \item {\small $\Gamma_2$} is an essence of 
            {\small $\Gamma_1$} if {\small $\Gamma_1 = \Gamma_2$}.\footnote{For some {\small $\Gamma_2$}. The equality is 
                of course up to associativity and commutativity.}  
        \item {\small $\Gamma(\Gamma', (\mtop; \widetilde{\Gamma_2}))$}\footnote{For some {\small $\widetilde{\Gamma_2}$}; 
                similarly in the rest.} 
            is an essence of {\small $\Gamma_1$} if 
            {\small $\Gamma(\Gamma')$} is 
            an essence of {\small $\Gamma_1$}. 
            \item {\small $\Gamma((\Gamma', ({\mtop}; \widetilde{\Gamma_2})); \Gamma'')$} is 
                an essence of {\small $\Gamma_1$} 
                if {\small $\Gamma(\Gamma'; \Gamma'')$} 
                is an essence of {\small $\Gamma_1$}.  
            \end{itemize}
\end{multicols} 
By {\small $\mathbb{E}(\Gamma_1)$} we denote 
an essence of {\small $\Gamma_1$}. 
\end{definition}  
\begin{figure*}[!t]
  \begin{center} 
    \scalebox{1}{ 
    \begin{tabular}{cccc} 
      \IDTH & \Lbot & \hspace{-1cm} \Rtop &
      \AxiomC{}
      \RightLabel{$\mtop R$} 
      \UnaryInfC{$\mathbb{E}(\widetilde{\Gamma}; \mtop) \vdash {\mtop}$} 
      \DisplayProof
      \\&&&\\
     \multicolumn{2}{c}{\Lwedge} & \multicolumn{2}{c}{\Rwedge}\\&&&\\
      \multicolumn{2}{c}{\Lvee} &
      \multicolumn{2}{c}{\Rvee}
      \\&&&\\ \multicolumn{2}{c}{\LrightarrowTH}&
      \multicolumn{2}{c}{\RrightarrowTH}\\&&&\\ 
      \multicolumn{2}{c}{\Lstar} & \multicolumn{2}{c}{
	   \AxiomC{$Re_i \vdash F_1$} 
	   \AxiomC{$Re_j \vdash F_2$} 
	   \RightLabel{$* R$} 
	   \BinaryInfC{$\Gamma' \vdash F_1 * F_2$} 
	   \DisplayProof}\\&&&\\
\multicolumn{3}{c}{
\AxiomC{$Re_i \vdash F$} 
\AxiomC{$\Gamma((\widetilde{Re_j}, G); (\widetilde{\Gamma'}, 
    \mathbb{E}(\widetilde{\Gamma_1}; F {\text{\wand}} G))) \vdash H$}
\RightLabel{$\text{\wand} L$} 
\BinaryInfC{$\Gamma(\widetilde{\Gamma'}, 
    \mathbb{E}(\widetilde{\Gamma_1}; F {\text{\wand}} G)) \vdash H$} 
\DisplayProof 
}
& \Rstararrow
\hide{
      \multicolumn{3}{c}{\LstararrowTHO}\\&&\\
      \multicolumn{3}{c}{\LstararrowTHT}\\&&\\
     \multicolumn{3}{c}{\LstararrowTHTH}\\&&\\
     \multicolumn{3}{c}{\LstararrowTHF} 
     }
     \hide{
     \multicolumn{3}{c}{ 
\AxiomC{$\Gamma(\Gamma_1) \vdash H$} 
      \RightLabel{$EA_2$} 
      \UnaryInfC{$\Gamma(\Gamma_1, (\O_m; \Gamma_2)) \vdash H$} 
      \DisplayProof } 
      } 
    \end{tabular}
     }
  \end{center}
  \caption{\LBIN: a {\BI } sequent calculus with zero occurrence of
      explicit structural rules. {\small $i, j\in \{1,2\}$}. 
      {\small $i \not=j$}. Structural connectives are 
      fully associative and commutative. In 
      {\small $* R$} and {\small ${\text{\wand}} L$}, 
      if {\small $\Gamma'$} is not empty, 
      {\small $(Re_1, Re_2) \in \code{Candidate}(\Gamma')$}; 
      otherwise, {\small $Re_i = {\mtop}$} and 
      {\small $Re_j$} is empty. Both 
      {\small $\mathbb{E}$} and \code{Candidate} are as defined 
      in the main text.}
  \label{LBIN_calculus}
\end{figure*}

The essence takes care of an arbitrary number of {\small $EA_2$} applications, while nicely retaining  
a compact representation of a sequent (see the calculus).  
In 
each of {\small $\supset L$} and {\small ${\text{\wand}} 
L$}, the essence in the premise(s) 
and that in the conclusion are the same and identical {\BI} structure.  
Specifically, in a derivation tree, the use of 
{\small $\mathbb{E}(\Gamma)$} in multiple 
sequents in the derivation tree signifies the same {\BI} structure. 
\begin{example}
   Given a \LBIN-derivation:  
   {\small 
   \begin{center}    
     \AxiomC{}
     \RightLabel{$id$}
     \UnaryInfC{$F_1; ((\mtop; \Gamma_1), F_1 {\supset} F_2) \vdash F_1$}    
     \AxiomC{}
     \RightLabel{$id$}
     \UnaryInfC{$F_2; F_1; ((\mtop; \Gamma_1), F_1 {\supset} F_2)
     \vdash F_2$}
     \RightLabel{$\supset L$}
      \BinaryInfC{$F_1; ((\mtop; \Gamma_1), 
      F_1 {\supset} F_2) \vdash F_2$}
      \DisplayProof
   \end{center}
   }  
   it can be alternatively written down by; 
   \begin{center}
     \scalebox{0.9}{  
     \AxiomC{}
     \RightLabel{$id$} 
     \UnaryInfC{$\mathbb{E}(F_1; F_1 \supset F_2) \vdash F_1$} 
     \AxiomC{}
     \RightLabel{$id$} 
     \UnaryInfC{$F_2; \mathbb{E}(F_1; F_1 \supset F_2) \vdash 
     F_2$} 
     \RightLabel{$\supset L$}  
     \BinaryInfC{$\mathbb{E}(F_1; F_1 \supset F_2) \vdash F_2$} 
     \DisplayProof
     }
   \end{center}  
   \indent\indent where {\small $\mathbb{E}(F_1; F_1 \supset F_2) = 
   F_1; ((\mtop; \Gamma_1), F_1 \supset F_2)$}. 
\end{example}  
{\small $\mathbb{E}'(\Gamma)$} (or 
{\small $\mathbb{E}_1(\Gamma)$} or any essence 
that differs from {\small $\mathbb{E}$} by the presence 
of a sub-script, a super-script or both)
in the same 
derivation tree does not have to be coincident 
with the {\BI} structure that the {\small $\mathbb{E}(\Gamma)$} 
denotes. However, we do - for prevention of inundation of many 
super-scripts and sub-scripts - make an exception. In 
the cases where no ambiguity is likely to arise such as 
in the following; 
\begin{center}
  \scalebox{0.9}{ 
  \AxiomC{$\Gamma(\mathbb{E}(\Gamma_1; F; G)) \vdash H$} 
  \RightLabel{$\wedge L$} 
  \UnaryInfC{$\Gamma(\mathbb{E}(\Gamma_1; F \wedge G)) \vdash 
  H$} 
  \DisplayProof
  } 
\end{center}
we assume that the essence in the conclusion is 
the same antecedent structure as the essence in 
the premise(s) except what the inference rule 
modifies.

\subsection{Correspondence between {\small $Re_i$/$Re_j$} and {\small
$\Gamma'$}}  
\begin{definition}[Relation {\small $\preceq$}]
  We define a reflexive and transitive binary 
  relation {\small $\preceq: 
      \mathfrak{S} \times \mathfrak{S}$} as follows.  
  \begin{multicols}{2}
  \begin{itemize} 
      \item {\small $\Gamma_1 \preceq \Gamma_2$} 
          if {\small $\Gamma_1 = \Gamma_2$}. 
      \item {\small $\Gamma(\Gamma_1) \preceq \Gamma(\Gamma_1; \Gamma')$}. 
      \item {\small $[\Gamma_1 \preceq \Gamma_2] \wedge^{\dagger} 
              [\Gamma_2 \preceq \Gamma_3] 
              \rightarrow^{\dagger} [\Gamma_1 \preceq \Gamma_3]$}.  
  \end{itemize} 
  \end{multicols}   
\end{definition}  
Intuitively if {\small $\Gamma_1 \preceq \Gamma_2$}, then there 
exists a \LBI-derivation:  
\begin{center} 
    \scalebox{0.9}{ 
        \AxiomC{$\Gamma(\Gamma_1) \vdash H$} 
        \RightLabel{$Wk L$}  
        \doubleLine 
        \UnaryInfC{$\Gamma(\Gamma_2) \vdash H$} 
        \DisplayProof
    }
\end{center}
for any {\small $\Gamma(-)$} and any {\small $H$}. 
Here and elsewhere a double line indicates zero or more derivation steps. 
\begin{definition}[Candidates] 
    Let {\small $\Gamma$}  be a {\BI} structure, 
    then any of the following pairs is 
    a candidate of {\small $\Gamma$}. 
    \begin{multicols}{2} 
    \begin{itemize} 
        \item {\small $(\Gamma_x, {\mtop})$}
            if {\small $\Gamma_x \preceq \Gamma$}.  
        \item {\small $(\Gamma_x, \Gamma_y)$} 
            if {\small $\Gamma_x, \Gamma_y \preceq \Gamma$}. 
    \end{itemize}
\end{multicols}   
We denote the set of candidates of {\small $\Gamma$} 
by {\small $\code{Candidate}(\Gamma)$}. 
\end{definition}   
Now we see the connection between {\small $Re_i/Re_j$} and 
{\small $\Gamma'$} in the two rules {\small $* R/{\text{\wand}} L$}.
\begin{definition}[$Re_i/Re_j$ in $* R/{\text{\wand}} L$]   
    In {\small $* R$} and {\small ${\text{\wand}} L$}, 
    if {\small $\Gamma'$} is empty,\footnote{This case 
        applies to {\small ${\text{\wand}} L$} only.} 
    {\small $Re_i = {\mtop}$} and 
    {\small $Re_j$} is empty. If it is not empty, 
    then 
    {\small $(Re_1, Re_2) \in \code{Candidate}(\Gamma')$}. 
\end{definition}    
Let us reflect on the purposes of the two notations that 
we have introduced. An essence absorbs a finite number of 
EA$_2$ derivation steps. \code{Candidate} absorbs 
a finite number of $Wk$ derivation steps. Then 
what the inference rules in {\LBIN} are doing should be 
clear. There are no structural rules. Implicit contraction 
occurs only in {\small $\supset L$} and 
{\small ${\text{\wand}} L$}.\footnote{Implicit weakening 
and others occur also in other inference rules; but they 
are not very relevant in backward theorem proving.}
In both of the inference rules, 
a structure than a formula duplicates upwards. This is 
necessary, for we have the following observation. 
\begin{observation}[Structural contractions are not 
    admissible]{\ } \\ 
  There exist sequents 
  {\small $\Gamma \vdash F$} which are derivable 
  in {\LBI} - {\Cut} 
  but not derivable in {\LBI} - {\Cut} without structural contraction. 
  \label{underivable_without_StrCtr}
\end{observation}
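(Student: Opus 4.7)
The plan is to exhibit a concrete $\BI$ sequent $S$ and argue two things: first, that $S$ is derivable in $\LBI$ with structural contraction, and second, that $S$ admits no derivation in $\LBI$ once structural contraction is removed. A natural template is a sequent $\Gamma(\Gamma_1) \vdash F$ whose antecedent contains a genuine multi-formula substructure $\Gamma_1$ that must be additively duplicated during backward proof search. One would design $F$ so that, after a backward application of a splitting rule such as $*R$ or $\wand L$, each resulting branch requires access to the information in $\Gamma_1$ in a way no single copy supports.

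First I would display the $\LBI$ derivation of $S$ explicitly: one backward step of Ctr L yields the premise $\Gamma(\Gamma_1; \Gamma_1) \vdash F$, from which the remaining logical and multiplicative rules (together with well-placed Wk L applications to select which copy is active in each branch) close the derivation.

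Then I would establish non-derivability of $S$ in $\LBI$ without Ctr L by analysing backward proof search. The remaining rules are Wk L (which strictly shrinks additive structure upward and never manufactures duplicates), the logical rules (compositional, never producing additive duplicates of arbitrary structures), the $EqAnt$ rules (only acting on the units $\top$ and $\mtop$), and formula contraction as a special case of Ctr L restricted to a single formula. The witness duplication built into $\supset L$ and $\wand L$ is confined to the explicit $\Gamma_1'$ sibling of the principal formula and cannot reach the position of $\Gamma_1$ in $S$. A case analysis on the rule applied at the root of any alleged derivation would show that every attempt leaves at least one non-axiomatic leaf.

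The principal obstacle is the construction of $S$ itself: because Wk L is liberal and the implicit witness-contractions of $\supset L$ and $\wand L$ already do significant work, many naive candidates turn out to be derivable without Ctr L. The example must place the structure requiring duplication in a position none of these alternative mechanisms can access, and the backward-search case analysis must be exhaustive enough to rule out subtle combinations. Identifying a minimal such $S$ and verifying the analysis is the technical crux.
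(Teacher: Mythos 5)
Your high-level plan coincides with the paper's: both arguments consist of exhibiting a concrete witness sequent, deriving it in {\LBI} $-$ {\Cut} with structural contraction, and arguing by inspection of backward proof search that no contraction-free derivation exists. But your proposal stops exactly where the proof begins: you never produce the witness, and you yourself flag its construction as ``the technical crux.'' For an existence statement of this kind the witness \emph{is} the proof, so what you have is a strategy outline, not a proof.

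For the record, the paper's witnesses are {\small $\top\,{\text{\wand}}\,p_1,\ \top\,{\text{\wand}}\,(p_1 {\supset} p_2) \vdash p_2$} (exercising ${\text{\wand}} L$) and {\small $(\mtop; p_1), (\mtop; p_1 {\supset} p_2) \vdash p_2$} (exercising ${\supset} L$). The mechanism in the first is exactly the ``position no alternative mechanism can access'' you were looking for: the two hypotheses sit in a \emph{multiplicative} bunch, so any backward ${\text{\wand}} L$ application must consume one of them as the resource $\Gamma_1$ of its left premise, leaving the other branch without it --- e.g.\ the left premise {\small $\top\,{\text{\wand}}\,(p_1 {\supset} p_2) \vdash \top$} closes by $\top R$, but the right premise {\small $p_1 {\supset} p_2 \vdash p_2$} is then stuck for want of $p_1$. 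Formula-level contraction cannot repair this, because what must be duplicated is the entire two-formula multiplicative structure, placed additively beside itself; weakening only discards, and the $EqAnt$ rules only shuffle units. With $Ctr\,L$ one first forms {\small $(\top\,{\text{\wand}}\,p_1, \top\,{\text{\wand}}\,(p_1{\supset} p_2)); (\top\,{\text{\wand}}\,p_1, \top\,{\text{\wand}}\,(p_1{\supset} p_2))$} and consumes each copy differently, closing all branches. Your case analysis of the remaining rules is plausible in outline and is essentially what the paper does, but it has to be carried out against an actual sequent; without one, nothing has been established.
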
 
\begin{proof} 
  For {\small $\text{\wand} L$} use a sequent 
  {\small $\top {\text{\wand}}
  p_1, \top {\text{\wand}} (p_1 {\supset} p_2) 
  \vdash p_2$} and assume that 
  every propositional variable is distinct. Then without 
  contraction, there are several derivations.  
  Two sensible ones are shown below (the rest similar). Here 
  and elsewhere we may label a sequent by {\small $D$} 
  with or without a sub-/super-script just so that we 
  may refer to it by the name.\\ 
  \begin{enumerate}
    \item  
      \scalebox{0.9}{
	\AxiomC{}
	\RightLabel{$\top R$}
	\UnaryInfC{$\top {\text{\wand}} (p_1 {\supset} p_2) 
	\vdash \top$}
	\AxiomC{$p_1 \vdash p_2$}
	\RightLabel{$\text{\wand} L$}
	\BinaryInfC{$D: \top {\text{\wand}} p_1, 
	\top \text{\wand} (p_1 {\supset} p_2) \vdash p_2$}
	\DisplayProof}{\ }\\\\
    \item 
      \scalebox{0.9}{
	\AxiomC{}
	\RightLabel{$\top R$}
	\UnaryInfC{$\top {\text{\wand}} p_1 \vdash \top$} 
	\AxiomC{$\top \vdash p_1$}   
	\AxiomC{}      
	\RightLabel{$id$}   
	\UnaryInfC{$p_2 \vdash p_2$}
	\RightLabel{$\supset L$}    
	\BinaryInfC{$\top; p_1 {\supset} p_2 \vdash p_2$}
	\RightLabel{$EqAnt_1 L$}
	\UnaryInfC{$p_1 {\supset} p_2 \vdash p_2$} 
	\RightLabel{$\text{\wand} L$}
	\BinaryInfC{$D: \top {\text{\wand}} p_1, 
	\top \text{\wand} (p_1 {\supset} p_2)
	\vdash p_2$}
	\DisplayProof}{\ }\\
  \end{enumerate}  
  In both of the derivation trees above, one branch is open. Moreover,
  such holds true when only formula-level contraction is permitted in
  \LBI. The sequent 
  {\small $D$} cannot be derived under the given restriction. 
  In the presence of structural contraction, however, 
  another construction is possible:
  \begin{center}        
  \scalebox{0.9}{  
  \AxiomC{$\Pi(D_1)$}
   \AxiomC{$\Pi(D_2)$}
\RightLabel{$\text{\wand} L$}
    \BinaryInfC{$(\top \text{\wand} p_1,  
    \top \text{\wand} (p_1 \supset p_2)); 
    (\top \text{\wand} p_1, 
    \top \text{\wand} (p_1 \supset p_2)) \vdash p_2$} %
    \RightLabel{$Ctr L$}
    \UnaryInfC{$D: \top \text{\wand} p_1, 
    \top \text{\wand} (p_1 \supset p_2) 
    \vdash p_2$}
    \DisplayProof 
    }  
   \end{center}
   where {\small $\Pi(D_1)$} and {\small $\Pi(D_2)$} are:  
    \begin{center}
    \begin{description}
      \item[{\small $\Pi(D_1)$}: ]{\ }\\
	\begin{center}  
	  \scalebox{0.9}{
	  \AxiomC{}
	    \RightLabel{$\top R$}
	    \UnaryInfC{$\top \text{\wand} 
	    (p_1 \supset p_2) \vdash \top$} 
	    \DisplayProof 
	    }
	\end{center} 
      \item[{\small $\Pi(D_2)$}: ]{\ }\\ 
	\begin{center} 
	  \scalebox{0.9}{\hspace{-1.7cm}
	  \AxiomC{}  
	  \RightLabel{$\top R$}
	  \UnaryInfC{$\top \text{\wand} p_1 \vdash \top$}
	  \AxiomC{}
	  \RightLabel{$id$}   
	  \UnaryInfC{$p_1 \vdash p_1$} 
	  \AxiomC{} 
	  \RightLabel{$id$}   
	  \UnaryInfC{$p_2 \vdash p_2$}
	  \RightLabel{$Wk L$}
	  \UnaryInfC{$p_1; p_2 \vdash p_2$} 
	  \RightLabel{$\supset L$} 
	  \BinaryInfC{$p_1; p_1 \supset p_2 
	  \vdash p_2$}
	  \RightLabel{$\text{\wand} L$}
	  \BinaryInfC{$p_1; (\top \text{\wand} 
	  (p_1 \supset p_2)) \vdash p_2$}
	  \DisplayProof 
	  }
	\end{center}
    \end{description}
  \end{center}
  where all the derivation tree branches are closed upward. \\
  \indent For {\small $\supset L$}, use {\small 
  $(\mtop;p_1),
  (\mtop; p_1 {\supset} p_2)
  \vdash p_2$}. Without structural contraction we have (only 
  two sensible ones are shown; the rest similar):
  \begin{enumerate}
    \item{\ }  
      {\small 
      \begin{center}        
	\AxiomC{$\mtop \vdash p_1$}    
	\AxiomC{}
	\RightLabel{$id$}
	\UnaryInfC{$p_2 \vdash p_2$}
	\RightLabel{$Wk L$}
	\UnaryInfC{$\mtop; p_2 \vdash p_2$}
        \RightLabel{$EA_2$}
	\UnaryInfC{$(\mtop; p_1), (\mtop; p_2) \vdash p_2$}
	\RightLabel{$\supset L$}
	\BinaryInfC{$D: (\mtop; p_1 ),
	(\mtop; p_1 {\supset} p_2)
	 \vdash p_2$} 
	 \DisplayProof
      \end{center}
      } 
    \item{\ }  
      {\small 
      \begin{center}        
	\AxiomC{$p_1 \vdash p_2$}
	\RightLabel{$Wk L$}
	\UnaryInfC{$\mtop; p_1 \vdash p_2$}
	\RightLabel{$EA_2$}
	\UnaryInfC{$D: (\mtop; p_1), (\mtop; p_1 {\supset} p_2) 
	\vdash p_2$}
	\DisplayProof
      \end{center}
      }
  \end{enumerate}
  In the presence of structural contraction, there is a closed derivation. 
  \begin{center}       
    \scalebox{0.9}{   
    \AxiomC{}
    \RightLabel{$id$}
    \UnaryInfC{$p_1 \vdash p_1$}
    \RightLabel{$Wk L$}
    \UnaryInfC{$\mtop; p_1; \mtop \vdash p_1$}
    \AxiomC{}
    \RightLabel{$id$}
    \UnaryInfC{$p_2 \vdash p_2$}
    \RightLabel{$Wk L$}
    \UnaryInfC{$\mtop; p_1; \mtop; p_2 \vdash p_2$}
    \RightLabel{$\supset L$}  
    \BinaryInfC{$
    \mtop; p_1; \mtop; p_1 {\supset} p_2 \vdash p_2$}
    \doubleLine 
    \RightLabel{$EA_2$}
    \UnaryInfC{$((\mtop; p_1), (\mtop; p_1 {\supset} p_2));
    ((\mtop; p_1), (\mtop; p_1 {\supset} p_2))
     \vdash p_2$}
    \RightLabel{$Ctr L$}
    \UnaryInfC{$D: (\mtop; p_1),
    (\mtop; p_1 {\supset} p_2) \vdash p_2$}
    \DisplayProof 
    }
  \end{center}  
  \hide{and assume that 
  every propositional variable is distinct. Then without 
  contraction, there are several derivations of which 
  two sensible ones are shown below.\\
  \begin{enumerate}
    \item  
      \scalebox{0.88}{
	\AxiomC{}
	\RightLabel{$\top R$}
	\UnaryInfC{$p_3, \top {\text{\wand}} (p_1 {\rightarrow} p_2) 
	\vdash \top$}
	\AxiomC{$p_5; p_1 \vdash p_2$}
	\RightLabel{$\text{\wand} L$}
	\BinaryInfC{$D: p_5; (\top {\text{\wand}} p_1, 
	\top \text{\wand} (p_1 {\rightarrow} p_2), p_3) \vdash p_2$}
	\DisplayProof}{\ }\\\\
    \item 
      \scalebox{0.88}{
	\AxiomC{}
	\RightLabel{$\top R$}
	\UnaryInfC{$p_3, \top {\text{\wand}} p_1 \vdash \top$} 
	\AxiomC{$p_5 \vdash p_1$}   
	\AxiomC{}      
	\RightLabel{$id$}   
	\UnaryInfC{$p_2 \vdash p_2$}
	\RightLabel{$Wk L$}
	\UnaryInfC{$p_5; p_2 \vdash p_2$}
	\RightLabel{$\rightarrow L$} 
	\BinaryInfC{$p_5; p_1 {\rightarrow} p_2 \vdash p_2$} 
	\RightLabel{$\text{\wand} L$}
	\BinaryInfC{$D: p_5; (\top {\text{\wand}} p_1, 
	\top \text{\wand} (p_1 {\rightarrow} p_2), p_3)
	\vdash p_2$}
	\DisplayProof}{\ }\\
  \end{enumerate}  
  In both of the derivation trees above, one branch is open. Moreover,
  such holds true when only formula-level contraction is permitted in
  \LBI. The sequent 
  {\small $D$} cannot be derived under the given restriction. 
  In the presence of structural contraction, however, 
  another construction is possible:
  \begin{center}        
  \scalebox{0.76}{  
  \AxiomC{$\Pi(D_1)$}
   \AxiomC{$\Pi(D_2)$}
\RightLabel{$\text{\wand} L$}
    \BinaryInfC{$p_5; (\top \text{\wand} p_1,  
    \top \text{\wand} (p_1 \rightarrow p_2), p_3, p_4); 
    (\top \text{\wand} p_1, 
    \top \text{\wand} (p_1 \rightarrow p_2), p_3, p_4) \vdash p_2$} %
    \RightLabel{$Ctr L$}
    \UnaryInfC{$D: p_5; (\top \text{\wand} p_1, 
    \top \text{\wand} (p_1 \rightarrow p_2), p_3, p_4) 
    \vdash p_2$}
    \DisplayProof 
    }  
   \end{center}
   where {\small $\Pi(D_1)$} and {\small $\Pi(D_2)$} are:  
    \begin{center}
    \begin{description}
      \item[{\small $\Pi(D_1)$}: ]{\ }\\
	\begin{center}  
	  \scalebox{0.88}{
	  \AxiomC{}
	    \RightLabel{$\top R$}
	    \UnaryInfC{$\top \text{\wand} 
	    (p_1 \rightarrow p_2), p_3 \vdash \top$} 
	    \DisplayProof 
	    }
	\end{center} 
      \item[{\small $\Pi(D_2)$}: ]{\ }\\ 
	\begin{center} 
	  \scalebox{0.88}{\hspace{-1.7cm}
	  \AxiomC{}  
	  \RightLabel{$\top R$}
	  \UnaryInfC{$\top \text{\wand} p_1, p_3 \vdash \top$}
	  \AxiomC{}
	  \RightLabel{$id$}   
	  \UnaryInfC{$p_1 \vdash p_1$}
	  \RightLabel{$Wk L$}
	  \UnaryInfC{$p_5; p_1 \vdash p_1$} 
	  \AxiomC{} 
	  \RightLabel{$id$}   
	  \UnaryInfC{$p_2 \vdash p_2$}
	  \RightLabel{$Wk L$}
	  \UnaryInfC{$p_5; p_1; p_2 \vdash p_2$} 
	  \RightLabel{$\rightarrow L$} 
	  \BinaryInfC{$p_5; p_1; p_1 \rightarrow p_2 
	  \vdash p_2$}
	  \RightLabel{$\text{\wand} L$}
	  \BinaryInfC{$p_5; p_1; (\top \text{\wand} 
	  (p_1 \rightarrow p_2), p_3) \vdash p_2$}
	  \DisplayProof 
	  }
	\end{center}
    \end{description}
  \end{center}
  where all the derivation tree branches are closed upward.  
  }
\end{proof}

\section{Main Properties of \LBIN}
In this section we show 
the main properties of \LBIN, \emph{i.e.} 
admissibility of weakening, that of {\small $EA_2$}, that of 
both {\small $EqAnt_{1\:\LBI}$} and 
 {\small $EqAnt_{2\:\LBI}$},
that of contraction, and its equivalence to 
\LBI. Cut is also admissible. We will refer to derivation 
depth very often. 
\begin{definition}[Derivation depth] 
    By {\small $\Pi(D)$} we denote a derivation tree 
    of a sequent {\small $D$}. We assume that 
    {\small $\Pi(D)$} is always closed: every derivation 
    branch of the tree has an empty sequent as the leaf node (the 
    premise of an axiom). For derivation depth, 
    let {\small $\Pi(D)$} be a derivation tree. Then 
    the derivation depth of {\small $D'$}, a node 
    in {\small $\Pi(D)$}, is:  
    \begin{itemize} 
        \item 1 if {\small $D'$} is the conclusion node of 
            of an axiom inference rule. 
        \item 1 + (derivation depth of {\small $D_1$}) 
            if {\small $\Pi(D')$} looks like: 
            \begin{center} 
                \scalebox{0.9}{ 
                    \AxiomC{$\Pi(D_1)$}  
                    \UnaryInfC{$D'$} 
                    \DisplayProof
                }
            \end{center} 
        \item 1 + (the larger of the derivation depths 
            of {\small $D_1$} and {\small $D_2$}) if 
            {\small $\Pi(D')$} looks like: 
            \begin{center}  
                \scalebox{0.9}{ 
                    \AxiomC{$\Pi(D_1)$} 
                    \AxiomC{$\Pi(D_2)$} 
                    \BinaryInfC{$D'$} 
                    \DisplayProof
                }
            \end{center}
    \end{itemize}
\end{definition} 
\hide{
To previous reviews: what is this essence or why 
does it have to be defined? The answer should 
become evident if one attempts to define 
{\small $id_{\LBITH}$} without any concept of 
the sort of the essence: the conclusion sequent 
would become syntactically inexpressible (infinitely 
long). This concept is in fact a nice one which reflects 
the semantics of the multiplicative unit. 
}
  \subsection{Weakening admissibility and $EA_2$ admissibility} 
  Admissibilities of both weakening and $EA_2$
  are proved depth-preserving. 
   This means in case of weakening that if a sequent {\small $\Gamma(\Gamma_1) \vdash H$} 
   is derivable with derivation depth of $k$, then 
   {\small $\Gamma(\Gamma_1; \Gamma_2) \vdash H$} 
   is derivable with derivation depth of $l$ such that $l \le k$.
   \begin{proposition}[{\LBIN} weakening admissibility]
  If a sequent {\small $D: \Gamma(\Gamma_1) \vdash F$} is \LBIN-derivable, 
   then so is 
   {\small $D': \Gamma(\Gamma_1; \Gamma_2) \vdash F$}, preserving 
   the derivation depth.
   \label{admissible_weakening_LBI3}
\end{proposition}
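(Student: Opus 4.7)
The plan is to argue by induction on the derivation depth $k$ of $\Pi(D)$, with a case analysis on the last inference rule used. The conceptual reason the proof succeeds is that both $\mathbb{E}$ and $\code{Candidate}$ are monotone under insertion of additive material: whenever $\Gamma_2$ is added additively at any position inside a bunch, each essence of the original bunch extends to an essence of the enlarged one, and $\preceq$ is preserved so $\code{Candidate}$ can only grow. This is exactly what one needs so that a \LBIN{} rule applicable to $D$ remains applicable to $D'$ with a compatible choice of its side conditions.

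For the base case $k=1$ the last rule is an axiom. For $id$, $D$ has the form $\mathbb{E}(\widetilde{\Gamma}; p) \vdash p$; adding $\Gamma_2$ additively at the hole of $\Gamma(-)$ yields a structure that is still an essence of some $\widetilde{\Gamma'}; p$, so $D'$ is an instance of $id$ at depth $1$. The same argument covers $\mtop R$; for $\bot L$ and $\top R$ the weakened conclusion is immediately an axiom instance.

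For the inductive step I split the remaining rules into two groups. The context-sharing rules $\wedge L$, $\vee L$, $\wedge R$, $\vee R$, $\supset R$, $\text{\wand} R$, $* L$ propagate the weakening uniformly into the premises; the IH supplies depth-preserving derivations of the weakened premises, and reapplying the same rule yields $D'$ at the same depth. The delicate rules are $\supset L$, $* R$, and $\text{\wand} L$, whose formulations contain $\mathbb{E}$ or $\code{Candidate}$. I subdivide according to the location of the hole of $\Gamma(-)$ relative to the distinguished substructures. When the hole lies strictly outside the essence and outside the host $\widetilde{\Gamma'}$, the premises are weakened by IH and the rule is reapplied verbatim. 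When the hole lies inside the essence $\mathbb{E}(\widetilde{\Gamma_1}; F \supset G)$ (or its $\text{\wand}$ analogue), the enlarged antecedent remains an essence of an appropriately enlarged $\widetilde{\Gamma_1}; F \supset G$; the very same modification of the essence is propagated into both premises, each of which is then handled by IH. When the hole lies inside $\widetilde{\Gamma'}$ in $* R$ or $\text{\wand} L$, the old candidate $(Re_1, Re_2)$ remains a candidate of the enlarged $\widetilde{\Gamma'}$ because $Re_i \preceq \widetilde{\Gamma'}$ and $\widetilde{\Gamma'} \preceq$ its additive extension; the first premise is reused unchanged and the second is weakened by IH.

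The main obstacle is the bookkeeping in this last group, specifically verifying that the essences and candidate pairs required by the premises can always be reinstantiated from the weakened conclusion in a way that satisfies the side conditions of the rule. This reduces to two elementary facts already built into the definitions: the ``is an essence of'' relation is closed under additive extension, and $\preceq$ is preserved upward by additive extension. Depth preservation then follows automatically, since each inductive step reapplies the same rule to premises whose IH-derivations have depth no greater than those of the original premises, leaving the axiom case with depth exactly $1$.
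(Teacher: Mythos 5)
Your proposal is correct and follows essentially the same route as the paper's Appendix~A proof: induction on derivation depth with a case split on the last rule, reducing the delicate cases ($\supset L$, $* R$, $\text{\wand} L$) to the facts that the essence relation and the candidate relation (via $\preceq$) are closed under additive extension. Your write-up is, if anything, slightly more explicit than the paper about isolating those two monotonicity facts as the crux.
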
 
\begin{proof}    
    By induction on derivation depth of {\small $D$}.
    Details are in Appendix A. \qed
\end{proof}      
\begin{proposition}[Admissibility of $EA_2$]
  If a sequent {\small $D: \Gamma(\Gamma_1) \vdash F$} 
  is \LBIN-derivable, then so is {\small $D': \Gamma(\mathbb{E}(\Gamma_1))
  \vdash F$}, preserving the derivation depth.   
  \label{admissible_EA2}
\end{proposition}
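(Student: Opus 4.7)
The plan is to induct on the construction of the essence $\mathbb{E}(\Gamma_1)$ from its base $\Gamma_1$. The base clause $\mathbb{E}(\Gamma_1) = \Gamma_1$ is immediate. For the two inductive clauses, each outer essence is obtained from an intermediate essence by inserting one substructure of the form $(\mtop; \widetilde{\Gamma_2})$ at a multiplicative position. The additive component $\widetilde{\Gamma_2}$ is attached to an already-inserted $\mtop$ via ``;'', so Proposition~\ref{admissible_weakening_LBI3} absorbs it depth-preservingly. The whole claim therefore reduces to a single-step fact: if $\Gamma(\Gamma') \vdash H$ is \LBIN-derivable, then so is $\Gamma(\Gamma', \mtop) \vdash H$, with the derivation depth preserved.

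For this single-step fact I would carry out a nested induction on the derivation depth of the given sequent, with case analysis on the last rule applied. Rules whose structural locus is disjoint from the multiplicative position at which $\mtop$ is inserted propagate routinely: apply the IH to each premise and re-apply the rule. The delicate cases are the axioms and the essence-sensitive rules. For $id$ and $\mtop R$, whose antecedents already have the form $\mathbb{E}(\widetilde{\Gamma}; p)$ and $\mathbb{E}(\widetilde{\Gamma}; \mtop)$ respectively, the added $\mtop$ is absorbed by re-reading the whole antecedent as another essence of the same base, so the conclusion remains an instance of the same axiom. For $\supset L$ and $\text{\wand} L$, whose conclusions display an explicit essence around the principal implication, the insertion is again absorbed at the notational level, and the premises are recovered by IH.

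The main obstacle I expect lies in the side-condition $(Re_1, Re_2) \in \code{Candidate}(\Gamma')$ appearing in $* R$ and $\text{\wand} L$: when the insertion site falls inside the resource structure $\Gamma'$, the $\code{Candidate}$ set can change. My plan there is to exploit monotonicity of $\preceq$ under multiplicative extension by $\mtop$ --- since $\mtop$ is vacuous for resource splitting, any candidate pair for $\Gamma'$ continues to be a candidate for $(\Gamma', \mtop)$, possibly after attaching $\mtop$ to one of the chosen resources --- and then recover the remaining (non-candidate) premise via IH. Because the essence-and-candidate machinery of Section~3 was engineered precisely to make such insertions transparent, I expect every remaining case to close after a mechanical re-reading of the antecedent.
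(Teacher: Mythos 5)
Your proposal is correct, but it is organized differently from the paper's proof. The paper establishes the proposition by a single induction on the derivation depth of $D$, replacing $\Gamma_1$ by the whole essence $\mathbb{E}(\Gamma_1)$ at once in every case of the last rule applied; it explicitly reuses the template of the weakening-admissibility argument (Appendix A), where the primed essence and context notation ($\mathbb{E}'(\cdots) \preceq \mathbb{E}(\cdots)$, $\Gamma'(-) \preceq \Gamma(-)$) does all the bookkeeping. You instead peel the essence apart: an outer induction on the inductive generation of $\mathbb{E}(\Gamma_1)$, reducing each generating clause to one bare multiplicative insertion of $\mtop$ followed by Proposition~\ref{admissible_weakening_LBI3} to restore the additive companion $\widetilde{\Gamma_2}$, and only then an inner depth induction for the single $\mtop$-insertion. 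This is a genuinely finer decomposition: it isolates the only truly multiplicative content of $EA_2$ (a bare $\mtop$ at a ``,''-position) and delegates everything additive to the already-proved weakening lemma, at the price of an extra layer of induction and of re-verifying that a single $\mtop$-insertion preserves axiom-hood (it does, by the second essence clause with $\widetilde{\Gamma_2}$ empty). The one step you should spell out is the \code{Candidate} bookkeeping in $* R$ and $\text{\wand} L$: since $\preceq$ has no clause adjoining $\mtop$ multiplicatively, a pair in $\code{Candidate}(\Gamma')$ does not literally persist to the enlarged antecedent; you must trace the insertion site back through the $\preceq$-chain witnessing $Re_1, Re_2 \preceq \Gamma'$, attach the $\mtop$ to whichever resource it lands in (or to either one when it lands in weakened-in material), and re-derive that premise by the inner induction hypothesis. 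This is routine but is exactly where the content lies; the paper's own proof leaves the corresponding step equally implicit, so your argument meets the same standard of rigor.
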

\begin{proof}  
    By induction on derivation depth of {\small $D$}. 
   If it is one, \emph{i.e.} {\small $D$} is the conclusion 
   sequent of an axiom, then so is {\small $D'$}.  
   Inductive cases are straightforward due to 
   a near identical proof approach to the weakening 
   admissibility proof (see Appendix A). \qed
\end{proof}
\subsection{Inversion lemma}
The inversion 
lemma below is important in simplification of the subsequent 
discussion. 
\begin{lemma}[Inversion lemma for \LBIN] {\ }   
    For the following sequent pairs, if the 
    sequent on the left is \LBIN-derivable at
    most with the derivation depth of $k$,
  then so is (are) the sequent(s) on the right.   
{\small
    \begin{eqnarray}\nonumber
      {\small \Gamma(F \wedge G) \vdash H}, &&
      {\small \Gamma(F; G) \vdash H}\\\nonumber
      {\small \Gamma(F_1 \vee F_2) \vdash H}, && 
      \text{both } {\small \Gamma(F_1) \vdash H} \text{ and }
      {\small \Gamma(F_2) \vdash H}\\\nonumber
      {\small \Gamma(F * G) \vdash H}, && 
      {\small \Gamma(F, G) \vdash H}\\\nonumber
      {\small \Gamma(\Gamma_1; \top) \vdash H}, && 
      {\small \Gamma(\Gamma_1) \vdash H}\\\nonumber
      {\small \Gamma(\Gamma_1, {\mtop}) \vdash H}, && 
      {\small \Gamma(\Gamma_1) \vdash H}\\\nonumber
      {\small \Gamma \vdash F \wedge G}, && 
      \text{both } {\small \Gamma \vdash F} \text{ and } 
      {\small \Gamma \vdash G}\\\nonumber
      {\small \Gamma \vdash F {\supset} G}, && 
      {\small \Gamma; F \vdash G}\\\nonumber
      {\small \Gamma \vdash {F \text{\wand} G}}, && 
      {\small \Gamma, F \vdash G} 
    \end{eqnarray}  
    }
        \hide{
    \begin{enumerate}
    \item If $\Gamma(F \wedge G) \vdash H$ is deducible within 
      a derivation depth $k$, then so is $\Gamma(F; G) \vdash H$. 
    \item If $\Gamma(F_1 \vee F_2) \vdash H$ is deducible 
      within a derivation depth $k$, then so are $\Gamma(F_1) \vdash H$ 
      and $\Gamma(F_2) \vdash H$.  
    \item If $\Gamma(\Gamma_1; F \rightarrow G) \vdash H$ is deducible 
      within a derivation depth $k$, then so is 
      $\Gamma(\Gamma_1; G) \vdash H$.  
    \item If $\Gamma(F * G) \vdash H$ is deducible within 
      a derivation depth $k$, then so is $\Gamma(F, G) \vdash H$.   
    \item If $\Gamma(\top) \vdash H$ is deducible within 
      a derivation depth $k$, then so is $\Gamma(\emptyset) \vdash H$.  
    \item If $\Gamma(\top^*) \vdash H$ is deducible within 
      a derivation depth $k$, then so is $\Gamma(\O) \vdash H$.  
    \item If $\Gamma \vdash F \wedge G$ is deducible within a 
      derivation depth $k$, then so are $\Gamma \vdash F$ and 
      $\Gamma \vdash G$. 
    \item If $\Gamma \vdash F \rightarrow G$ is deducible 
      within a derivation depth $k$, then so is $\Gamma; F \vdash G$. 
    \item If $\Gamma \vdash F \text{\wand} G$ is deducible 
      within a derivation depth $k$, then so is $\Gamma, F \vdash G$.  
  \end{enumerate} 
  }
  \label{LBITH_inversion_lemma} 
    \hide{ 
    DO THIS LATER.
    The following {\LBITH} inference rules are depth-preserving 
    invertible for all the premises: $\wedge L_{\LBITH} \quad 
    \vee L_{\LBITH} \quad $
    }
  \end{lemma}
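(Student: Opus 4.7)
The plan is to prove all eight inversions simultaneously by induction on the derivation depth $k$ of the left sequent, with case analysis on the last rule $R$ applied. For each case we show that the right-hand sequent is derivable with derivation depth at most $k$. Propositions 3 and 4 are available, and they are depth-preserving, so we may freely appeal to them when patching up structures.

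For the base case $k = 1$, the left sequent is the conclusion of $id$, $\bot L$, $\top R$, or $\mtop R$. In each of the antecedent-inversions ($\wedge L$-inversion, $\vee L$-inversion, $* L$-inversion, and the $\top$/$\mtop$ spectator deletions), the compound formula (or spectator $\top$/$\mtop$) being acted on is not what triggers the axiom: the $p$ of $id$, the $\bot$ of $\bot L$, the $\mtop$ in the consequent of $\mtop R$, and the $\top$ of $\top R$ all survive the transformation, and the resulting structure remains an instance of the same axiom (possibly after invoking Proposition 4 to recognise the new antecedent as an essence of the relevant shape). For the consequent-inversions, $\wedge R$ directly yields both premises, while $\supset R$/$\wand R$/$\wedge R$-inversion cannot arise at $k = 1$ unless the consequent is $\top$, in which case $\top R$ derives the required premises immediately.

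For the inductive step there are two subcases. (a) If $R$ acts principally on the subformula or substructure being inverted, then its premise(s) are exactly the right-hand sequent(s): $\wedge L$ gives $\Gamma(F;G)\vdash H$, $\vee L$ gives the two required branches, $* L$ gives $\Gamma(F,G)\vdash H$, and dually for $\wedge R$, $\supset R$, $\wand R$. Note that for the two spectator deletions this subcase is vacuous, since no left rule of \LBIN\ has $\top$ or $\mtop$ as its principal formula. (b) Otherwise $R$ acts on some other position; by induction hypothesis we invert each premise (which has strictly smaller depth), and re-apply $R$ to the inverted premises, giving the inverted conclusion with depth at most $k$.

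The main obstacle will be the commutation in case (b) when $R$ is $\supset L$ or $\wand L$ and the inverted position lies inside an essence $\mathbb{E}(\cdot)$ or inside the multiplicative context $\Gamma'$ whose $\code{Candidate}$ decomposition $(Re_1,Re_2)$ the rule depends on. For essences, the remedy is immediate from the definition: the inserted $(\mtop;\widetilde{\Gamma_2})$-bunches do not depend on the particular formulas in the underlying structure, so replacing a subformula at any position still yields a valid essence. For candidates, $\preceq$ is defined purely by additive augmentation, so any candidate of the original $\Gamma'$ lifts positionally to a candidate of the modified $\Gamma'$, keeping the rule instance legal. The subtlest point is the two spectator-deletion inversions, where removing a $\top$ (or $\mtop$) from inside an essence may leave a residual empty additive slot; this is absorbed by appealing to Propositions 3 and 4 to re-present the resulting structure as a correct essence without increasing depth. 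Once this bookkeeping is in place, the case analysis is routine.
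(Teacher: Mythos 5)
Your proposal follows essentially the same route as the paper's Appendix B proof: induction on derivation depth with a case split on the last rule applied, distinguishing principal from non-principal occurrences of the inverted formula/structure, and invoking the depth-preserving weakening and $EA_2$ admissibility propositions to repair essences and candidate decompositions (including the delicate $\mtop$-deletion case where the deleted unit may be consumed as an $Re_j$ in $* R$ or $\text{\wand} L$). One small slip: at depth $1$ a consequent-inversion instance such as $\Gamma \vdash F \supset G$ arises via $\bot L$ rather than $\top R$ (the consequent of $\top R$ is literally $\top$, never a compound formula), but that case is handled exactly like your antecedent-inversion axiom cases, so nothing is lost.
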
{\vspace{-0.5cm}
  \begin{proof} 
      By induction on derivation depth. Details are in Appendix B. 
  \end{proof}    
\hide{
    
    Following \ref{LBITH_inversion_lemma}, we 
  also
  develop the concept of normalisation via  
  {\small $EqAnt''_2$} and {\small $ \O L$}
  to be rid of unnecessary analysis
  overhead.
  \begin{definition}  Given a {\LBITH} sequent {\small $D_1: \Gamma \vdash F$}, its 
    {\small $\O L$}-normalised sequent {\small 
    ${\small \O L}\norm({\small D_1)}$}
    is defined such to satisfy:
  \begin{itemize}
    \item {\small ${\small D_1 \leadsto^*_{\O L} \O L} \norm({\small D_1})$}.
    \item there is no sequent {\small $D'_1$} such that  
      {\small ${\small \O L}\norm({\small D_1}) \small{\leadsto_{\O L} D'_1}$}.
  \end{itemize} 
  Its {\small $EqAnt''_2$}-normalised sequent 
  {\small ${\small EA_2}\norm({\small D_1})$} is defined such to satisfy:
  \begin{itemize}
    \item {\small ${\small D_1 \leadsto^*_{EqAnt''_2} EA_2}\norm({\small D_1})$}. 
    \item there is no sequent {\small $D'_1$} such that 
      {\small ${\small EA_2}\norm({\small D_1) \leadsto_{EqAnt''_2} D'_1}$}.\\
  \end{itemize} 
\end{definition}    
\begin{definition}[{\small $\O L$/$EqAnt''_2$}-normalised LBI3 sequents] 
  Given a {\LBITH} derivation of a sequent {\small $D$}, its 
  {\small $\O L$}-normalised and also {\small $EqAnt_2''$}-normalised 
  sequent is defined as {\small $\repeatnormalise(D)$} where 
  {\small $\repeatnormalise$} is a function defined in the following 
  algorithm, taking as its input 
  a sequent {\small $D_{in}$} and returning a sequent {\small $D_{out}$} as its output.
  \begin{algorithm} 
  \begin{algorithmic}[1]
    {\small \Function{\repeatnormalise}{$D_{in}$}\Comment{$D_{out}$} 
    \State $D_{out} \gets D_{in}$
    \State $D_{test} \gets D_{out}$\Comment{the below loop 
    terminates once no more changes take place to $D_{out}$}.
    \While{true}
       \State $D_{out} \gets  {\O} L\norm(D_{out})$ 
       \State $D_{out} \gets EA_2\norm(D_{out})$ 
       \If{$D_{out}$ is $D_{test}$} 
           \State \textbf{return} $D_{out}$ 
	   \Else 
	   \State $D_{test} \gets D_{out}$ 
\EndIf
    \EndWhile\label{euclidendwhile}
    \EndFunction
    }
  \end{algorithmic}
\end{algorithm}
\end{definition} 
\begin{definition}[Normalised LBI3 derivation] 
  Given a {\LBITH} derivation of a sequent {\small $D$}, 
  denoted as {\small $\Pi(D)$}, its normalised derivation 
  {\small $\norm(\Pi(D))$} is defined inductively as follows: 
  \begin{itemize}
    \item if the derivation depth of {\small $D$} is 1, then  {\small 
      $\norm(\Pi(D))$} 
      is {\small $\Pi(\repeatnormalise(D))$}.
    \item if {\small $D$} is the conclusion sequent of a single-premised
      inference rule \textbf{Inf} whose premise sequent is {\small $D_1$}, 
      then {\small $\norm(\Pi(D))$} is such that 
      \begin{center}
	{\small 
	\AxiomC{$\norm(\Pi(D_1))$} 
	\RightLabel{\textbf{Inf}} 
	\UnaryInfC{$\repeatnormalise(D)$} 
	\DisplayProof
	}
      \end{center} 
      \emph{i.e.} a {\LBITH} derivation whose conclusion is 
      {\small $\repeatnormalise(D)$} which has a sub-derivation {\small 
      $\norm(\Pi(D_1))$}
      whose conclusion sequent is the premise sequent of \textbf{Inf}.
    \item if {\small $D$} is the conclusion sequent of a two-premised
     inference rule \textbf{Inf} whose premise sequents 
     are {\small $D_1$} and {\small $D_2$}, then 
     {\small $\norm(\Pi(D))$} is such that 
     \begin{center} 
       {\small 
        \AxiomC{$\norm(\Pi(D_1))$} 
	\AxiomC{$\norm(\Pi(D_2))$} 
	\RightLabel{\textbf{Inf}} 
	\BinaryInfC{$\repeatnormalise(D)$} 
	\DisplayProof
	}
     \end{center} 
     \emph{i.e.} a {\LBITH} derivation whose conclusion is 
     {\small $\repeatnormalise(D)$} which has sub-derivations {\small 
     $\norm(\Pi(D_1))$}
     and {\small $\norm(\Pi(D_2))$} whose conclusion sequents 
     are the premise sequents of \textbf{Inf}. \\
  \end{itemize} 
\end{definition}  
\begin{proposition}  
  Given a sequent {\small $D: \Gamma \vdash F$},    
  a {\LBITH} derivation of {\small $D$}, denoted as {\small $\Pi(D)$}, is closed 
  if and only if its normalised {\LBITH} derivation {\small $\norm(\Pi(D))$} is.
  \label{equivalence_normalised_derivation}
\end{proposition} 
\begin{proof}
  Follows from Lemma \ref{LBITH_inversion_lemma}.
\end{proof} 
{\ }\\
Speaking in proximity, 
what Proposition \ref{equivalence_normalised_derivation} 
indicates is the admissibility of {\small $EqAnt''_{2}$} and 
also of {\small $\O L$} if {\LBITH} inference rules were 
appropriately modified to absorb these, which, in fact, 
is not so difficult to see. Though aware 
of this consequence, we shall keep  
both {\small $EqAnt''_2$} and {\small $\O L$}, to avoid an overly complex 
calculus with little gain. \underline{Hereafter, 
we consider only normalised derivations}.
}  
\hide{The last two imply that no arbitrary introduction 
of structural units upwards is needed in {\LBITH} backward derivations 
(\emph{c.f.} Proposition \ref{admissible_eqant_LBI3} below).} 
\subsection{Admissibility of {\small $EqAnt_{1,2}$}}  
\hide{
\begin{center} 
  {\small 
  \AxiomC{$\Gamma(\Gamma_1, \O_m) \vdash F$} 
  \RightLabel{$EqAnt'_2$} 
  \UnaryInfC{$\Gamma(\Gamma_1) \vdash F$} 
  \DisplayProof 
  }
\end{center} 
}
\begin{proposition}[Admissibility of {\small $EqAnt_{1,2}$}]
    {\small $EqAnt_{1\: \LBI}$} and {\small $EqAnt_{2\:\LBI}$} are admissible in {\small $[\LBIN + EqAnt_{1, 2\:\LBI}]$}, preserving
  the derivation depth.
   \label{admissible_eqant_LBI3}
\end{proposition}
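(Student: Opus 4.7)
The plan is to show that in fact each of the four directed forms of $EqAnt_{1\:\LBI}$ and $EqAnt_{2\:\LBI}$ is already derivable in pure $\LBIN$ with derivation depth preserved, by piecing together the three results we have already established: the weakening admissibility (Proposition \ref{admissible_weakening_LBI3}), the $EA_2$ admissibility (Proposition \ref{admissible_EA2}), and the inversion lemma (Lemma \ref{LBITH_inversion_lemma}). Admissibility in $[\LBIN + EqAnt_{1,2\:\LBI}]$ then follows by a routine structural induction on a derivation, replacing every application of $EqAnt_{1\:\LBI}$ or $EqAnt_{2\:\LBI}$ by the corresponding depth-preserving transformation in $\LBIN$; the overall depth cannot grow.

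Concretely, I would split the argument along the two rules and their two directions. For the ``downward'' reading of $EqAnt_{1\:\LBI}$, namely $\Gamma(\Gamma_1; \top) \vdash H \Rightarrow \Gamma(\Gamma_1) \vdash H$, and for the ``downward'' reading of $EqAnt_{2\:\LBI}$, namely $\Gamma(\Gamma_1, \mtop) \vdash H \Rightarrow \Gamma(\Gamma_1) \vdash H$, I would appeal directly to the $\top$-erasure and $\mtop$-erasure clauses of the inversion lemma, both of which preserve depth. For the ``upward'' reading of $EqAnt_{1\:\LBI}$, namely $\Gamma(\Gamma_1) \vdash H \Rightarrow \Gamma(\Gamma_1; \top) \vdash H$, I would invoke Proposition \ref{admissible_weakening_LBI3} with $\Gamma_2 := \top$, since $\top$ is a single-formula bunch and weakening absorbs the insertion.

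The only slightly subtle case is the upward reading of $EqAnt_{2\:\LBI}$, that is $\Gamma(\Gamma_1) \vdash H \Rightarrow \Gamma(\Gamma_1, \mtop) \vdash H$, which cannot be obtained through weakening because the added occurrence of $\mtop$ sits in a multiplicative position. The key observation is that $\Gamma_1, \mtop$ is itself an essence of $\Gamma_1$: taking the outermost context to be the hole and $\widetilde{\Gamma_2}$ empty, the second clause of the essence definition yields $\Gamma_1, (\mtop; \widetilde{\emptyset}) = \Gamma_1, \mtop \in \mathbb{E}(\Gamma_1)$. Proposition \ref{admissible_EA2} then provides a $\LBIN$-derivation of $\Gamma(\Gamma_1, \mtop) \vdash H$ at no greater depth than the given derivation of $\Gamma(\Gamma_1) \vdash H$.

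The induction on the $[\LBIN + EqAnt_{1,2\:\LBI}]$-derivation is unchallenging: at each $EqAnt$-step one splices in the appropriate depth-preserving substitute from the paragraphs above, and every other inference rule is preserved unchanged. The only potential obstacle I anticipate is checking that the essence-based argument for the multiplicative case does indeed go through at the correct nesting depth, i.e.\ that Proposition \ref{admissible_EA2} applies locally inside an arbitrary surrounding context $\Gamma(-)$; this is precisely what its statement already affords, so the step should go through without friction.
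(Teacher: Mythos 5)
Your proposal is correct and follows essentially the same route as the paper, which likewise derives the two erasure directions from the inversion lemma, the additive insertion from Proposition~\ref{admissible_weakening_LBI3}, and the multiplicative insertion from Proposition~\ref{admissible_EA2} via the observation that $\Gamma_1, \mtop$ is an essence of $\Gamma_1$. Your write-up merely makes explicit the case split that the paper leaves to a one-line citation.
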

\begin{proof}    
  Follows from  inversion lemma,\footnote{
  Inversion lemma proves one direction.}
  Proposition \ref{admissible_weakening_LBI3} 
  and Proposition \ref{admissible_EA2}.
  \qed 
  \hide{
  We divide {\small $EqAnt_{1\: \LBI}$} into two rules, one for each direction: 
   \begin{center}
     {\small 
     \AxiomC{$\Gamma(\Delta; \emptyset) \vdash F$} 
     \RightLabel{$EqAnt'_1$} 
     \UnaryInfC{$\Gamma(\Delta) \vdash F$} 
     \DisplayProof 
     \indent 
     \AxiomC{$\Gamma(\Delta) \vdash F$} 
     \RightLabel{$EqAnt''_1$} 
     \UnaryInfC{$\Gamma(\Delta; \emptyset) \vdash F$} 
     \DisplayProof
     }
   \end{center}  
   Proposition \ref{admissible_weakening_LBI3} for $EqAnt''_1$
   and the inversion lemma (Lemma \ref{LBITH_inversion_lemma}) 
   for the other two. \qed
   \hide{
   We only show a sketch here: 
   \begin{enumerate}
     \item {\small $EqAnt''_1$} is admissible due to Proposition 
       \ref{admissible_weakening_LBI3}. 
     \item {\small $EqAnt'_1$}: introduction (in backward derivation) of the additive 
       structural unit 
       {\small $\emptyset$} is required if there is a possibility 
       that, upon an application of an additive
       {\LBITH} inference rule 
       on a sequent {\small $\Gamma(\Delta) \vdash F$} in which 
       {\small $\Delta$}
       is not empty, the antecedent part of at least one 
       of the premise sequents becomes empty. No 
       {\LBITH} additive inference rules allow it; 
       note the formula duplicate to appear on the left 
       premise in {\small $\rightarrow L$}.
     \item {\small $EqAnt'_2$}: introduction of the multiplicative structural unit
       {\small $\O$} is required if there is a possibility that, 
       upon an application of a multiplicative {\LBITH} inference rule  
       on a sequent {\small $\Gamma \vdash F$}, 
       $\Gamma$ is distributed entirely onto one of 
       the premise sequents and a zero resource
       {\small ``$\O$''} must be distributed to  
       the other premise sequent.
       Such a
       situation arises for {\small $* R$} and {\small $\text{\wand} L$}.
       However, the possibility is already taken into account in {\small 
       $* R_2$} and resp.
       {\small $\text{\wand} L_{3,4}$} due to  
       $EqAnt_{2\: \LBI}$ which is internalised
       within them.
   \end{enumerate} 
   } 
   }
\end{proof} 
\subsection{Preparation for contraction admissibility in 
{\small $* R$/$\text{\wand} L$} cases}
We dedicate one subsection here to prepare for the main 
proof of contraction admissibility. Based on  
Proposition \ref{admissible_weakening_LBI3}, 
we make an observation concerning the set of candidates. 
The discovery, which is to be stated in 
Proposition \ref{lemma_lemma}, led to the solution to the open problem. 
\begin{definition}[Representing candidates] 
    Let {\small $\dpreceq: \mathfrak{S} \times 
        \mathfrak{S}$} be a reflexive and 
    transitive binary relation satisfying:
 \begin{multicols}{2}
     \begin{itemize} 
         \item {\small $\Gamma_1 \dpreceq \Gamma_2$} 
             if {\small $\Gamma_1 = \Gamma_2$}.   
         \item {\small $\Gamma_1 \dpreceq\: \Gamma_1; \Gamma_3$}. 
         \item {\small $[\Gamma_1 \dpreceq \Gamma_2] 
                 \wedge^{\dagger} [\Gamma_2 \dpreceq 
                 \Gamma_3] \rightarrow^{\dagger} 
                 [\Gamma_1 \dpreceq \Gamma_3]$}.   
         \item {\small $\Gamma_1, \Gamma_2\: \dpreceq\
                 \Gamma_1, (\Gamma_2; \Gamma_3)$}. 
     \end{itemize}  
 \end{multicols}  
 Now let {\small $\Gamma$} be a {\BI} structure. Then 
 any of the following pairs is a representing candidate 
 of {\small $\Gamma$}. 
 \begin{multicols}{2} 
     \begin{itemize}
         \item {\small $(\Gamma_x, {\mtop})$} if 
             {\small $\Gamma_x \dpreceq \Gamma$}. 
         \item {\small $(\Gamma_x, \Gamma_y)$} if 
             {\small $\Gamma_x, \Gamma_y \dpreceq \Gamma$}. 
     \end{itemize}
 \end{multicols} 
 We denote the set of representing candidates 
 of {\small $\Gamma$} by {\small $\code{RepCandidate}(\Gamma)$}. 
\end{definition}  
We trivially have that {\small $\code{RepCandidate}(\Gamma) \subseteq \code{Candidate}(\Gamma)$} for any {\small $\Gamma$}. More 
can be said. 
\begin{proposition}[Sufficiency of \code{RepCandidate}] 
    {\LBIN} with \code{RepCandidate} instead of 
    \code{Candidate} for {\small $(Re_1, Re_2)$} 
    is as expressive 
    as {\LBIN} (with \code{Candidate}). 
    \label{lemma_lemma} 
\end{proposition}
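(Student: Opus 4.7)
The direction ``every \code{RepCandidate}-derivation is a \code{Candidate}-derivation'' is immediate: since $\dpreceq\,\subseteq\,\preceq$, $\code{RepCandidate}(\Gamma) \subseteq \code{Candidate}(\Gamma)$ for every $\Gamma$. For the converse I would show, by induction on derivation depth $d$, that every \code{Candidate}-derivable sequent is \code{RepCandidate}-derivable. Rules other than $*R$ and $\text{\wand} L$ are routine: by the inductive hypothesis each premise already has a \code{RepCandidate}-derivation, so the rule is simply re-applied. Only $*R$ and $\text{\wand} L$ require real work, and for those the plan hinges on the following key lemma: \emph{for every $(Re_1, Re_2) \in \code{Candidate}(\Gamma)$ there exists $(Re_1', Re_2') \in \code{RepCandidate}(\Gamma)$ with $Re_1 \preceq Re_1'$ and $Re_2 \preceq Re_2'$.}

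Given the lemma, a $*R$ step is rewritten as follows (and $\text{\wand} L$ analogously). The original premises $Re_i \vdash F_i$ are \code{Candidate}-derivable with depth $\leq d-1$; weakening admissibility (Proposition~\ref{admissible_weakening_LBI3}, which is depth-preserving) then produces \code{Candidate}-derivations of $Re_i' \vdash F_i$ of depth $\leq d-1$; by the inductive hypothesis applied at those smaller depths, these yield \code{RepCandidate}-derivations of $Re_i' \vdash F_i$. A fresh $*R$ application with the \code{RepCandidate} pair $(Re_1', Re_2')$ recovers the original conclusion $\Gamma \vdash F_1 * F_2$. For $\text{\wand} L$ the left premise is lifted exactly as for $*R$, and the right premise is lifted via weakening admissibility at the position where $Re_j$ sits inside its antecedent context before the inductive hypothesis is invoked.

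To prove the lemma, case-split on which \code{Candidate} clause witnesses $(Re_1, Re_2)$. If $(Re_1, Re_2) = (Re, \mtop)$ with $Re \preceq \Gamma$, take $(\Gamma, \mtop)$: it lies in \code{RepCandidate}$(\Gamma)$ by reflexivity of $\dpreceq$, and $Re \preceq \Gamma$ is already given. Otherwise $Re_1, Re_2 \preceq \Gamma$ with both components non-empty; here I induct on the number of rule-2 steps in the $\preceq$-derivation. The structural observation that drives the induction is that, when $\Xi$ has $,$ as its top connective, any rule-2 step $\Gamma_0(\Xi_1) \preceq \Gamma_0(\Xi_1; \Gamma'')$ on $\Xi$ forces $\Gamma_0(-)$ to be either the trivial context~$-$ (giving the ``wrapping'' $\Xi; \Gamma''$ whose top connective becomes $;$) or a context whose outermost connective is $,$ (so the hole sits strictly inside one $,$-component and only that component is additively enlarged). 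Chaining these steps, $\Gamma$ must therefore have the form $Re_1', Re_2'$ (no wrap ever applied) or $(Re_1', Re_2'); \Gamma''$ (at least one wrap), for some $Re_1', Re_2'$ with $Re_i \preceq Re_i'$. In either shape $(Re_1', Re_2') \in \code{RepCandidate}(\Gamma)$: by reflexivity of $\dpreceq$ in the first, by rule~2 of $\dpreceq$ in the second.

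The main obstacle is establishing the wrap-or-deep-modify dichotomy cleanly: one must verify that because $\preceq$ inserts only additive material, there is no way to change the top connective of a $,$-topped structure except through a plain wrap of the whole $,$-pair by a new $;$, and that subsequent rule-2 steps inside that wrap leave the embedded $,$-pair intact. Once this structural invariant is secured, the \code{RepCandidate} witness $(Re_1', Re_2')$ is read off directly from the shape of $\Gamma$, and the global derivation transformation proceeds mechanically using only weakening admissibility and the inductive hypothesis.
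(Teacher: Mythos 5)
Your overall strategy is the same as the paper's: only $* R$ and $\text{\wand} L$ need attention, and a \code{Candidate} pair is replaced by a \code{RepCandidate} pair whose components dominate the original ones under $\preceq$, so that depth-preserving weakening admissibility (Proposition~\ref{admissible_weakening_LBI3}) lifts the premises before the rule is re-applied. The paper compresses all of this into a short contradiction argument and never states, let alone proves, your key lemma --- it simply cites weakening admissibility --- so isolating the domination lemma explicitly is a genuine improvement in rigour, and your outer induction on derivation depth is fine.

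The gap is in your proof of the key lemma: the claimed normal form ``$\Gamma$ is either $Re_1', Re_2'$ or $(Re_1', Re_2'); \Gamma''$ with $Re_i \preceq Re_i'$'' is false, because full associativity and commutativity of ``,'' allow a rule-2 step to place its hole over a regrouped sub-multiset of the top multiplicative layer that straddles the $Re_1$/$Re_2$ split. Concretely, take $Re_1 = A$ and $Re_2 = B, C$; reassociating $A, B, C$ as $(A, B), C$ and applying one rule-2 step with hole $A, B$ gives $\Gamma = ((A, B); \Delta), C$. Its top connective is still ``,'', so neither of your two shapes applies, and neither binary split of its two top components dominates $(A,\, (B, C))$ componentwise (e.g.\ $A \not\preceq (A, B); \Delta$ and $B, C \not\preceq C$). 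The lemma itself survives: the fourth clause of $\dpreceq$ gives $C, (A, B) \dpreceq C, ((A, B); \Delta)$, hence $A, B, C \dpreceq \Gamma$, so $(A, (B, C))$ is already a representing candidate. But the witness must be obtained by dissolving such groupings via that clause --- arguing over the whole multiset of original top-level multiplicative components and only afterwards re-splitting it along the $Re_1$/$Re_2$ boundary --- rather than read off the outermost binary comma of $\Gamma$. With that repair, and the observation that insertions strictly below a component are absorbed into the residual $Re_i \preceq Re_i'$, your argument goes through and coincides with what the paper implicitly assumes.
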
  
\begin{proof}  
    The only inference rules in {\LBIN} that use 
    \code{Candidate} are {\small $* R$} and 
    {\small ${\text{\wand}} L$}. So it suffices to 
    consider only those. \\
    \indent For {\small $* R$}, suppose by way of showing contradiction that {\LBIN} with \code{RepCandidate} is not 
    as expressive as \LBIN, then there exists some {\LBIN} 
    derivation tree {\small $\Pi(D)$}: 
    \begin{center} 
        \scalebox{0.9}{  
            \AxiomC{$\vdots$} 
            \noLine
            \UnaryInfC{$D_1: Re_i \vdash F_1$}  
            \AxiomC{$\vdots$}
            \noLine 
            \UnaryInfC{$D_2: Re_j \vdash F_2$} 
            \RightLabel{$* R$} 
            \BinaryInfC{$D: \Gamma' \vdash F_1 * F_2$} 
            \DisplayProof
        }
    \end{center}   
    such that {\small $(Re_1, Re_2)$} must be 
    in {\small $\code{Candidiate}(\Gamma')\backslash 
        \code{RepCandidate}(\Gamma')$}. Now, 
    without loss of generality assume 
    {\small $(i, j) = (1, 2)$}. 
    Then  {\small $D_1': Re_i' \vdash F_1$} 
    and {\small $D_2': Re_j' \vdash F_2$} 
    for {\small $(Re_i', Re_j') \in \code{RepCandidate}(\Gamma')$} 
    are also {\LBIN} derivable (by Proposition \ref{admissible_weakening_LBI3}). But 
    this means that we can choose the {\small $(Re_i', Re_j')$} 
    for {\small $(Re_1, Re_2)$}, a direct contradiction to the supposition. Similarly for {\small ${\text{\wand}} L$}. \qed
\end{proof} 
  Contraction admissibility in {\LBIN} follows.
  \begin{theorem}[Contraction admissibility in \LBIN]
    If {\small $D: \Gamma(\Gamma_a; \Gamma_a) \vdash F$} is {\LBIN}-derivable, 
    then so is {\small $D': \Gamma(\Gamma_a) \vdash F$}, preserving 
  the derivation depth. 
  \label{admissible_contraction_LBI3}
\end{theorem}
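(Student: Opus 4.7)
The plan is to proceed by induction on the derivation depth of $D$, establishing that $D'$ admits a derivation of depth at most that of $D$. In the base case, $D$ is the conclusion of an axiom ($id$, $\bot L$, $\top R$, or $\mtop R$); each such axiom depends only on the presence of a distinguished formula or on an essence pattern, both of which survive the contraction of the outer additive duplication $\Gamma(\Gamma_a; \Gamma_a) \leadsto \Gamma(\Gamma_a)$.

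For the inductive step, I case-split on the last inference rule applied in $\Pi(D)$. In the routine cases --- $\wedge L$, $\vee L$, $*L$, $\wedge R$, $\vee R$, $\supset R$, and ${\text{\wand}} R$ --- the last rule acts on a principal formula outside the duplicated $\Gamma_a; \Gamma_a$, or acts uniformly on both copies, so I push the contraction to the premise(s) by the induction hypothesis and reapply the rule below. When the principal formula of the last rule has been decomposed differently within the two copies, I first apply the inversion lemma to uniformise them and then invoke IH. Depth preservation of the inversion lemma, of weakening admissibility, and of $EA_2$ admissibility is essential throughout to keep the IH available on premises whose depth is strictly less than that of $D$.

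The three central cases are $\supset L$, ${\text{\wand}} L$, and $* R$, where contraction interacts with the essence notation or the \code{Candidate} mechanism. For $\supset L$, the duplicated substructure can lie inside the essence $\mathbb{E}(\widetilde{\Gamma_1}; F \supset G)$; since any essence of $\Gamma(\Gamma_a; \Gamma_a)$ can be re-read as an essence of $\Gamma(\Gamma_a)$ after absorbing the additive duplication, I can adjust the essence in the conclusion while appealing to IH on both premises. For ${\text{\wand}} L$ and $* R$, I first invoke the sufficiency of \code{RepCandidate} (Proposition on \code{RepCandidate}) to assume without loss of generality that the splitting pair $(Re_1, Re_2)$ lies in $\code{RepCandidate}(\Gamma(\Gamma_a; \Gamma_a))$. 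By unfolding $\dpreceq$, each $Re_k$ is built from the context by additive weakening at representative sites, so either $Re_k \dpreceq \Gamma(\Gamma_a)$ already --- in which case the premise is reused unchanged --- or $Re_k$ itself contains a strictly smaller duplication of $\Gamma_a$, to which the induction hypothesis applies on the strictly shallower premise derivation, after which weakening admissibility supplies the matching resource over $\Gamma(\Gamma_a)$.

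The hardest step, and the reason the corresponding statement fails in naive \LBI-style calculi, is controlling the ${\text{\wand}} L$ and $* R$ cases: one must show that every splitting of the duplicated context $\Gamma(\Gamma_a; \Gamma_a)$ into $(Re_1, Re_2)$ admits a counterpart splitting of $\Gamma(\Gamma_a)$ that still justifies both premises at no greater depth. This is exactly what sufficiency of \code{RepCandidate} delivers --- without the restriction to representative candidates, a pathological split could place one copy of $\Gamma_a$ into $Re_1$ and the other into $Re_2$ with no corresponding resource split available over $\Gamma(\Gamma_a)$, blocking the induction.
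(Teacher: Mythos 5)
Your proposal follows essentially the same route as the paper's own proof: induction on derivation depth, with depth-preserving inversion, weakening and $EA_2$ admissibility handling the routine cases, and Proposition~\ref{lemma_lemma} reducing the $* R$ and ${\text{\wand}} L$ splittings to representing candidates, so that the duplicated structure is either retained whole in one premise (where the induction hypothesis applies to the strictly shallower subderivation) or absent altogether. One small quibble with your closing motivation: since $\Gamma_a; \Gamma_a$ is additively joined, no candidate --- even an unrestricted one --- can place the two copies on opposite sides of the multiplicative split; what \code{RepCandidate} actually rules out is partial weakening \emph{inside} the two copies at non-representative sites, which would leave two unequal residues and destroy the syntactic contraction redex that the induction hypothesis needs.
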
 
\begin{proof}     
By induction on the derivation depth of {\small $D$}.  
For an interesting case, we have 
{\small $* R$}. {\small $\Pi(D)$} looks like: 
       \begin{center} 
	 {\small 
	 \AxiomC{$\vdots$}
	 \noLine
	 \UnaryInfC{$D_1: Re_i \vdash F_1$}  
	 \AxiomC{$\vdots$}
	 \noLine
	 \UnaryInfC{$D_2: Re_j \vdash F_2$} 
	 \RightLabel{$* R$} 
	 \BinaryInfC{$D: \Gamma(\Gamma_a; \Gamma_a) \vdash F_1 * F_2$} 
	 \DisplayProof 
	 }
       \end{center} 
       By Proposition \ref{lemma_lemma}, assume that {\small $(Re_1, Re_2) \in \code{RepCandidate}(\Gamma(\Gamma_a; \Gamma_a))$}
       without 
       loss of generality. Then by the definition of 
       {\small $\dpreceq$} it must be that 
       either  (1) 
       {\small $\Gamma_a; \Gamma_a$} preserves completely 
       in {\small $Re_1$} or {\small $Re_2$}, or 
       (2) it remains neither in {\small $Re_1$} nor in 
       {\small $Re_2$}. 
       If {\small $\Gamma_a; 
       \Gamma_a$}
   is preserved in {\small $Re_1$} (or {\small $Re_2$}), then induction hypothesis on the premise that has 
   {\small $Re_1$} (or {\small $Re_2$}) and then 
         {\small $* R$} 
        conclude; otherwise, it is trivial to see that only 
       a single {\small $\Gamma_a$} needs to be present in {\small $D$}.  
Details are in Appendix C.
\end{proof}
\hide{
\begin{lemma}[Necessary implicit weakening through $* R_{\LBIT'}$]   
  In an application of $* R_{\LBIT'}$ on a derivable \LBIT' sequent 
  $D: \Gamma \vdash F * G$, such that  
  $\Gamma \equiv \mathcal{A}^{top} \equiv 
  F_1; \dots; F_m; \mathcal{M}_1; \dots; \mathcal{M}_n$, 
  implicit weakening on $\mathcal{A}^{\top}$ must take place 
 
  \label{}
\end{lemma}
\begin{lemma}[Weakening order]
  Given a derivation in \LBIT': 
  \begin{center}
     \Axioms{$D_1: Re_1 \vdash F$} 
     \Axioms{$D_2: Re_2 \vdash G$} 
     \RightLabel{$* R_{\LBIT}$} 
     \BinaryInfC{$D: \Delta \vdash F * G$} 
     \DisplayProof
   \end{center}  
   if $D$ is derivable, and $D_1$ and $D_2$ 
   are both derivable,  
    
\end{lemma} 
}    
\subsection{Equivalence of {\LBIN} to \LBI} 
\begin{theorem}[Equivalence between {\LBIN} and {\LBI}] 
  {\small $D: \Gamma \vdash F$} is \LBIN-derivable if and only if 
  it is \LBI-derivable.
  \label{equivalence_LBI3_LBI}
\end{theorem}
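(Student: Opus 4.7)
The plan is to prove the equivalence in both directions by induction on derivation depth, appealing in one direction to the admissibility results already established (Propositions \ref{admissible_weakening_LBI3}, \ref{admissible_EA2}, \ref{admissible_eqant_LBI3} and Theorem \ref{admissible_contraction_LBI3}) and in the other to a step-by-step simulation of each {\LBIN} rule inside \LBI.

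For the direction from \LBI-derivability to \LBIN-derivability, I induct on the depth of the \LBI-derivation of $D$. The four structural rules of \LBI{} are absorbed directly by admissibility: Wk L by Proposition \ref{admissible_weakening_LBI3}, Ctr L by Theorem \ref{admissible_contraction_LBI3}, and the two bidirectional equivalences $EqAnt_{1\:\LBI}$, $EqAnt_{2\:\LBI}$ by Proposition \ref{admissible_eqant_LBI3}. Each logical rule of \LBI{} is a special case of the corresponding {\LBIN} rule: for example, in $* R$ one picks $\Gamma' = \Gamma_1, \Gamma_2$ and $(Re_1, Re_2) = (\Gamma_1, \Gamma_2)$, which lies in $\code{Candidate}(\Gamma')$ by reflexivity of $\preceq$; in $\supset L$ and ${\text{\wand}} L$ the essences $\mathbb{E}(\widetilde{\Gamma_1}; A)$ are taken in their trivial form with $\widetilde{\Gamma_1}$ empty. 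The one case that needs a separate argument is \LBI's axiom $id: F \vdash F$ for a non-atomic $F$, which is handled by the standard induction on the structure of $F$ using {\LBIN}'s logical rules on both sides (e.g., $\wedge L$ then $\wedge R$ for $F \wedge G$, and analogously for the other connectives).

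For the direction from \LBIN-derivability to \LBI-derivability, I again induct on derivation depth and simulate each {\LBIN} rule by a sequence of \LBI{} steps. The essence $\mathbb{E}(\widetilde{\Gamma_1}; A)$, which may insert extra $(\mtop; \cdot)$ bunches multiplicatively, is unfolded by combining $EqAnt_{2\:\LBI}$ (to introduce the $\mtop$ components on the multiplicative side) with Wk L (to populate the additive $\widetilde{\Gamma_2}$ parts). The candidate pair $(Re_1, Re_2) \in \code{Candidate}(\Gamma')$ used in $* R$ and ${\text{\wand}} L$ is simulated by first applying \LBI's $* R$/${\text{\wand}} L$ to $Re_1, Re_2$ and then using Wk L to reach $\Gamma'$, which is possible precisely because $Re_1, Re_2 \preceq \Gamma'$. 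The implicit contraction built into {\LBIN}'s $\supset L$ and ${\text{\wand}} L$, where the principal formula reappears in a premise, is discharged by a final application of Ctr L.

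The main obstacle is the reverse-direction simulation of {\LBIN}'s ${\text{\wand}} L$, because a single step there combines all three features---essence unfolding, candidate-induced weakening, and implicit contraction---at once, and they must be interleaved so that the intermediate \LBI{} sequents remain well-formed. I plan to perform the \LBI{} moves in a fixed order: first strip the essence from the principal side via $EqAnt_{2\:\LBI}$ plus Wk L; then apply \LBI's ${\text{\wand}} L$ using $Re_i$ as the resource for the left premise and the second {\LBIN} premise (with essence stripped) for the right premise; then apply Wk L to expand the multiplicative bunch $Re_1, Re_2$ into $\widetilde{\Gamma'}$ inside the ambient context $\Gamma(-)$; and finally apply Ctr L to merge the now-duplicated copies of $F \text{\wand} G$. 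Analogous but simpler bookkeeping handles $\supset L$ (with $\Gamma_1' = \widetilde{\Gamma_1}; F \supset G$ as the shared antecedent). Careful tracking of the outer context $\Gamma(-)$ and of the additive/multiplicative wrappings throughout is what makes this the most delicate case of the proof.
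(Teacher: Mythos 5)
Your proof takes essentially the same route as the paper's (Appendix D): the {\LBIN}-to-{\LBI} direction by showing each {\LBIN} rule derivable in {\LBI} (essences unfolded via $EqAnt_{2\:\LBI}$ plus $Wk\,L$, candidates via $Wk\,L$, implicit contraction via $Ctr\,L$), and the {\LBI}-to-{\LBIN} direction by induction on derivation depth, discharging $Wk\,L$, $Ctr\,L$ and $EqAnt_{1,2\:\LBI}$ through Propositions \ref{admissible_weakening_LBI3}, \ref{admissible_eqant_LBI3} and Theorem \ref{admissible_contraction_LBI3}. One claim needs correcting, though: {\LBI}'s $\supset L$ and ${\text{\wand}} L$ are \emph{not} special cases of the corresponding {\LBIN} rules, even when the essences are taken trivially, because the {\LBIN} rules carry the principal formula and its enclosing bunch up into the premises. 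From the induction hypothesis you obtain, say, $\Gamma_1 \vdash F$ and $\Gamma(\Gamma_1; G) \vdash H$, whereas {\LBIN}'s $\supset L$ requires $\Gamma_1; F {\supset} G \vdash F$ and $\Gamma(\Gamma_1; G; F {\supset} G) \vdash H$; similarly the right premise of {\LBIN}'s ${\text{\wand}} L$ must acquire the extra additive copy of $(\Gamma_1, F {\text{\wand}} G)$ and the multiplicative component $\widetilde{Re_j}$ next to $G$. The gap is closed by one application of the depth-preserving weakening admissibility (Proposition \ref{admissible_weakening_LBI3}) and, where $\mtop$ components appear, $EA_2$ admissibility (Proposition \ref{admissible_EA2}) to the inductively obtained premises before invoking the {\LBIN} rule --- exactly what the paper does, and tools you already list, so this is a repairable imprecision rather than a flaw. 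Your separate induction on the structure of $F$ to recover {\LBI}'s axiom $F \vdash F$ inside {\LBIN} is a detail the paper glosses over, and it is indeed needed since $id_{\LBIN}$ is restricted to atomic consequents.
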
 
\begin{proof}   
Into the \emph{only if} direction, assume that {\small $D$} 
  is \LBIN-derivable, and then show that there is 
  a \LBI-derivation for each {\LBIN} derivation. But this is 
  obvious because each {\LBIN} inference rule is derivable 
  in \LBI.\footnote{Note that {\small $EA_2$} 
      is \LBI-derivable with {\small $Wk L_{\LBI}$} and {\small $EqAnt_{2\:\LBI}$}.} \\
  \indent Into the \emph{if} direction, assume that {\small $D$}
  is \LBI-derivable, and then show that there is a corresponding 
  \LBIN-derivation to each {\LBI} derivation 
  by induction on the derivation depth of 
  {\small $D$}. Details are in Appendix D.
 \end{proof}  
\subsection{{\LBIN} Cut Elimination} 
Cut is admissible in [{\LBIN} + \Cut]. 
As a reminder (although already stated under 
Figure 1) {\Cut} is the following rule: 
\begin{center}
  \AxiomC{$\Gamma_1 \vdash F$} 
  \AxiomC{$\Gamma_2(F) \vdash G$} 
  \RightLabel{\Cut} 
  \BinaryInfC{$\Gamma_2(\Gamma_1) \vdash G$} 
  \DisplayProof
\end{center}
Just as in the case of intuitionistic 
logic, cut admissibility proof for a contraction-free
{\BI} sequent calculus is simpler than that for {\LBI} 
\cite{Arisaka2012-4}. Since we have already proved 
depth-preserving weakening admissibility, the following 
context sharing cut, $\Cut_{CS}$, is easily verified derivable in {\LBIN} + \Cut: 
{\small
\begin{center}
    \AxiomC{$\widetilde{\Gamma_3}; \Gamma_1 \vdash F$} 
  \AxiomC{$\Gamma_2(F; \Gamma_1) \vdash H$} 
  \RightLabel{$\Cut_{CS}$} 
  \BinaryInfC{$\Gamma_2(\widetilde{\Gamma_3}; \Gamma_1) \vdash H$} 
  \DisplayProof
\end{center}
} 
\noindent  where {\small $\Gamma_1$} appears on both of the premises. 
{\small $F$} in the above cut instance which is upward introduced 
on both premises is called the cut formula (for the cut instance).
The use of $\Cut_{CS}$ is just because 
it simplifies the cut elimination proof. \\
\indent For the proof, we recall the standard notations of 
cut rank and cut level.   
\begin{definition}[Cut level/rank]
  Given a cut instance in a closed derivation:{\ }\\ 
  \begin{center} 
      \scalebox{0.9}{ 
    \AxiomC{$D_1: \Gamma_1 \vdash F$} 
    \AxiomC{$D_2: \Gamma_2(F) \vdash H$} 
    \RightLabel{$\Cut$} 
    \BinaryInfC{$D_3: \Gamma_2(\Gamma_1) \vdash H$} 
    \DisplayProof 
}
  \end{center}
  \noindent The level of the cut instance is: {\small $\derivationdepth(D_1) + 
  \derivationdepth(D_2)$}, where {\small $\derivationdepth(D)$} denotes 
  derivation depth of {\small $D$}. The rank of the cut instance 
  is the size of the cut formula {\small $F$}, {\small $\formulasize(F)$},
  which is defined as follows: 
  \begin{itemize}
    \item it is 1 if {\small $F$} is a nullary logical 
        connective or a propositional variable. 
    \item it is {\small $\formulasize(F_1) + \formulasize(F_2) + 1$} 
      if {\small $F$} is in the form: {\small $F_1 \bullet F_2$} 
      for {\small $\bullet \in \{\wedge, \vee, \supset, 
      *, \text{\wand}\}$}.
  \end{itemize}
\end{definition}

\begin{theorem}[Cut admissibility in \LBIN]
  {\Cut} is admissible within {\LBIN} + \Cut.
  \label{cut_elimination}
\end{theorem}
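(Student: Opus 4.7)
The plan is the standard double induction on the pair $(\formulasize(F),\ \derivationdepth(D_1) + \derivationdepth(D_2))$ ordered lexicographically, where $F$ is the cut formula and $D_1, D_2$ are the two premises. Because depth-preserving weakening is available from Proposition \ref{admissible_weakening_LBI3}, the ordinary Cut rule can be replaced throughout the argument by the context-sharing cut $\Cut_{CS}$ mentioned just above the theorem, so it suffices to show that any topmost occurrence of $\Cut_{CS}$ can be eliminated, yielding by iteration a cut-free derivation.

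The case analysis splits into three kinds. First, the \emph{axiom cases}: if either premise is the conclusion of $id$, $\abot L$, $\top R$, or $\mtop R$, the conclusion of the cut either is already an instance of an axiom or follows from the other premise by depth-preserving weakening (Proposition \ref{admissible_weakening_LBI3}) and admissibility of $EA_2$ (Proposition \ref{admissible_EA2}). Second, the \emph{non-principal cases}: the cut formula is not the principal formula of the topmost rule in at least one premise. In that premise we permute the cut upward past that rule. The cut rank does not change but the cut level strictly decreases, so the inductive hypothesis applies to the new cuts, and re-applying the permuted rule reconstructs the original conclusion. The only subtlety here is that the rules $\supset L$, $\text{\wand} L$ refer to an essence $\mathbb{E}(\widetilde{\Gamma_1}; F \supset G)$ (resp.\ with $\text{\wand}$), and $* R$, $\text{\wand} L$ use $\code{Candidate}$; but substitution of a structure into an essence is again an essence, and Proposition \ref{lemma_lemma} together with depth-preserving weakening guarantees that the relevant $\code{Candidate}$ splittings remain available after permutation. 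Third, the \emph{principal cases}: the cut formula is principal in both premises. For each connective $\wedge,\vee,\supset,*,\text{\wand}$ we replace the cut on $F$ by one or two cuts on its strict sub-formulas, invoking the inductive hypothesis on the strictly smaller cut rank. For $\supset$ and $\text{\wand}$ the reduction produces a sub-structure duplication in the antecedent, which is collapsed by contraction admissibility (Theorem \ref{admissible_contraction_LBI3}).

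The main obstacle will be the principal cases for the multiplicatives, $*$ and especially $\text{\wand}$. For $\text{\wand}$ the left cut premise ends in $\text{\wand} R$ with premise $\Gamma,F_1\vdash F_2$, while the right cut premise ends in $\text{\wand} L$, which refers to the essence $\mathbb{E}(\widetilde{\Gamma_1}; F_1 \text{\wand} F_2)$ and picks a resource split $(Re_i,Re_j)$ from $\code{Candidate}(\Gamma')$. One must show that, after substituting $\Gamma$ for the $\text{\wand}$-formula, a suitable $\code{Candidate}$ split still exists on the new antecedent, then perform two smaller cuts (on $F_1$ and $F_2$) and collapse the resulting duplication via Theorem \ref{admissible_contraction_LBI3}. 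Proposition \ref{lemma_lemma} reduces this bookkeeping to $\code{RepCandidate}$, making the required split essentially unique. The case for $* R$ versus $* L$ is analogous but cleaner, since $* L$ has no essence and no $\code{Candidate}$. In both multiplicative cases the new notations $\mathbb{E}(\cdot)$ and $\code{Candidate}$ do exactly the work they were designed for: they absorb precisely the implicit $EA_2$ and weakening that would otherwise prevent the two sides of the principal cut from meeting.
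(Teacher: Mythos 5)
Your proposal is correct and follows essentially the same route as the paper: induction on cut rank with a sub-induction on cut level, the context-sharing cut $\Cut_{CS}$ justified by depth-preserving weakening, the axiom/non-principal/principal case split, and depth-preserving weakening, $EA_2$-admissibility and contraction admissibility to close the principal $\supset$ and $\text{\wand}$ cases. The only under-specified point is that the principal implication cases also require auxiliary cuts on the \emph{un-decomposed} implication against the premises of $\supset L$ / $\text{\wand} L$ (same rank, strictly smaller level) before the sub-formula cuts can be formed, but your lexicographic ordering on (rank, level) already licenses exactly these.
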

\begin{proof} 
By induction on the cut rank and a sub-induction 
  on the cut level, by making use of $\Cut_{CS}$. 
  Details are in Appendix E.
     \qed
\end{proof} 
\hide{
\section{Towards Truly Contraction-Free BI Sequent Calculus - A Preliminary Investigation}   
In this section, an improvement over {\LBITH} is pondered upon 
based on a well-known implicit contraction elimination 
in the standard intuitionistic logic \cite{Dyckhoff92}.
Though {\LBITH} is contraction-free 
in the sense that an explicit structural rule of 
contraction does not appear within, 
the termination property of the calculus is not immediately 
apparent. This is due to, though 
only implicit, contraction that is internalised in the two implications.
We pave a way towards the complete {\BI} decidability analysis within
sequent calculus by demonstrating an extension of 
the said technique to multiplicative connectives 
{\small ``$\top^*$''} and {\small ``$*$''}. 
The following sub-calculus $\LBITH_1$ is specifically examined.
\begin{definition}[$\LBITH_1$] $\LBITH_1$
  comprises the following inference rules: 
  \begin{description}
    \item[Axioms: ] {\small $\quad\qquad id_{\LBITH} \quad 
      \top^* R_{\LBITH} \quad \bot L_{\LBITH} 
      \quad \top R_{\LBITH}$} 
    \item[Other logical rules: ] {\small $\: \qquad\qquad\qquad\quad 
      \wedge L_{\LBITH} \quad \wedge R_{\LBITH} \quad
      \vee L_{\LBITH} \quad \vee R_{\LBITH}$\\
      $\quad
      \rightarrow L_{\LBITH} \quad \rightarrow R_{\LBITH}
      \quad * L_{\LBITH} \quad * R_{1,2\: \LBITH} \quad 
      \top L \quad \top^* L$}
    \item[Structural rule: ] {\small $\qquad\quad\qquad\quad 
      EqAnt''_{2\: \LBITH} \quad \O L_{\LBITH}$} \\
  \end{description}
\end{definition}  
\hide{
\begin{definition}[$\LBITH_2$] $\LBITH_2$ 
  comprises the following inference rules: 
  \begin{description}
    \item[Axioms: ] $\quad id_{\LBITH} \quad 
      \top^* R_{\LBITH} \quad \bot L_{\LBITH} 
      \quad \top R_{\LBITH}$
    \item[Other logical rules: ]$\: \quad\qquad\qquad\quad 
      \wedge L_{\LBITH} \quad \wedge R_{\LBITH} \quad 
      \vee L_{\LBITH} \quad \vee R_{\LBITH}$\\
      $\quad * L_{\LBITH} \quad * R_{1,2\: \LBITH} 
      \quad \text{\wand} L_{1,2,3,4\: \LBITH} \quad 
      \text{\wand} R_{\LBITH}$ 
    \item[Structural rules: ] $\quad\qquad\quad 
      EqAnt''_{2\: \LBITH} \quad \O L_{\LBITH}$\\
  \end{description}
\end{definition}    
}
In accordance with the restriction, 
we assume the availability of only those connectives 
valid in $\LBITH_1$ to all the sequents appearing 
in a $\LBITH_1$ derivation.
\subsection{Contraction elimination in $\LBITH_1$}     
Our intention here is to prove that 
replacement of 
{\small $\rightarrow L_{\LBITH_1}$} with those in Figure 
\ref{LBI4_calculus} is sound and complete, to render $\LBITH_1$ 
nearly almost contraction-free even implicitly.\\
\subsubsection{Preparation}{\ }\\
We define three concepts for the main proofs: (1) 
non-theorem formulas; (2)
the derivation length; and (3) irreducible 
$\LBITH_1$ sequents.  
\begin{definition}[Non-theorem formulas]
  We call a $\LBITH_1$ formula $F$ a non-theorem 
  formula if it satisfies {\small $\O \not\vdash F$}, and 
  define it by {\small $F^{\star}$} to mean such.\\
\end{definition}
\begin{definition}[Derivation length]
  Given a derivation of a sequent {\small $D$},
  denoted as {\small $\Pi(D)$}, the derivation length 
  between {\small $D$} and some sequent {\small $D'$} that occurs 
  in {\small $\Pi(D)$} 
  $(${\small $\derivationlength(D, D')$}$)$ is inductively 
  defined as follows: 
  \begin{itemize}
    \item {\small $\derivationlength(D, D') = 0$} if 
      {\small $D'$} coincides with {\small $D$}. 
    \item {\small $\derivationlength(D, D') = 
      \derivationlength(D'', D') + 1$} if there exists 
      a sequent transition from {\small $D$} into {\small $D''$} and from 
      {\small $D''$} into {\small $D'$} such that {\small $D \leadsto D'' 
      \leadsto^* D'$}.\\
  \end{itemize}
\end{definition} 
\begin{definition}[Irreducible $\LBITH_1$ sequents]{\ }\\
  An antecedent structure {\small $\Gamma$} in $\LBITH_1$
  is said to be irreducible 
  if it contains as its substructure none of the following:  
  (1) {\small $p; p {\rightarrow} G$}
  (2) {\small $\O; \top^* {\rightarrow} G$}
  (3) {\small $\top {\rightarrow} G$}
  (4) {\small $\O; \O$} (5) {\small $\Delta, \O$} (6) {\small $\bot$} (7) {\small $\top$} 
  (8) {\small $\top^*$}
  (9) {\small $H_1 \wedge H_2$} (10) {\small $H_1 \vee H_2$} (11) {\small 
  $H_1 * H_2$}.
  A $\LBITH_1$ sequent {\small $D: \Gamma \vdash F$} is said 
  to be irreducible if {\small $\Gamma$} is irreducible. 
  \label{some_some_definition}
\end{definition}  
Notice that a $\LBITH_1$ sequent {\small $D$} that contains 
any one of the 11 structures can be reduced via the inversion lemma (Lemma 
\ref{LBITH_inversion_lemma}) such that {\small $D$} is $\LBITH_1$-derivable
if and only if the reduced 
sequent(s) is (are). \\
\subsubsection{A subformula 
property in {\small $\supset L_p$}, 
{\small $\supset L_{\top^*}$} and {\small $\supset L_{* 1}$}}{\ }\\
We now show that any
{\small $\supset L_{\LBITH_1}$} application on {\small ``$p {\supset} G$''} 
or {\small ``$\top^* {\supset} G$''} can be deferred 
until {\small ``$p;p {\supset} G$''} or resp.
{\small ``$\O; \top^* {\supset} G$''} appears on the antecedent 
part. Such also 
is the case 
for {\small $(F_1^{\star} * F_2^{\star}) {\supset} F_3$} under 
a set of conditions.
\begin{lemma}
  Any $\LBITH_1$-derivable irreducible sequent {\small $D: \Gamma 
  \vdash H$} has a closed derivation in which 
  the principal of the last rule applied is neither 
  {\small $p {\supset} G$} nor {\small $\top^* {\supset} G$}
  (on the 
  antecedent part of {\small $D$}), nor {\small $(F^{\star}_1 * F^{\star}_2) {\supset} G$} 
  if not all of the following conditions satisfy: 
\begin{itemize}
    \item {\small $[D: \Gamma(\Delta'; (F_1^{\star} * 
      F_2^{\star}) {\supset} G) \vdash 
      H] \leadsto_{\supset L}$\\
      $[D_1: 
      \Delta'; (F_1^{\star} * F_2^{\star}) {\supset} G
      \vdash F_1^{\star} * F_2^{\star}]$}
    \item {\small $D_1 \leadsto_{* R_1} [D_2: 
      Re_1 \vdash F_1^{\star}]$}
    \item {\small $D_1 \leadsto_{* R_1} [D_3: 
      Re_2 \vdash F_2^{\star}]$} 
    \item {\small $D_2$} and {\small $D_3$} (and hence also 
      {\small $D_1$}) are both $\LBITH_1$-derivable.
  \end{itemize}   
  \label{lemma_observation}
\end{lemma} 
\begin{proof}
   By contradiction. As in \cite{Dyckhoff92}, we assume without 
   loss of generality that inference rules to apply 
   in the leftmost branch were cleverly chosen such that
   the derivation length between {\small $D$} and 
   the conclusion sequent of an axiom in the leftmost branch
   be shortest.
   Suppose, by way
   of showing contradiction, that there cannot exist 
   any other shorter derivations of {\small $D$} than the ones ending in 
   {\small $\supset L$} for which a formula 
   in the form either {\small $p {\supset} G$},
   {\small $\top^* {\supset} G$} or {\small $(F^{\star}_1 * 
   F^{\star}_2) {\supset} G$} that 
   does not satisfy all the four given conditions 
   is its principal, 
   then 
   {\small $\Pi(D)$}, a derivation of $D$, looks like: 
   \begin{center}
     {\small 
     \AxiomC{$\vdots$} 
     \noLine  
     \UnaryInfC{$D_3$}   
     \AxiomC{$\vdots$}
     \noLine
     \UnaryInfC{$D_4$} 
     \RightLabel{\textbf{Inf}}
     \BinaryInfC{$D_1: \Delta; B {\supset} G \vdash B$} 
     \AxiomC{$\vdots$} 
     \noLine
     \UnaryInfC{$D_2: \Gamma(\Delta; G) \vdash H$} 
     \RightLabel{$\supset L$} 
     \BinaryInfC{$D: \Gamma(\Delta; B {\supset} G) 
     \vdash H$} 
     \DisplayProof
     }
   \end{center} 
   where {\small $B$} is {\small $p$} if the principal is 
   {\small ``$p {\supset} G$''}; is {\small $\top^*$} if 
   {\small ``$\top^* 
   {\supset} G$''}; 
   or is {\small ``$F_1^{\star} * F_2^{\star}$''} if {\small ``$(F_1^{\star} 
   * F_2^{\star}) {\supset} G$''}.  As {\small $D$} 
   is irreducible, so is {\small $D_1$} which, therefore, 
   cannot be the conclusion sequent of an axiom.  
   If {\small $B$} is either {\small $p$} or {\small $\top^*$}, then 
   the consequent formula of {\small $D_1$} can be active 
   only for an axiom. Likewise, due to the given 
   set of conditions, if {\small $B$} is {\small $F^{\star}_1 * F^{\star}_2$} 
   in {\small $D_1$}, its consequent part cannot be active 
   for \textbf{Inf}. Therefore, \textbf{Inf} is known to be
   {\small $\supset L$}.
   Moreover, as 
   the leftmost branch is supposed shortest, the principal 
   for \textbf{Inf} must be from among those constituents 
   residing in the same additive structural layer as 
   the {\small $B {\supset} G$}. Furthermore, that the leftmost 
   branch is shortest has to dictate that the principal 
   for \textbf{Inf} is  in neither of the following:  
   {\small ``$p_i {\rightarrow} G_i$''}, {\small ``$\top^*_j {\rightarrow} G_j$''} 
   or {\small ``$(F^{\star}_{k1} * F^{\star}_{k2}) {\rightarrow}
   G_k$''} 
   for some propositional variable {\small $p_i$}, some 
   multiplicative logical unit {\small $\top^*_j$}, 
   some {\small $F^{\star}_{k1} * F^{\star}_{k2}$} (satisfying 
   the same set of the conditions as stated) and 
   some formula {\small $G_i$}, {\small $G_j$} or {\small $G_k$}. \\
   \indent These points taken into account,  
   {\small $D$, $D_1$, $D_2$, $D_3$} and {\small $D_4$} are actually 
   seen taking the following forms for some other formula {\small $F$}:
   \begin{itemize}
     \item {\small $D: \Gamma(\Delta'; F {\rightarrow} G'; B {\rightarrow} G)
       \vdash H$}
     \item {\small $D_1: \Delta'; F {\rightarrow} G'; B {\rightarrow}
       G \vdash B$}
     \item {\small $D_2: \Gamma(\Delta'; F {\rightarrow} G'; G) 
       \vdash H$}
     \item {\small $D_3: \Delta'; F {\rightarrow} G'; B {\rightarrow}
       G \vdash F$} 
     \item {\small $D_4: \Delta'; G'; B {\rightarrow} G \vdash B$}
   \end{itemize}
   But, then, this perforce implies the existence 
   of an alternative derivation {\small $\Pi'(D)$}  which 
   results by permuting
   {\small $\Pi(D)$}: 
   \begin{center}    
     {\small 
     \AxiomC{$\vdots$}
     \noLine
     \UnaryInfC{$D_3$} 
     \AxiomC{$\vdots$}
     \noLine
     \UnaryInfC{$D_4$}
     \AxiomC{$\vdots$}
     \noLine
     \UnaryInfC{$D_2': \Gamma(\Delta'; G'; G) \vdash H$}
     \RightLabel{$\rightarrow L$}
     \BinaryInfC{$\Gamma(\Delta'; G'; B {\rightarrow} G)
     \vdash H$}
     \RightLabel{$\rightarrow L$}
     \BinaryInfC{$D$}
     \DisplayProof 
     }
   \end{center}
   {\small $D'_2$} can be shown derivable  
   from {\small $D_2$} via the inversion lemma (Lemma \ref{LBITH_inversion_lemma}). A direct contradiction to 
   the supposition has been drawn, for the leftmost 
   branch in {\small $\Pi'(D)$} is shorter.
\end{proof} 
{\ }\\      
From Lemma \ref{lemma_observation} follows an
observation.
\begin{lemma}  
  In $\LBITH_1$, {\small $\rightarrow L_{*}'$} as below is admissible 
  in {\small $\rightarrow L_{*1}$}. \\
  \begin{center}  
    \scalebox{0.9}{
     \AxiomC{$D_1: \Delta; (F_1^{\star} * F_2^{\star}) 
     {\rightarrow} G \vdash F_1^{\star} * F_2^{\star}$} 
     \AxiomC{$D_2: \Gamma(\Delta; G) \vdash H$} 
     \RightLabel{$\rightarrow L_{*}'$} 
     \BinaryInfC{$D: \Gamma(\Delta; (F_1^{\star} * 
     F_2^{\star}) {\rightarrow} G) \vdash H$} 
     \DisplayProof  
     }
  \end{center} 
  \label{last_proposition}
\end{lemma}
\begin{proof} 
  Any application of {\small $\rightarrow L$} with 
  {\small $(F_1^{\star} * F_2^{\star}) {\rightarrow} G$} as its principal 
can be deferred until all the four conditions hoisted in 
Lemma \ref{lemma_observation} are satisfied. Under 
the assumption, there exists a pair of sequent transitions 
via {\small $* R_1$}
from the left premise sequent {\small $D_1$} of {\small $\rightarrow L_{*}'$} into {\small $D_2$} and {\small $D_3$} such that (1) {\small 
$D_1 \leadsto_{* R_1}  
D_2\:$} ; (2) {\small $D_1 \leadsto_{* R_1} D_3\:$} ; and (3) both {\small $D_2$} and {\small $D_3$}
are $\LBITH_1$-derivable. Then, because the antecedent part 
of {\small $D_1$} is not a multiplicative structural layer (checked 
by eye inspection on {\small $\rightarrow L_{*}'$}) nor can it be
a formula {\small $(F_1^{\star} * F_2^{\star}) 
{\rightarrow} G$} (otherwise 
{\small $D_1$} is not $\LBITH_1$-derivable), 
it must be an additive structural layer, and moreover, 
it must be such that there exists at least one multiplicative 
structural layer as its constituent (because 
the four conditions in Lemma \ref{lemma_observation} are assumed satisfied). By the process 
of a maximal {\small $Re_1$/$Re_2$} generation (\emph{c.f.} Lemma 
\ref{lemma_lemma} 
and Corollary \ref{maximal_pair}),
it cannot be 
the case that two constituents 
of the outermost additive structural layer be retained 
simultaneously.
And so there could be only one from among the $\mathcal{M}$ 
constituents which is to remain after a sequence of {\small $Wk L_{1}$}
(\emph{c.f.} Lemma \ref{lemma_lemma}) 
so that the result be a multiplicative 
structural layer to appear at the outermost structural layer.
But {\small $(F_1 * F_2) {\rightarrow} F_3$} is not a multiplicative 
structural layer.
\end{proof}    
\begin{proposition}
  Replacement of {\small $\rightarrow L_{\LBITH_1}$} with 
  those in Figure \ref{LBI4_calculus} is sound and complete.
\end{proposition} 
\begin{proof}
   One direction: to assume inference rules 
   in Figure \ref{LBI4_calculus} and to show 
   corresponding derivations with {\small $\rightarrow L_{\LBITH_1}$},
   is trivial. Into the other direction, 
   we consider what the actual instance $F$ is 
   in the principal {\small $F {\rightarrow} G$}, and 
   turn to Lemma \ref{lemma_observation} and Lemma \ref{last_proposition}, 
   for 
   {\small $\rightarrow L_{* 1}$}, {\small $\rightarrow L_{p}$} and 
   {\small $\rightarrow L_{\top^*}$}. {\small $\rightarrow L_{\top}$}
   is straightforward. For the other cases, Dyckhoff92 \cite{Dyckhoff92}.
\end{proof} 
\vspace{0.2cm}
\subsubsection{Remaining issues}{\ }\\ 
We have extended \cite{Dyckhoff92} to multiplicative 
connectives in the previous subsection, rendering $\LBITH_1$
nearly almost contraction-free even implicitly.
Nevertheless, there still is a certain challenge in establishing 
a subformula property for {\small $\rightarrow L_{*2}$}. 
The difficulty we see is as follows\footnote{A correction to an error 
in the submitted
paper here to make the example effective according to the intention 
of the authors.}:
\begin{enumerate} 
  \item Suppose a partial derivation tree comprising the following 
    sequents:
    \begin{itemize} 
      \item
    {\small $[D: \Gamma(\Delta; p {\rightarrow} (F_4 * F_5)
    ; (F_1 * (p {\rightarrow} F_2)) {\rightarrow} 
    G) \vdash H]$} 
  \item {\small $[D_1: \Delta; p {\rightarrow} (F_4 * F_5); 
    (F_1 * (p {\rightarrow} F_2)) {\rightarrow} 
    G) \vdash F_1 * (p {\rightarrow} F_2)]$}  
  \item {\small $[D_2: \Gamma(\Delta; p {\rightarrow} (F_4 * F_5)
    ; G) \vdash H]$}
  \item {\small $[D_3: \O \vdash F_1]$}
  \item 
    {\small $[D_4: \Delta; p {\rightarrow} (F_4 * F_5)
    ; (F_1 * (p {\rightarrow} F_2)) {\rightarrow} 
    G \vdash p \rightarrow F_2]$}
  \item 
    {\small $[D'_4: \Delta; p; p{\rightarrow} (F_4 * F_5); 
    (F_1 * (p {\rightarrow} F_2)) {\rightarrow} G \vdash F_2]$} 
  \item 
    {\small $[D''_4: \Delta; p; F_4 * F_5; 
    (F_1 * (p {\rightarrow} F_2)) {\rightarrow} G \vdash F_2]$} 
  \item 
    {\small $[D'''_4: \Delta; p; (F_4, F_5); 
    (F_1 * (p {\rightarrow} F_2)) {\rightarrow} G \vdash F_2]$}  
  \item
    {\small $[D_5: \Delta; p; (F_4, F_5);
    (F_1 * (p {\rightarrow} F_2)) {\rightarrow} G \vdash 
    F_1 {*} (p {\rightarrow} F_2)]$}  
  \item 
    {\small $[D_6: \Delta; p; (F_4, F_5); 
    G \vdash F_2]$}
  \item   
    {\small $[D_7: F_4 \vdash F_1]$} 
  \item 
    {\small $[D_8: F_5 \vdash p {\rightarrow} F_2]$}
\end{itemize} 
such that 
\begin{center} 
  \hspace{-0.5cm}
  \scalebox{0.85}{ 
   \AxiomC{}
   \doubleLine
   \UnaryInfC{$D_3$}  
   \AxiomC{$\vdots$}
   \noLine
   \UnaryInfC{$D_7$}  
   \AxiomC{$\vdots$}
   \noLine
   \UnaryInfC{$D_8$}
   \RightLabel{$* R_1$}
   \BinaryInfC{$D_5$}  
   \AxiomC{$\vdots$}
   \noLine
   \UnaryInfC{$D_6$}
   \RightLabel{$\rightarrow L_{*2}$}
   \BinaryInfC{$D'''_4$}
   \dottedLine
   \RightLabel{Inversion}
   \UnaryInfC{$D''_4$}
   \dottedLine
   \RightLabel{Inversion}
   \UnaryInfC{$D'_4$} 
   \dottedLine
   \RightLabel{Inversion}
   \UnaryInfC{$D_4$} 
   \RightLabel{$* R_2$}
   \BinaryInfC{$D_1$}   
   \AxiomC{$\vdots$}
   \noLine
   \UnaryInfC{$D_2$}
   \RightLabel{$\rightarrow L_{*2}$}
   \BinaryInfC{$D$}
   \DisplayProof 
   }
\end{center}    
\item In the above derivation, the formula 
  {\small ``$(F_1 * (p {\rightarrow} F_2)) \rightarrow G$''} becomes 
  the principal for {\small $\rightarrow L_{*2}$} twice without 
  a redundancy in the given partial derivation tree.
\item If it were eliminated on {\small $D_1$}, then 
  {\small $\rightarrow L_{*2}$} 
  could not apply on {\small $D'''_4$} for the same formula,
  and could then affect the derivability of {\small $D$}.
\end{enumerate}  
Indeed, the problem is more immanent than just 
within {\small $\rightarrow L_{* 2}$}: although 
{\small $\text{\wand} L_{* 1}$} works as 
intended, it may not be known whether some formula {\small $F$} 
is a non-theorem formula. 
The last analysis step towards the syntactical conclusion of {\BI} 
decidability analysis should first concentrate on gaining 
a finer understanding of the proof-theoretical behaviour of 
$\LBITH_1$. By solving the simpler problem for $\LBITH_1$,  
an extension of the analysis framework to {\LBITH} may well be in scope.
}\section{Conclusion}            
We solved an open problem of structural rule absorption 
in {\BI} sequent calculus. This problem stood unsolved for a 
while. As far back as we can see, the first attempt was made 
in \cite{OHearn03}. References to the problem 
were subsequently made \cite{DonnellyGKMP04,Brotherston10-4,journals/mscs/GalmicheMJP05}. The work 
that came closest to ours is one by Donnelly {\it et al.} \cite{DonnellyGKMP04}. 
They consider weakening absorption in the context of 
forward theorem proving (where weakening than contraction 
is a source of non-termination). 
One inconvenience in their approach, however, is that 
the effect of weakening is not totally isolated from that of contraction: 
it is absorbed into contraction as well as into logical rules.
But then structural weakening is still possible through 
the new structural contraction. Also, 
the coupling of the two structural rules amplifies 
the difficulty of analysis on the behaviour of contraction. 
Further, their work is on a subset of {\BI} without units. In 
comparison, our solution covers the whole \BI.   
And our analysis fully decoupled the effect of structural weakening 
from the effect of structural contraction. {\LBIN} comes 
with no structural rules, in fact. Techniques we used in this work 
should be useful for deriving a contraction-free 
sequent calculus of other non-classical logics coming with 
a non-formula contraction. There are also more recent 
{\BI} extensions in sequent calculus such as \cite{Kamide13-2}, to which this work 
has relevance. 
\bibliographystyle{plain}
\bibliography{references} 
\clearpage  
\section*{Appendix A: Proof of Proposition \ref{admissible_weakening_LBI3}} 
By induction on  derivation depth of {\small $D$}. 
  If it is one, \emph{i.e.} 
  {\small $D$} is the conclusion sequent of an axiom, then so is {\small $D'$}. 
  For inductive cases, assume that the current proposition holds for  
  all the derivations of depth up to $k$.
  It must be now demonstrated 
  that it still holds for derivations of depth $k +1$.
  Consider what the last inference rule is in {\small $\Pi(D)$}.
  \begin{enumerate}  
    \item {\small $\supset L$:  $\Pi(D)$} looks like:
      \begin{center}  
	{\small 
        \AxiomC{$\vdots$}
	\noLine
    \UnaryInfC{$\mathbb{E}(\widetilde{\Gamma_1}; F {\supset} G) \vdash F$}   
	 \AxiomC{$\vdots$} 
	 \noLine 
     \UnaryInfC{$\Gamma(G; \mathbb{E}(\widetilde{\Gamma_1}; F {\supset} 
	 G)) \vdash H$} 
	 \RightLabel{$\supset L$} 
     \BinaryInfC{$\Gamma(\mathbb{E}(\widetilde{\Gamma_1}; F {\supset} G))
	 \vdash H$} 
	 \DisplayProof
	 }
      \end{center} 
      By induction hypothesis on both of the premises, 
      {\small $\mathbb{E}'(\widetilde{\Gamma_1'};  F {\supset} G)
      \vdash F$} and\linebreak
  {\small $\Gamma'(G; \mathbb{E}'(\widetilde{\Gamma_1'}; F {\supset}
      G)) \vdash H$} are 
      both \LBIN-derivable. Here we assume that:\\
      {\small $
          \mathbb{E}'(\widetilde{\Gamma_1'}; F \supset G) \preceq \mathbb{E}(\widetilde{\Gamma_1'}; F \supset G) \preceq \mathbb{E}(\widetilde{\Gamma_1}; F \supset G)$}, and 
      {\small $\Gamma' (-) \preceq \Gamma(-)$}. \\
      Then  
      {\small $\Gamma'(\mathbb{E}'(\widetilde{\Gamma_1'}; 
      F \supset G))
      \vdash H$} is also \LBIN-derivable via 
      {\small $\supset L$}.  
    \item {\small $\text{\wand} L$: $\Pi(D)$} looks like:
      \begin{center} 
	\scalebox{0.9}{ 
	\AxiomC{$\vdots$} 
	\noLine
	 \UnaryInfC{$Re_i \vdash F$}   
	 \AxiomC{$\vdots$}
	 \noLine
     \UnaryInfC{$\Gamma((\widetilde{Re_j}, G); 
         (\widetilde{\Gamma_1}, \mathbb{E}(\widetilde{\Gamma_2}; F \text{\wand} G)))
	  \vdash H$} 
	  \RightLabel{$\text{\wand} L$} 
      \BinaryInfC{$\Gamma(\widetilde{\Gamma_1}, \mathbb{E}(\widetilde{\Gamma_2}; F \text{\wand} 
	  G)) 
	 \vdash H$} 
	 \DisplayProof 
	 }
      \end{center}  
      Assume that {\small $\widetilde{\Gamma_1'} 
          \preceq \widetilde{\Gamma_1}$} and that  
      {\small $\mathbb{E}'(\widetilde{\Gamma_2'}; F {\text{\wand}}
          G) \preceq \mathbb{E}(\widetilde{\Gamma_2'}; F {\text{\wand}}
          G) \preceq \mathbb{E}(\widetilde{\Gamma_2}; F {\text{\wand}} G)$}. Then by induction hypothesis on the right 
      premise 
      sequent,\\ 
      {\small $\Gamma'((\widetilde{Re_j}, G); (\widetilde{\Gamma_1'},
          \mathbb{E}'(\widetilde{\Gamma_2'}; F {\text{\wand}} G)))
          \vdash H$} is \LBIN-derivable. \\
      Then {\small $\Gamma'(\widetilde{\Gamma_1'}, 
          \mathbb{E}'(\widetilde{\Gamma_2'}; F {\text{\wand}} G))
          \vdash H$} is also \LBIN-derivable via
      {\small ${\text{\wand}} L$}. 
    \item Other cases are simpler and similar.  \qed
  \end{enumerate}   
  \section*{Appendix B: Proof of Lemma \ref{LBITH_inversion_lemma}}  
By induction on the derivation depth $k$.  
We abbreviate {\small $(\Gamma(\Gamma_1))(\Gamma_2)$} 
by {\small $\Gamma(\Gamma_1)(\Gamma_2)$}. And we also 
do not explicitly show a tilde on top of a possibly empty 
structure. 
  \begin{enumerate}
    \item For a {\LBIN } sequent {\small $\Gamma(F \wedge G) \vdash H$}, 
       the base case is when it is an axiom, and the proof is trivial.
       For inductive cases, assume that the 
       statement holds true for all the derivation depths 
       up to $k$, and show that it still holds true at 
       $k + 1$. Consider what the last 
       inference rule applied is.  
       \begin{enumerate}
	 	 \item $\vee L$: 
             The derivation ends in: {\ }\\
	   {\small
	   \begin{center}
	     \AxiomC{$\Gamma(F \wedge G)(F_1) \vdash H$}
	     \AxiomC{$\Gamma(F \wedge G)(F_2) \vdash H$} 
	     \RightLabel{$\vee L$}
	     \BinaryInfC{$\Gamma(F \wedge G)(F_1 \vee F_2) \vdash H$}
	     \DisplayProof
	   \end{center} 
	   }{\ }\\
           By induction hypothesis, both 
	   {\small $\Gamma(F; G)(F_1) \vdash H$} and 
	   {\small $\Gamma(F; G)(F_2) \vdash H$} are \LBIN-derivable. 
	   Then {\small $\Gamma(F; G)(F_1 \vee F_2) \vdash H$} as required via 
	   {\small $\vee L$}.  
\item $\wedge L$: Similar, or trivial when the principal 
	   should coincide with {\small $F \wedge G$}.  
	 \item $\supset L$: The derivation 
	   ends in one of the following:{\ }\\ 
	   \begin{center}
	     \scalebox{0.9}{ 
	     \AxiomC{$\mathbb{E}(\Gamma_1(F \wedge G); F_1 \supset G_1) \vdash F_1$}
	     \AxiomC{$\Gamma(G_1; \mathbb{E}(\Gamma_1(F\wedge G); F_1 {\supset} G_1)) \vdash H$}
	      \RightLabel{$\supset L$} 
	      \BinaryInfC{$\Gamma(\mathbb{E}(\Gamma_1(F \wedge G); 
	      F_1 \supset G_1)) \vdash H$}
	      \DisplayProof 
	      }
	      {\ }\\{\ }\\{\ }\\
	      \scalebox{0.9}{
	      \AxiomC{$\mathbb{E}(\Gamma'_1; 
	      F_1 \supset G_1) \vdash F_1$}
	      \AxiomC{$\Gamma'(F \wedge G)(G_1; \mathbb{E}(\Gamma'_1; F_1 {\supset} G_1)) \vdash H$}
	      \RightLabel{$\supset L$}
	      \BinaryInfC{$\Gamma'(F \wedge G)(\mathbb{E}(\Gamma'_1; 
	      F_1 \supset
	      G_1)) \vdash H$}
	      \DisplayProof 
	      }
	   \end{center} 
	   {\ }\\
	   By induction hypothesis, both {\small $\mathbb{E}(\Gamma_1(F; G);
	   F_1 \supset G_1) \vdash 
	   F_1$} and\\ {\small $\Gamma(G_1; \mathbb{E}(\Gamma_1(F; G); F_1
	   {\supset} G_1))
	   \vdash H$} 
	   in case 
	   the former, or \\
	   {\small $\Gamma'(F; G)(G_1; 
	   \mathbb{E}(\Gamma'_1; F_1 {\supset} G_1)) \vdash H$} 
	   in case the latter.\\ Then {\small $\supset L$} (with the untouched
	   left premise if the latter) produces the required result. 
	 \item $* L$: The derivation ends in: {\ }
	   {\small 
	   \begin{center}
	     \AxiomC{$\Gamma(F \wedge G)(F_1, G_1) \vdash H$}
	     \RightLabel{$* L$}
	     \UnaryInfC{$\Gamma(F \wedge G)(F_1 * G_1) \vdash H$}
	     \DisplayProof
	   \end{center} 
	   }{\ }\\
	   By induction hypothesis, {\small 
	   $\Gamma(F; G)(F_1, G_1) \vdash H$}. 
	   Then, {\small $\Gamma(F; G)(F_1 * G_1) \vdash H$} as required via 
	   {\small $* L$}.  
	 \item $\text{\wand} L$: The derivation
	   ends in one of the following, depending on 
	   the location at which {\small $F \wedge G$} appears. 
	   In the below inference steps, we assume that 
	   the particular formula {\small $F \wedge G$} 
       occurs in {\small $Re_{(i, j)}(F \wedge G)$} as 
       the focused substructure, but not in {\small $Re_{(i, j)}$}.\footnote{
	   Note, however, that this does not preclude 
	   occurrences of {\small $F \wedge G$} in case 
	   it occurs multiple times in the conclusion sequent.}
	   \\
	   \begin{center}  
            \scalebox{0.9}{ 
	    \AxiomC{$Re_i \vdash F_1$} 
	    \AxiomC{$\Gamma((Re_j, G_1); (\Gamma', \mathbb{E}(
	    \Gamma_1(F \wedge G); F_1 \text{\wand} G_1)))
	    \vdash H$} 
	    \BinaryInfC{$\Gamma((\Gamma', \mathbb{E}(\Gamma_1(F \wedge G); 
	    F_1 \text{\wand} G_1))) \vdash H$} 
	    \DisplayProof 
	    }{\ }\\{\ }\\{\ }\\
            \scalebox{0.9}{  
             \AxiomC{$Re_i \vdash F_1$} 
	     \AxiomC{$\Gamma((Re_j, G_1); 
	     (\Gamma'(F \wedge G), \mathbb{E}(\Gamma_1;F_1 \text{\wand} 
	     G_1))) \vdash H$} 
	     \BinaryInfC{$\Gamma((\Gamma'(F \wedge G),
	     \mathbb{E}(\Gamma_1; F_1 \text{\wand} G_1))) \vdash H$} 
             \DisplayProof
	     }{\ }\\{\ }\\{\ }\\ 
	     \scalebox{0.9}{ 
	     \AxiomC{$Re_i(F \wedge G) \vdash F_1$} 
	     \AxiomC{$\Gamma((Re_j, G_1); (\Gamma'(F \wedge G), 
	     \mathbb{E}(
	     \Gamma_1; F_1 \text{\wand} G_1))) \vdash H$} 
	     \BinaryInfC{$\Gamma((\Gamma'(F \wedge G), 
	     \mathbb{E}(\Gamma_1; F_1 \text{\wand} G_1))) \vdash H$} 
	     \DisplayProof
	     }{\ }\\{\ }\\{\ }\\ 
	     \scalebox{0.9}{
	     \AxiomC{$Re_i \vdash F_1$}
	     \AxiomC{$\Gamma((R_j(F \wedge G), G_1); 
	     (\Gamma'(F \wedge G), \mathbb{E}(\Gamma_1; F_1 \text{\wand} 
	     G_1))) \vdash H$} 
	     \BinaryInfC{$\Gamma((\Gamma'(F \wedge G),
	     \mathbb{E}(\Gamma_1; F_1 \text{\wand} G_1))) \vdash H$} 
	     \DisplayProof
	     }{\ }\\{\ }\\{\ }\\ 
	     \scalebox{0.9}{ 
	     \AxiomC{$Re_i \vdash F_1$} 
	     \AxiomC{$\Gamma(F \wedge G)((Re_j, G_1); 
	     (\Gamma', \mathbb{E}(\Gamma_1; F_1 \text{\wand} G_1)))
	     \vdash H$} 
	     \BinaryInfC{$\Gamma(F \wedge G)((\Gamma', 
	     \mathbb{E}(\Gamma_1; F_1 \text{\wand} G_1))) \vdash H$} 
	     \DisplayProof
	     }{\ }\\{\ }\\{\ }\\ 
	     \scalebox{0.9}{ 
	     \AxiomC{$Re_i \vdash F_1$}   
	     \AxiomC{$\Gamma((Re_j, G_1); \mathbb{E}(\Gamma_2(F \wedge G)
	     (\Gamma', (\Gamma_1; F_1 {\text{\wand}} G_1)))) 
	     \vdash H$} 
	     \BinaryInfC{$\Gamma(\mathbb{E}(\Gamma_2(F \wedge G)(\Gamma',
	     (\Gamma_1; F_1 {\text{\wand}} G_1)))) \vdash H$} 
	     \DisplayProof
	     }
          \end{center} 
	  {\ }\\
	   For each, the required sequent results from 
	   induction hypothesis for the particular occurrences 
	   of {\small $F \wedge G$}
	   on both of the premises, and then 
	   {\small $\text{\wand} L$}
	   to recover 
	   {\small $\Gamma'$} (or {\small $\Gamma'(F; G)$})
       such that  {\small $(Re_i, Re_j) \in \code{Candidate}(
           \Gamma') \text{ (or } \code{Candidate}(\Gamma'(F; G)))$}. 
	 \item $\wedge R$: Similar to {\small $\vee L$} in approach but simpler.
	 \item $\vee R$: Similar.
	 \item $\supset R$: The derivation ends in: {\ }\\
	   {\small 
	   \begin{center}
	     \AxiomC{$\Gamma(F \wedge G); F_1 \vdash G_1$}
	     \RightLabel{$\supset R$}
	     \UnaryInfC{$\Gamma(F \wedge G) \vdash F_1 \supset G_1$}
	     \DisplayProof
	   \end{center} 
	   }{\ }\\
	   By induction hypothesis, {\small $\Gamma(F; G); F_1 \vdash G_1$}.
	   Then, {\small $\Gamma(F; G) \vdash F_1 \supset G_1$}
	   as required 
	   via {\small $\supset R$}.  
	 \item $* R$: The derivation 
	   ends in one of the below: {\ }\\
	   {\small 
	   \begin{center} 
	     \AxiomC{$Re_i \vdash F_1$} 
	     \AxiomC{$Re_j \vdash G_1$} 
	     \BinaryInfC{$\Gamma'(F \wedge G) \vdash F_1 * G_1$} 
	     \DisplayProof 
	     {\ }\\{\ }\\ {\ }\\
	     \AxiomC{$Re_i(F \wedge G) \vdash F_1$} 
	     \AxiomC{$Re_j \vdash G_1$} 
	     \BinaryInfC{$\Gamma'(F \wedge G) \vdash F_1 * G_1$}
	     \DisplayProof
	     {\ }\\{\ }\\{\ }\\  
\AxiomC{$Re_i \vdash F_1$} 
\AxiomC{$Re_j(F \wedge G) \vdash G_1$} 
	     \BinaryInfC{$\Gamma'(F \wedge G) \vdash F_1 * G_1$} 
	     \DisplayProof 
	   \end{center}   
	   }
	   Trivial for the first case. For the second, 
	   induction hypothesis on the left premise sequent 
	   produces {\small $Re_i(F; G) \vdash F_1$}.\\ Then 
	   {\small $* R$} such that 
       {\small $(Re_i(F; G), Re_j) \in \code{Candidate}(
           \Gamma'(F; G))$}. Similarly 
       for the third case. 
	 \item $\text{\wand} R$: Trivial.
       \end{enumerate}  
     \item A {\LBIN} sequent {\small $\Gamma(F \vee G) \vdash H$}:
       similar.  
     \item For a {\LBIN } sequent {\small $\Gamma(F * G) \vdash H$}, 
       the base case is when it is an axiom for which a 
       proof is trivially given. 
	For inductive cases, assume that it holds true for 
	all the derivation depths up to $k$ and show that the same 
	still holds for the derivation depth of $k+1$. Consider 
	what the last inference rule is.   
	\begin{enumerate} 
	    \hide{
	  \item $\rightarrow L$: 
	    One of the following:{\ }\\
	    \begin{center}  
	      \scalebox{0.9}{
	      \AxiomC{$\Gamma_1(F * G); F_1 \rightarrow G_1 \vdash F_1$}
	      \AxiomC{$\Gamma(\Gamma_1(F* G); G_1) \vdash H$}
	      \RightLabel{$\rightarrow L$}
	      \BinaryInfC{$\Gamma(\Gamma_1(F * G); F_1 \rightarrow G_1)
	      \vdash H$}
	      \DisplayProof  
	      }  
	      {\ }\\{\ }\\{\ }\\
	      \scalebox{0.9}{ 
	      \AxiomC{$\Gamma_1; F_1 \rightarrow G_1 \vdash H$}
	      \AxiomC{$\Gamma(F * G) (\Gamma_1; G_1) \vdash H$}
	      \RightLabel{$\rightarrow L$}
	      \BinaryInfC{$\Gamma(F*G)(\Gamma_1; F_1 \rightarrow G_1) 
	      \vdash H$}
	      \DisplayProof   
	      }{\ }\\{\ }\\
	    \end{center} 
	    For each, induction hypothesis on the premise(s) 
	    for each occurrence of {\small $F * G$}, and then 
	    {\small $\rightarrow L$} to conclude. 
	    }
	  \item $* L$: Trivial if the principal coincides with 
	    {\small $F * G$}. Otherwise, the derivation looks like: 
	    {\small 
	    \begin{center}
	      \AxiomC{$\Gamma(F * G)(F_1, G_1) \vdash H$}
	      \RightLabel{$* L$}
	      \UnaryInfC{$\Gamma(F * G)(F_1 * G_1) \vdash H$}
	      \DisplayProof
	    \end{center} 
	    } 
	    {\ }\\
	    By induction hypothesis, {\small 
	    $\Gamma(F, G)(F_1, G_1) \vdash H$}. 
	    Then, {\small $\Gamma(F, G)(F_1 * G_1) \vdash H$} as desired 
	    via {\small $* L$}.  
	  \item The rest: Similar to the previous cases.
	\end{enumerate}  
      \item 
	For a {\LBIN} sequent 
	{\small $D: \Gamma(\Gamma_1, \mtop) \vdash H$}, 
	the base case is when it is the conclusion sequent of an axiom. 
\begin{enumerate}
  \item $id$: {\small $D: \mathbb{E}(\Gamma'(\Gamma_1, \mtop); p) \vdash p$}. 
     Then
     {\small $D': \mathbb{E}(\Gamma'(\Gamma_1); p) \vdash p$} 
    is also an axiom.  
  \item $\abot L$, {\small $\top R$}: straightforward.  
  \item $\mtop R$: similar to $id$ case. 
\end{enumerate}
For inductive cases, assume that the statement holds true 
for all the derivation depths up to $k$, and show that 
it still holds true at $k+1$. Consider what the last inference rule
applied is. 
\begin{enumerate}
  \item $\vee L$: The derivation ends in one of the following:{\ }\\ 
    {\small 
    \begin{center}
      \AxiomC{$\Gamma(\Gamma_1, \mtop)(F_1) \vdash H$} 
      \AxiomC{$\Gamma(\Gamma_1, \mtop)(F_2) \vdash H$} 
      \RightLabel{$\vee L$} 
      \BinaryInfC{$\Gamma(\Gamma_1, \mtop)(F_1 \vee F_2) \vdash H$} 
      \DisplayProof 
      {\ }\\{\ }\\{\ }\\
      \AxiomC{$\Gamma(\Gamma_1(F_1), \mtop) \vdash H$} 
      \AxiomC{$\Gamma(\Gamma_1(F_2), \mtop) \vdash H$} 
      \RightLabel{$\vee L$} 
      \BinaryInfC{$\Gamma(\Gamma_1(F_1 \vee F_2), \mtop) \vdash H$} 
      \DisplayProof
    \end{center}  
    }
    {\ }\\
    For the former, {\small $\Gamma(\Gamma_1)(F_1) \vdash H$}
    and {\small $\Gamma(\Gamma_1)(F_2) \vdash H$} (induction hypothesis); 
    then {\small $\Gamma(\Gamma_1)(F_1 \vee F_2) \vdash H$} via 
    {\small $\vee L$} as required. For the latter,
    {\small $\Gamma(\Gamma_1(F_1)) \vdash H$} and 
    {\small $\Gamma(\Gamma_1(F_2)) \vdash H$} (induction hypothesis); 
    then {\small $\Gamma(\Gamma_1(F_1 \vee F_2)) \vdash H$} via 
    {\small $\vee L$} as required. 
  \item $\supset L$: The derivation ends in one of the following:
    {\ }\\
    \begin{center} 
      {\small
      \AxiomC{$\mathbb{E}(\Gamma_1; F_1 {\supset} F_2)
       \vdash F_1$} 
       \AxiomC{$\Gamma(F_2; \mathbb{E}(\Gamma_1; F_1 {\supset} F_2))(\Gamma_2, \mtop) \vdash H$} 
       \RightLabel{$\supset L$} 
       \BinaryInfC{$\Gamma(\mathbb{E}(\Gamma_1; F_1 {\supset} F_2))
       (\Gamma_2, \mtop)
       \vdash H$} 
       \DisplayProof  
       } 
       {\ }\\{\ }\\{\ }\\
       {\small 
       \AxiomC{$\mathbb{E}(\Gamma_1(\Gamma_2, \mtop); F_1 {\supset} 
       F_2) \vdash F_1$} 
       \AxiomC{$\Gamma(F_2; \mathbb{E}(\Gamma_1(\Gamma_2, \mtop); F_1 {\supset} F_2))
       \vdash H$} 
       \RightLabel{$\supset L$} 
       \BinaryInfC{$\Gamma(\mathbb{E}(\Gamma_1(\Gamma_2, \mtop); F_1 {\supset} F_2))
       \vdash H$} 
       \DisplayProof 
       } 
       {\ }\\{\ }\\{\ }\\
       {\small 
       \AxiomC{$\mathbb{E}(\Gamma_1; F_1 {\supset} F_2) \vdash F_1$} 
       \AxiomC{$\Gamma(\Gamma_2(F_2; \mathbb{E}(\Gamma_1; F_1 {\supset} F_2)), \mtop) \vdash H$} 
       \RightLabel{$\supset L$} 
       \BinaryInfC{$\Gamma(\Gamma_2(\mathbb{E}(\Gamma_1; F_1 {\supset} F_2)), 
       \mtop) \vdash H$} 
       \DisplayProof 
       }
    \end{center} 
    {\ }\\
    For the first, 
    {\small $\Gamma(F_2; \mathbb{E}(\Gamma_1; F_1 {\supset} F_2))(\Gamma_2) \vdash H$}
    (induction hypothesis); then\linebreak 
    {\small $\Gamma(\mathbb{E}(\Gamma_1; F_1 {\supset} F_2))(\Gamma_2)
    \vdash H$ via $\supset L$} as required. \\
    \indent For the second, {\small $\mathbb{E}(\Gamma_1(\Gamma_2); F_1 {\supset} 
   F_2) \vdash F_1$} and 
   {\small $\Gamma(\mathbb{E}(\Gamma_1(\Gamma_2); F_1 {\supset} F_2)) \vdash H$} 
    (induction hypothesis); then 
    {\small $\Gamma(\mathbb{E}(\Gamma_1(\Gamma_2); F_1 {\supset} F_2)) \vdash H$} 
    via {\small $\supset L$} as required. \\
    \indent For the third, induction hypothesis on the right 
    premise sequent, then {\small $\supset L$} to conclude. 
  \item $\text{\wand} L$: 
    Suppose 
    the derivation ends in one of the following:{\ }\\
    {\ }\\
    \begin{center}  
      \scalebox{0.9}{ 
      \AxiomC{$Re_i \vdash F$} 
      \AxiomC{$\Gamma((Re_j, G); (\Gamma_2, \mathbb{E}(\Gamma_3(\Gamma_1, \mtop);
      F {\text{\wand}} G))) \vdash H$} 
      \RightLabel{$\text{\wand} L$} 
      \BinaryInfC{$\Gamma((\Gamma_2, \mathbb{E}(\Gamma_3(\Gamma_1, \mtop); 
      F {\text{\wand}} G))) \vdash H$} 
      \DisplayProof  
      }
      {\ }\\{\ }\\{\ }\\  
      \scalebox{0.9}{ 
      \AxiomC{$Re_i \vdash F_1$} 
      \AxiomC{$\Gamma(\Gamma_1, \mtop)((Re_j, G); 
      (\Gamma_2, \mathbb{E}(\Gamma_3; F {\text{\wand}}
      G))) \vdash H$} 
      \RightLabel{$\text{\wand} L$} 
      \BinaryInfC{$\Gamma(\Gamma_1, \mtop)(\Gamma_2, \mathbb{E}(\Gamma_3; 
      F {\text{\wand}} G)) \vdash H$} 
      \DisplayProof 
      } 
      {\ }\\{\ }\\{\ }\\
      \scalebox{0.9}{ 
      \AxiomC{$Re_i \vdash F_1$} 
      \AxiomC{$\Gamma(\Gamma_1((Re_j, G); (\Gamma_2, \mathbb{E}(\Gamma_3; 
      F {\text{\wand}} G)), \mtop)) \vdash H$} 
      \RightLabel{$\text{\wand} L$} 
      \BinaryInfC{$\Gamma(\Gamma_1((\Gamma_2, \mathbb{E}(\Gamma_3; 
      F {\text{\wand}} G)), \mtop)) \vdash H$} 
      \DisplayProof 
      }
    \end{center}
    {\ }\\
    For each of the above, induction hypothesis, if applicable, and {\small 
    $\text{\wand} L$}
    conclude. Now consider  other cases where 
    the {\small $\mtop$} occurs 
    in the conclusion sequent as
    {\small $\Gamma(\mathbb{E}(\Gamma_2(\Gamma_1, \mtop), (\Gamma_3; F {\text{\wand}} G)))
    \vdash H$}. Less involved cases 
    are when\linebreak
    {\small ``$\Gamma_1, \mtop$''} is entirely retained 
    or entirely discarded upwards:{\ }\\
    \begin{center}
      \scalebox{0.9}{ 
      \AxiomC{$Re_i(\Gamma_1, \mtop) \vdash F$} 
      \AxiomC{$\Gamma((Re_j, G); (\Gamma_2(\Gamma_1, \mtop), \mathbb{E}(\Gamma_3; 
      F {\text{\wand}} G))) \vdash H$} 
      \RightLabel{$\text{\wand} L$} 
      \BinaryInfC{$\Gamma((\Gamma_2(\Gamma_1, \mtop),
      \mathbb{E}(\Gamma_3; F {\text{\wand}} G))) \vdash H$} 
      \DisplayProof  
      }
      {\ }\\{\ }\\{\ }\\  
      \scalebox{0.9}{
      \AxiomC{$Re_i \vdash F$} 
      \AxiomC{$\Gamma((Re_j(\Gamma_1, \mtop), G); (\Gamma_2(\Gamma_1, 
      \mtop), \mathbb{E}(\Gamma_3; F {\text{\wand}} G))) \vdash H$} 
      \RightLabel{$\text{\wand} L$} 
      \BinaryInfC{$\Gamma((\Gamma_2(\Gamma_1, \mtop), 
      \mathbb{E}(\Gamma_3; F {\text{\wand}} G))) \vdash H$} 
      \DisplayProof  
      }
      {\ }\\{\ }\\{\ }\\  
      \scalebox{0.9}{
      \AxiomC{$Re_i \vdash F$} 
      \AxiomC{$\Gamma((Re_j, G); (\Gamma_2(\Gamma_1, \mtop), 
      \mathbb{E}(\Gamma_3; F {\text{\wand}} G))) \vdash H$} 
      \RightLabel{$\text{\wand} L$} 
      \BinaryInfC{$\Gamma((\Gamma_2(\Gamma_1, \mtop), \mathbb{E}(\Gamma_3; 
      F {\text{\wand}} G))) \vdash H$} 
      \DisplayProof  
      }
    \end{center} 
    {\ }\\
    The first assumes that the specific {\small ``$\Gamma_1, \mtop$"}
    does not occur
    in {\small $Re_j$}; the second that it does not occur in {\small 
    $Re_i$}; 
    the third that it does not occur in {\small $Re_i$} or in {\small 
    $Re_j$}.  
    Each of them is concluded via induction hypothesis 
    and then {\small $\text{\wand} L$}. \\
    \indent Finally, if {\small ``$\Gamma_1, \mtop$''} should be 
    split between the two premises, {\small $Re_j$} is the 
    {\small $\mtop$}, in which case we have 
    on the right premise sequent:\\
    {\small $\Gamma((\mtop, G); (\Gamma_2(\Gamma_1, {\mtop}), 
        \mathbb{E}(\Gamma_3; F {\text{\wand}} G))) \vdash H$}. \\
    In this case we apply induction hypothesis and obtain\\
    {\small $\Gamma((\mtop, G); (\Gamma_2(\Gamma_1), 
        \mathbb{E}(\Gamma_3; F {\text{\wand}} G))) \vdash H$}.\\ 
    By the definition of a candidate, however, we have from the 
    sequent that\\ 
    {\small $\Gamma(\Gamma_2(\Gamma_1), 
        \mathbb{E}(\Gamma_3; F {\text{\wand}} G)) \vdash H$} \\
    is \LBIN-derivable, as required. 
  \item The rest: similar or straightforward.
\end{enumerate}	 
     \item The rest: similar or straightforward.   
  \end{enumerate} 
  \section*{Appendix C: Proof of Theorem \ref{admissible_contraction_LBI3}}\label{appendix_contraction} 
  By induction on derivation depth. 
  The base cases 
  are when it is 1, \emph{i.e.} when 
  {\small $D$} is the conclusion sequent of an axiom. Consider which 
  axiom has applied. If it is {\small $\top R$}, then it is trivial 
  to show that if {\small $\Gamma(\Gamma_a; \Gamma_a) 
  \vdash \top$}, then so is {\small $\Gamma(\Gamma_a) \vdash \top$}. Also for 
  {\small $\abot L$}, a single occurrence of {\small $\abot$} on the antecedent part of 
  {\small $D$} suffices for the {\small $\abot L$} application, and the current 
  theorem is trivially provable in this case, too. For both {\small $id$} and 
  {\small $\mtop R$}, 
  {\small $\Pi(D)$} looks like: 
  \begin{center} 
    {\small 
    \AxiomC{}
    \UnaryInfC{$\mathbb{E}(\widetilde{\Gamma_1}; \alpha) \vdash \alpha$} 
    \DisplayProof 
    }
  \end{center} 
  where {\small $\alpha$} is {\small $p \in \mathcal{P}$} for {\small $id$}, {\small $\mtop$} for 
  {\small $\mtop R$} and 
  {\small $\Gamma(\Gamma_a; \Gamma_a) = \mathbb{E}(\widetilde{\Gamma_1}; \alpha)$}.
   If 
  {\small $\alpha$} is not a sub-structure of  
  either of the occurrences of {\small $\Gamma_a$}, 
  then {\small $D'$} is trivially derivable. 
  Otherwise, assume that the focused {\small $\alpha$} 
  in {\small $\mathbb{E}(\widetilde{\Gamma_1}; \alpha)$} 
  is a sub-structure of one of the occurrences of 
  {\small $\Gamma_a$} in {\small $\Gamma(\Gamma_a; \Gamma_a)$}.  
  Then there exists some  {\small $\Gamma_2$} and {\small $\widetilde{\Gamma_3}$}  such that 
  {\small $\mathbb{E}(\widetilde{\Gamma_1}; \alpha) = 
      \mathbb{E}(\Gamma_2; \widetilde{\Gamma_3}; \alpha) 
      = \mathbb{E}_1(\Gamma_2); \mathbb{E}_2(\widetilde{\Gamma_3}; 
      \alpha)$} and that 
  {\small $\Gamma_a$} is an essence of {\small $\widetilde{\Gamma_3};
      \alpha$}. But then {\small $D': \Gamma(\Gamma_a)$} is still 
  an axiom. \\
  \indent For inductive cases, suppose that the current theorem 
  holds true 
  for any derivation depth of up to $k$. We must 
  demonstrated that it still holds for the derivation depth of $k+1$. 
  Consider what the {\LBIN} inference rule applied last is, and, 
  in case of a left inference rule, consider where 
  the active structure {\small $\Gamma_b$} of the inference rule  is
  in {\small $\Gamma(\Gamma_a; \Gamma_a)$}. 
  \begin{enumerate}
    \item {\small $\wedge L$}, and {\small $\Gamma_b$} is {\small 
      $F_1 \wedge F_2$}: 
  if {\small $\Gamma_b$} does not appear in {\small $\Gamma_a$}, induction hypothesis 
      on the premise sequent concludes. Otherwise, 
      {\small $\Pi(D)$} looks like: 
      \begin{center} 
	{\small 
	\AxiomC{$\vdots$}
	\noLine
	 \UnaryInfC{$D_1: \Gamma(\Gamma_a'(F_1; F_2); 
	 \Gamma_a'(F_1 \wedge F_2)) \vdash H$} 
	 \RightLabel{$\wedge L$} 
	 \UnaryInfC{$D: \Gamma(\Gamma_a'(F_1 \wedge F_2); 
	 \Gamma_a'(F_1 \wedge F_2)) \vdash H$} 
	 \DisplayProof
	 }
      \end{center} 
      {\small $D'_1: \Gamma(\Gamma_a'(F_1; F_2); \Gamma_a'(F_1; F_2)) \vdash H$} 
      is \LBIN-derivable (inversion lemma); {\small $D''_1: 
      \Gamma(\Gamma_a'(F_1; F_2)) \vdash H$} is also \LBIN-derivable 
      (induction hypothesis); then {\small $\wedge L$} on {\small $D''_1$}
      concludes.   
     \item {\small $\supset L$}, and {\small $\Gamma_b$} is {\small $\mathbb{E}(\widetilde{\Gamma'}; F \supset G)$}: if 
         {\small $\Gamma_b$} does not appear in {\small $\Gamma_a$}, then the induction hypothesis 
       on both of the premises concludes. If it is entirely 
       in {\small $\Gamma_a$}, 
       then {\small $\Pi(D)$} looks either like:
       \begin{center}    
	 {\small
	 \AxiomC{$\vdots$} 
	 \noLine
     \UnaryInfC{$D_1: \mathbb{E}(\widetilde{\Gamma'};
	  F \supset G) \vdash F$}   
	  \AxiomC{$\vdots$}
	  \noLine
	  \UnaryInfC{$D_2$}	  
	  \RightLabel{$\supset L$} 
      \BinaryInfC{$D: \Gamma(\Gamma_a'(\mathbb{E}(\widetilde{\Gamma'}; F \supset G)); 
          \widetilde{\Gamma_a'}(\mathbb{E}(\widetilde{\Gamma'}; F \supset G))) \vdash H$} 
	  \DisplayProof  
	  }
       \end{center}   
       where {\small $D_2: \Gamma(\Gamma_a'(G; \mathbb{E}(\widetilde{\Gamma'}; F 
           {\supset} G)); \Gamma_a'(\mathbb{E}(\widetilde{\Gamma'};
       F {\supset} G))) \vdash H$},
       or, in case {\small $\Gamma_a$} is {\small $\Gamma_a'; F {\supset} G$}, like: 
       \begin{center}  
	 {\small 
	 \AxiomC{$\vdots$}
	 \noLine
	 \UnaryInfC{$D_1: \Gamma_a'; F {\supset} G; 
	 \Gamma_a'; F {\supset} G
	  \vdash F$}  
	  \AxiomC{$\vdots$} 
	  \noLine
	  \UnaryInfC{$D_2$}	  
	  \RightLabel{$\supset L$} 
	  \BinaryInfC{$D: \Gamma(\Gamma_a'; F {\supset} G;
	  \Gamma_a'; F {\supset} G) \vdash H$} 
	  \DisplayProof 
	  }
       \end{center}  
       where {\small $D_2: \Gamma(G; \Gamma_a'; F {\supset} G; \Gamma_a';
       F {\supset} G) \vdash H$}. \\
       In the former case,\\ {\small $D_2': \Gamma(\Gamma_a'(G; 
           \mathbb{E}(\widetilde{\Gamma'}; F {\supset} G));
           \Gamma_a'(G; \mathbb{E}(\widetilde{\Gamma'}; F {\supset} G))) \vdash 
       H$} (weakening admissibility);\\
   {\small $D''_2: \Gamma(\Gamma_a'(G; \mathbb{E}(\widetilde{\Gamma'}; 
       F {\supset} G))) \vdash H$} (induction hypothesis); \\
       then {\small $\supset L$} on {\small $D_1$} and {\small 
       $D''_2$} 
       concludes. In the latter, induction hypothesis on {\small $D_1$}
       and on {\small $D_2$}; then via
       {\small $\supset L$} for a conclusion. Finally, if 
       only a substructure of {\small $\Gamma_b$} is in {\small $\Gamma_a$} with 
       the rest spilling out of {\small $\Gamma_a$}, then if 
       the principal formula {\small $F \supset G$} does not occur in {\small $\Gamma_a$}, then straightforward; otherwise 
       similar 
       to the latter case. 
     \item {\small $* R$}: {\small $\Pi(D)$} looks like: 
       \begin{center} 
	 {\small 
	 \AxiomC{$\vdots$}
	 \noLine
	 \UnaryInfC{$D_1: Re_i \vdash F_1$}  
	 \AxiomC{$\vdots$}
	 \noLine
	 \UnaryInfC{$D_2: Re_j \vdash F_2$} 
	 \RightLabel{$* R$} 
	 \BinaryInfC{$D: \Gamma(\Gamma_a; \Gamma_a) \vdash F_1 * F_2$} 
	 \DisplayProof 
	 }
       \end{center} 
       By Proposition \ref{lemma_lemma}, assume that {\small $(Re_1, Re_2) \in \code{RepCandidate}(\Gamma(\Gamma_a; \Gamma_a))$}
       without 
       loss of generality. Then by the definition of 
       {\small $\dpreceq$} it must be that 
       either  (1) 
       {\small $\Gamma_a; \Gamma_a$} preserves completely 
       in {\small $Re_1$} or {\small $Re_2$}, or 
       (2) it remains neither in {\small $Re_1$} nor in 
       {\small $Re_2$}. 
       If {\small $\Gamma_a; 
       \Gamma_a$}
   is preserved in {\small $Re_1$} (or {\small $Re_2$}), then induction hypothesis on the premise that has 
   {\small $Re_1$} (or {\small $Re_2$}) and then 
         {\small $* R$} 
        conclude; otherwise, it is trivial to see that only 
       a single {\small $\Gamma_a$} needs to be present in {\small $D$}.  
     \item {\small $\text{\wand} L$}, and {\small $\Gamma_b$} is
         {\small $\widetilde{\Gamma'}, 
           \mathbb{E}(\widetilde{\Gamma_1}; F {\text{\wand}} G)$}:  
       if {\small $\Gamma_b$} is not in {\small $\Gamma_a$}, 
       then induction hypothesis 
       on the right premise sequent concludes. If it is in {\small $\Gamma_a$}, 
       {\small $\Pi(D)$} looks like: 
	 \begin{center}  
	   \scalebox{0.88}{  
	   \AxiomC{$\vdots$}
	    \noLine
	    \UnaryInfC{$D_1: Re_i \vdash F$}   
	    \AxiomC{$\vdots$}
	    \noLine 
	    \UnaryInfC{$D_2$}
	    \RightLabel{$\text{\wand} L_1$} 
	    \BinaryInfC{$D: 
            \Gamma(\Gamma_a'(\widetilde{\Gamma'}, \mathbb{E}(\widetilde{\Gamma_1}; {F \text{\wand} G})); 
            \Gamma_a'(\widetilde{\Gamma'}, \mathbb{E}(\widetilde{\Gamma_1}; {F \text{\wand} G}))) \vdash H$}
	    \DisplayProof 
	    }
	 \end{center}   
	 {\ }\\
	 where {\small $D_2$} is:
	 \begin{center} 
	   \scalebox{0.88}{
           $\Gamma(\Gamma_a'((\widetilde{Re_j}, G); (\widetilde{\Gamma'}, 
       \mathbb{E}(\widetilde{\Gamma_1}; F {\text{\wand}} G))); 
       \Gamma_a'(\widetilde{\Gamma'}, \mathbb{E}(\widetilde{\Gamma_1}; F {\text{\wand}} G))) \vdash H$   
    }
          \end{center}    
	  {\ }\\
      {\small $D'_2: \Gamma(\Gamma_a'((\widetilde{Re_j}, G); (
          \widetilde{\Gamma'}, 
          \mathbb{E}(\widetilde{\Gamma_1}; F {\text{\wand}} G)));
          \Gamma_a'((\widetilde{Re_j}, G);
          (\widetilde{\Gamma'}, \mathbb{E}(\widetilde{\Gamma_1}; F {\text{\wand}} G)))) \vdash H$} 
	  via Proposition \ref{admissible_weakening_LBI3} is 
      also \LBIN-derivable. {\small $D''_2: \Gamma(\Gamma_a'((\widetilde{Re_j}, G); 
          (\widetilde{\Gamma'}, \mathbb{E}(\widetilde{\Gamma_1}; F {\text{\wand}} G))))$
	  $\vdash H$} 
	  via induction hypothesis. Then 
	  {\small $\text{\wand} L$} on {\small $D_1$} and {\small 
	  $D''_2$ }
	  concludes. If, on the other hand, {\small $\Gamma_a$} is in 
	  {\small $\Gamma_b$}, 
      then it is  either in {\small $\Gamma_1$} or in 
	  {\small $\Gamma'$}. 
	  But if it is in {\small $\Gamma_1$}, 
	  then it must be weakened away, 
	  and if it is in {\small $\Gamma'$}, 
	  similar to the $* R$ case. 
	\item Other cases are similar to one of the cases already
            examined. 
  \end{enumerate}

  \section*{Appendix D: Proof of Theorem \ref{equivalence_LBI3_LBI}}\label{appendix_equivalence}   
Into the \emph{only if} direction, assume that {\small $D$} 
  is \LBIN-derivable, and then show that there is 
  a \LBI-derivation for each {\LBIN} derivation. But this is 
  obvious because each {\LBIN} inference rule is derivable 
  in \LBI.\footnote{Note that {\small $EA_2$} 
      is \LBI-derivable with {\small $Wk L_{\LBI}$} and {\small $EqAnt_{2\:\LBI}$}.} \\
  \indent Into the \emph{if} direction, assume that {\small $D$}
  is \LBI-derivable, and then show that there is a corresponding 
  \LBIN-derivation to each {\LBI} derivation 
  by induction on the derivation depth of 
  {\small $D$}. \\
  \indent If it is 1, \emph{i.e.} if {\small $D$} is the conclusion sequent 
  of an axiom, we note that {\small $\abot L_{\LBI}$} is identical to {\small 
  $\abot L_{\LBIN}$};
  {\small $id_{\LBI}$} and {\small $\mtop R_{\LBI}$} via 
  {\small $id_{\LBIN}$} and resp. {\small $\mtop R_{\LBIN}$} with Proposition 
  \ref{admissible_weakening_LBI3} and Proposition \ref{admissible_EA2}; and {\small $\top R_{\LBI}$} 
  is identical to {\small $\top R_{\LBIN}$}. For inductive cases, assume 
  that the \emph{if} direction holds true up to the {\LBI}-derivation 
  depth of $k$, then it must be demonstrated that it still holds true 
  for the {\LBI}-derivation depth of $k+1$. Consider what the {\LBI} 
  rule applied last is: 
  \begin{enumerate} 
    \item {\small $\supset L_{\LBI}$}: 
      {\small $\Pi_{\LBI}(D)$} looks like: 
      \begin{center}   
	{\small 
	\AxiomC{$\vdots$}
        \noLine
	\UnaryInfC{$D_1: \Gamma_1 \vdash F$}   
	\AxiomC{$\vdots$}
	\noLine
	\UnaryInfC{$D_2: \Gamma(\Gamma_1;G) \vdash H$} 
	\RightLabel{$\supset L_{\LBI}$} 
	\BinaryInfC{$D: \Gamma(\Gamma_1; F {\supset} G) \vdash H$} 
	\DisplayProof 
	}
      \end{center} 
      By induction hypothesis, both {\small $D_1$} and {\small $D_2$} are also 
      \LBIN-derivable. Proposition \ref{admissible_weakening_LBI3} 
      on {\small $D_1$} in \LBIN-space results in {\small $D'_1: 
      \Gamma_1; F {\supset} G \vdash F$}, 
      and on {\small $D_2$} results in {\small $D'_2: 
      \Gamma(\Gamma_1; G; F {\supset} G) \vdash H$}. Then 
      an application of {\small $\supset L_{\LBIN}$} on {\small
      $D'_1$} and {\small $D_2$}
      concludes in \LBIN-space. 
    \item {\small $\text{\wand} L_{\LBI}$}: 
      {\small $\Pi_{\LBI}(D)$} looks like: 
      \begin{center} 
	{\small
	 \AxiomC{$\vdots$} 
	 \noLine
	 \UnaryInfC{$D_1: \Gamma_1 \vdash F$} 
	 \AxiomC{$\vdots$} 
	 \noLine
	 \UnaryInfC{$D_2: \Gamma(G) \vdash H$} 
	 \RightLabel{$\text{\wand} L_{\LBI}$}  
	 \BinaryInfC{$D: \Gamma(\Gamma_1, F {\text{\wand}} G) \vdash H$} 
	 \DisplayProof 
	 }
      \end{center}  
      By induction hypothesis, {\small $D_1$} and {\small $D_2$} are also 
      \LBIN-derivable. 
      \begin{enumerate}
	\item If {\small $\Gamma(G)$} is {\small $G$}, \emph{i.e.} 
	  if the antecedent part of {\small $D_2$} is a formula 
	  ({\small $G$}), 
	  then Proposition \ref{admissible_weakening_LBI3} on {\small $D_2$}  
      results in {\small $D'_2: G; (\Gamma_1, F {\text{\wand}} G) \vdash H$}
	  in \LBIN-space. Then {\small $\text{\wand} L_{\LBIN}$} on 
	  {\small $D_1$} and {\small $D'_2$} leads to {\small $D': 
          \Gamma_1, F {\text{\wand}} G \vdash H$} as required.  
	\item If {\small $\Gamma(G)$} is {\small $\Gamma'(\Gamma'', G)$}, 
	  then Proposition \ref{admissible_weakening_LBI3} on {\small $D_2$} 
      leads to\linebreak {\small $D'_2: \Gamma'((\Gamma'', G); 
      (\Gamma'', \Gamma_1, 
      F {\text{\wand}} G)) \vdash H$}. Then {\small $\text{\wand} L_{ \LBIN}$}
      on {\small $D_1$} and {\small $D'_2$} leads to {\small $D': \Gamma'(\Gamma'', \Gamma_1, 
      F {\text{\wand}} G) \vdash H$} as required. 
    \item Finally, if {\small $\Gamma(G)$} is {\small $\Gamma'(\Gamma''; 
      G) \vdash H$}, then Proposition \ref{admissible_weakening_LBI3} 
      on {\small $D_2$} leads to\linebreak {\small $D'_2: \Gamma'(\Gamma''; G; (\Gamma_1, 
      F {\text{\wand}} G)) \vdash H$}. Then {\small $\text{\wand} 
      L_{\LBIN}$} on {\small $D_1$} and {\small $D'_2$} leads to {\small $D': 
      \Gamma'(\Gamma''; (\Gamma_1, F {\text{\wand}} G)) \vdash H$}  
      as required. 
  \end{enumerate} 
\item {\small $Wk L_{\LBI}$}: Proposition \ref{admissible_weakening_LBI3}.
\item {\small $Ctr L_{\LBI}$}: Theorem \ref{admissible_contraction_LBI3}.  
\item {\small $EqAnt_{1\: \LBI}$}: Proposition \ref{admissible_eqant_LBI3}. 
\item {\small $EqAnt_{2\: \LBI}$}: Proposition \ref{admissible_eqant_LBI3}. 
\item The rest: straightforward. 
  \end{enumerate}
  \section*{Appendix E: Proof of Theorem \ref{cut_elimination}}\label{appendix_cut_elimination} 
By induction on the cut rank and a sub-induction 
  on the cut level, by making use of $\Cut_{CS}$. 
  In this proof 
  $(X, Y)$ denotes, for some {\LBIN} inference rules $X$ and $Y$, 
  that one of the premises has been just derived with $X$ and 
  the other with $Y$. 
  As before, {\small $\Gamma(\Gamma_1)(\Gamma_2)$} 
  abbreviates {\small $(\Gamma(\Gamma_1))(\Gamma_2)$}.  
  In the pairs of derivations, the first is the derivation tree 
  to be permuted and the second is the permuted derivation tree. 
  \begin{description}
    \item[$(id, id)$: ]{\ }
      \begin{enumerate}
	\item{\ }  
	  {\small 
	  \begin{center}
	    \AxiomC{}
	    \RightLabel{$id$} 
        \UnaryInfC{$\mathbb{E}(\widetilde{\Gamma_1}; p) \vdash p$} 
	    \AxiomC{}
	    \RightLabel{$id$} 
        \UnaryInfC{$\mathbb{E}'(\widetilde{\Gamma_2}; p) \vdash p$} 
	    \RightLabel{$\Cut$} 
        \BinaryInfC{$\mathbb{E}'(\widetilde{\Gamma_2}; \mathbb{E}(\widetilde{\Gamma_1};
	    p)) \vdash p$} 
	    \DisplayProof
	  \end{center}  
	  }
	  {\ }\\
	  $\Rightarrow$
	  {\ }\\  
	  {\small
	  \begin{center}
	    \AxiomC{}
	    \RightLabel{$id$} 
        \UnaryInfC{$\mathbb{E}'(\widetilde{\Gamma_2}; \mathbb{E}(\widetilde{\Gamma_1}; p))
	    \vdash p$} 
	    \DisplayProof
	  \end{center}
	  }  
	  {\ }\\ 
	  Of course, for the above permutation to be correct, 
	  we must be able to demonstrate the fact that
	  the antecedent structure is 
      {\small $\mathbb{E}''(\widetilde{\Gamma_2}; \widetilde{\Gamma_1}; p)$} 
      such that\linebreak {\small $[\mathbb{E}''(\widetilde{\Gamma_2}; \widetilde{\Gamma_1}; p)] 
          = [\mathbb{E}'(\widetilde{\Gamma_2}; \mathbb{E}(
          \widetilde{\Gamma_1}; p))]$}. 
	  But note that it only takes a finite number of (backward)
	  $EA_2$ applications (\emph{Cf.} Proposition 
      \ref{admissible_EA2}) on {\small $\widetilde{\Gamma_2}; 
          \mathbb{E}(\widetilde{\Gamma_1}; p) \vdash p$} 
      to upward derive {\small $\widetilde{\Gamma_2}; \widetilde{\Gamma_1}; p \vdash p$}.
	  The implication is that, since  
      {\small $\widetilde{\Gamma_2}; \mathbb{E}(\widetilde{\Gamma_1}; p) \vdash p$}  
      results upward from {\small $\mathbb{E}'(\widetilde{\Gamma_2}; 
          \mathbb{E}(\widetilde{\Gamma_1}; p)) \vdash p$} also 
	  in a finite number of backward {\small $EA_2$} applications, 
	  the antecedent structure must 
	  be in the form:
      {\small $\mathbb{E}''(\widetilde{\Gamma_2}; \widetilde{\Gamma_1}; p)$}.   
	\item{\ }\\  
	  {\small 
	  \begin{center}
	    \AxiomC{}
	    \RightLabel{$id$} 
        \UnaryInfC{$\mathbb{E}(\widetilde{\Gamma_1}; p) \vdash p$} 
	    \AxiomC{}
	    \RightLabel{$id$} 
	    \UnaryInfC{$\mathbb{E}'(\Gamma_2(p); q) \vdash q$} 
	    \RightLabel{$\Cut$} 
        \BinaryInfC{$\mathbb{E}'(\Gamma_2(\mathbb{E}(\widetilde{\Gamma_1}; p)); 
	    q) \vdash q$} 
	    \DisplayProof
	  \end{center}
	  }
      \end{enumerate}
  \end{description} 
  Other patterns for which one of the premises is an axiom sequent 
  are straightforward. \\
  \indent For the rest, if the cut formula is principal  
  only for one of the premise sequents, then we follow the routine
  \cite{351148}
  to permute up the other premise sequent for which 
  it is the principal. For example, in case we 
  have the derivation pattern below:{\ }\\ 
  \begin{center} 
    \scalebox{0.77}{ 
    \hspace{-0.4cm}
    \AxiomC{$D_1$}
    \AxiomC{$D_2$}
    \RightLabel{$\vee L$} 
    \BinaryInfC{$D_5: \Gamma_1(H_1 \vee H_2) \vdash F_1 {\supset} F_2$} 
    \AxiomC{$D_3: \mathbb{E}(\widetilde{\Gamma_3}; F_1 {\supset} F_2) \vdash F_1$} 
    \AxiomC{$D_4: \Gamma_2(F_2; \mathbb{E}(\widetilde{\Gamma_3}; F_1 {\supset} F_2)) \vdash H$} 
    \RightLabel{$\supset L$} 
    \BinaryInfC{$D_6: \Gamma_2(\mathbb{E}(\widetilde{\Gamma_3}; F_1 {\supset} F_2)) \vdash H$} 
    \RightLabel{$\Cut$} 
    \BinaryInfC{$\Gamma_2(\mathbb{E}(\widetilde{\Gamma_3}; \Gamma_1(H_1 \vee H_2)))
    \vdash H$}  
    \DisplayProof 
    }
  \end{center}
   where {\small $D_1: \Gamma_1(H_1) \vdash F_1 {\supset} F_2$} 
  and {\small $D_2: \Gamma_1(H_2) \vdash F_1 {\supset} F_2$}.
  The cut formula {\small $F_1 {\supset} F_2$} 
  is not the principal on the left premise. In this case,  
  we simply apply {\Cut} on the pairs: {\small ($D_1, D_6$)} and 
  {\small ($D_2, D_6$)}, to conclude:{\ }\\  
  {\small 
  \begin{center}
    \AxiomC{$D_1$} 
    \AxiomC{$D_6$} 
    \RightLabel{$\Cut$} 
    \BinaryInfC{$\Gamma_2(\mathbb{E}(\widetilde{\Gamma_3}; \Gamma_1(H_1))) \vdash H$}
    \AxiomC{$D_2$} 
    \AxiomC{$D_6$} 
    \RightLabel{$\Cut$} 
    \BinaryInfC{$\Gamma_2(\mathbb{E}(\widetilde{\Gamma_3}; \Gamma_1(H_2))) 
    \vdash H$} 
    \RightLabel{$\vee L$} 
    \BinaryInfC{$\Gamma_2(\mathbb{E}(\widetilde{\Gamma_3}; \Gamma_1(H_1 \vee H_2)))
    \vdash H$} 
    \DisplayProof
  \end{center}
  } 
  {\ }\\
  Of course, for this particular permutation to be correct, we must be 
  able to demonstrate, in the permuted derivation tree, 
  that {\small $\mathbb{E}(\widetilde{\Gamma_3}; 
  \Gamma_1(H_1 \vee H_2)) = 
  \mathbb{E}'(\widetilde{\Gamma_3}) \star \Gamma_1(H_1 \vee H_2)$} 
  with {\small $\star$} either a semi-colon or a comma, that 
  {\small $\mathbb{E}(\widetilde{\Gamma_3}; \Gamma_1(H_1)) = 
      \mathbb{E}'(\widetilde{\Gamma_3}) \star \Gamma_1(H_1)$}, and that  
  {\small $\mathbb{E}(\widetilde{\Gamma_3}; \Gamma_1(H_2)) = 
      \mathbb{E}'(\widetilde{\Gamma_3}) \star \Gamma_1(H_2)$}. But this is 
  vacuous since the cut formula which is replaced with 
  the structure {\small $\Gamma_1(H_1)$} or {\small $\Gamma_1(H_2)$} 
  is a formula. \\
  \indent Cases that remain are those for which  
  both premises of the cut instance have the cut formula 
  as the principal. We go through each to conclude the proof. 
  \begin{description}
    \item[($\wedge L, \wedge R$):]{\ }\\ 
      {\small 
      \begin{center}
	 \AxiomC{$D_1: \Gamma_1 \vdash F_1$} 
	 \AxiomC{$D_2: \Gamma_1 \vdash F_2$} 
	 \RightLabel{$\wedge R$} 
	 \BinaryInfC{$\Gamma_1 \vdash F_1 \wedge F_2$} 
	 \AxiomC{$D_3: \Gamma_2(F_1 ; F_2) \vdash H$} 
	 \RightLabel{$\wedge L$} 
	 \UnaryInfC{$\Gamma_2(F_1 \wedge F_2) \vdash H$} 
	 \RightLabel{$\Cut$} 
	 \BinaryInfC{$\Gamma_2(\Gamma_1) \vdash H$} 
	 \DisplayProof
      \end{center}
      }  
      $\Rightarrow${\ }\\
      {\small 
      \begin{center} 
	\AxiomC{$D_2$} 
	 \AxiomC{$D_1$} 
	 \AxiomC{$D_3$} 
	 \RightLabel{$\Cut$} 
	 \BinaryInfC{$\Gamma_2(\Gamma_1; F_2) \vdash H$}  
	 \RightLabel{$\Cut_{CS}$} 
	 \BinaryInfC{$\Gamma_2(\Gamma_1) \vdash H$} 
	 \DisplayProof
      \end{center} 
      } 
    \item[($\vee L, \vee R$): ]{\ }\\ 
      {\small
      \begin{center}
	\AxiomC{$D_1: \Gamma_1 \vdash F_i \quad (i \in \{1,2\})$} 
	 \RightLabel{$\vee R$} 
	 \UnaryInfC{$\Gamma_1 \vdash F_1 \vee F_2$} 
	 \AxiomC{$D_2: \Gamma_2(F_1) \vdash H$} 
	 \AxiomC{$D_3: \Gamma_2(F_2) \vdash H$} 
	 \RightLabel{$\vee L$} 
	 \BinaryInfC{$\Gamma_2(F_1 \vee F_2) \vdash H$} 
	 \RightLabel{$\Cut$} 
	 \BinaryInfC{$\Gamma_2(\Gamma_1) \vdash H$} 
	 \DisplayProof
       \end{center} }
      $\Rightarrow$ 
      {\ }\\  
      {\small 
      \begin{center}
	\AxiomC{$D_1$} 
	\AxiomC{$D_{(2\: or\: 3)}$} 
	\RightLabel{$\Cut$} 
	\BinaryInfC{$\Gamma_2(\Gamma_1) \vdash H$} 
	\DisplayProof
      \end{center} 
      } 
      Whether {\small $D_2$} or {\small $D_3$} for the 
      right premise sequent depends on the 
      value of {\small $i$}.
    \item[($\supset L, \supset R$): ]{\ }\\ 
      \begin{center} 
	\scalebox{0.83}{
	\hspace{-0.8cm}
	\AxiomC{$D_1: \Gamma_3; F_1  \vdash F_2$} 
	\RightLabel{$\supset R$} 
	\UnaryInfC{$D_4: \Gamma_3 \vdash F_1 {\supset} F_2$} 
    \AxiomC{$D_2: \mathbb{E}(\widetilde{\Gamma_1}; F_1 {\supset} F_2) \vdash F_1$} 
    \AxiomC{$D_3: \Gamma_2(F_2; \mathbb{E}(\widetilde{\Gamma_1}; F_1 {\supset} F_2)) \vdash H$} 
	\RightLabel{$\supset L$} 
    \BinaryInfC{$\Gamma_2(\mathbb{E}(\widetilde{\Gamma_1}; F_1 {\supset} F_2))
	\vdash H$}   
	\RightLabel{$\Cut$} 
    \BinaryInfC{$\Gamma_2(\mathbb{E}(\widetilde{\Gamma_1}; \Gamma_3)) 
	\vdash H$} 
	\DisplayProof 
	}
      \end{center}
      $\Rightarrow${\ }\\  
      {\small 
      \begin{center} 
	 \AxiomC{$D_4$} 
	 \AxiomC{$D_2$} 
	 \RightLabel{$\Cut$} 
     \BinaryInfC{$\mathbb{E}(\widetilde{\Gamma_1}; \Gamma_3) \vdash F_1$}  
	 \AxiomC{$D_1$}
	 \RightLabel{$\Cut$} 
     \BinaryInfC{$\Gamma_3; \mathbb{E}(\widetilde{\Gamma_1}; \Gamma_3) \vdash 
	 F_2$}   
	 \AxiomC{$D_4$} 
	 \AxiomC{$D_3$} 
	 \RightLabel{$\Cut$} 
     \BinaryInfC{$\Gamma_2(F_2; \mathbb{E}(\widetilde{\Gamma_1}; \Gamma_3)) \vdash
	 H$}
	 \RightLabel{$\Cut_{CS}$} 
	 \BinaryInfC{$\Gamma_2(\Gamma_3; 
         \mathbb{E}(\widetilde{\Gamma_1}; \Gamma_3)) \vdash H$}  
	 \RightLabel{Proposition \ref{admissible_weakening_LBI3}} 
	 \dottedLine  
     \UnaryInfC{$\Gamma_2(\widetilde{\Gamma_1}; \Gamma_3;
         \mathbb{E}(\widetilde{\Gamma_1}; \Gamma_3)) \vdash H$}
	 \RightLabel{Proposition \ref{admissible_EA2}} 
	 \dottedLine 
     \UnaryInfC{$\Gamma_2(\mathbb{E}(\widetilde{\Gamma_1}; \Gamma_3); 
         \mathbb{E}(\widetilde{\Gamma_1}; \Gamma_3)) \vdash 
	 H$}  
	 \dottedLine
	 \RightLabel{Proposition \ref{admissible_contraction_LBI3}} 
     \UnaryInfC{$\Gamma_2(\mathbb{E}(\widetilde{\Gamma_1}; 
     \Gamma_3)) \vdash 
	 H$} 
	 \DisplayProof
      \end{center}
      } 
      \noindent where a dotted line denotes that the derivation step
      is depth-preserving.  
    \item[($* L, * R$): ]{\ } 
      {\small 
      \begin{center}
	\AxiomC{$D_1: Re_i \vdash F_1$} 
	\AxiomC{$D_2: Re_j \vdash F_2$} 
	\RightLabel{$* R$} 
	\BinaryInfC{$\Gamma_1 \vdash F_1 * F_2$} 
	\AxiomC{$D_3: \Gamma_2(F_1, F_2) \vdash H$}  
	\RightLabel{$* L$} 
	\UnaryInfC{$\Gamma_2(F_1 * F_2) \vdash H$}
	\RightLabel{$\Cut$} 
	\BinaryInfC{$\Gamma_2(\Gamma_1) \vdash H$} 
	\DisplayProof
      \end{center}
      } 
      $\Rightarrow${\ }\\  
      {\small 
      \begin{center} 
	\AxiomC{$D_2$}
       \AxiomC{$D_1$} 
       \AxiomC{$D_3$} 
       \RightLabel{$\Cut$}
       \BinaryInfC{$\Gamma_2(Re_i, F_2) \vdash H$} 
       \RightLabel{$\Cut$} 
       \BinaryInfC{$\Gamma_2(Re_i, Re_j) \vdash H$} 
       \RightLabel{Proposition \ref{admissible_weakening_LBI3}} 
       \dottedLine 
       \UnaryInfC{$\Gamma_2(\Gamma_1) \vdash H$} 
       \DisplayProof
      \end{center}
      }  
    \item[(${\text{\wand}} L, {\text{\wand}} R$): ]{\ } 
      \begin{center} 
	\scalebox{0.84}{ 
	\hspace{-1cm}
	\AxiomC{$D_1: \Gamma_1, F_1 \vdash F_2$} 
	\RightLabel{$\text{\wand} R$} 
	\UnaryInfC{$D_4: \Gamma_1 \vdash F_1 {\text{\wand}} F_2$} 
	\AxiomC{$D_2: Re_i \vdash F_1$} 
    \AxiomC{$D_3: \Gamma_2((\widetilde{Re_j}, F_2); (\widetilde{\Gamma'}, 
        \mathbb{E}(\widetilde{\Gamma_3}; F_1 {\text{\wand}} F_2))) \vdash H$} 
	\RightLabel{$\text{\wand} L_1$} 
    \BinaryInfC{$\Gamma_2(\widetilde{\Gamma'}, \mathbb{E}(\widetilde{\Gamma_3}; F_1 {\text{\wand}} F_2)) \vdash H$} 
	\RightLabel{$\Cut$} 
    \BinaryInfC{$\Gamma_2(\widetilde{\Gamma'}, 
        \mathbb{E}(\widetilde{\Gamma_3}; \Gamma_1)) \vdash H$} 
	\DisplayProof 
	}
      \end{center} 
      $\Rightarrow${\ }\\  
      {\small 
      \begin{center} 
	\AxiomC{$D_2$}
	\AxiomC{$D_1$} 
	\AxiomC{$D_4$} 
	\AxiomC{$D_3$} 
	\RightLabel{$\Cut$} 
    \BinaryInfC{$\Gamma_2((\widetilde{Re_j}, F_2); (\widetilde{\Gamma'}, 
        \mathbb{E}(\widetilde{\Gamma_3}; \Gamma_1))) \vdash H$}   
	\RightLabel{$\Cut$} 
    \BinaryInfC{$\Gamma_2((\widetilde{Re_j}, \Gamma_1, F_1); 
        (\Gamma', \mathbb{E}(\widetilde{\Gamma_3}; \Gamma_1))) \vdash H$} 
	\RightLabel{$\Cut$} 
    \BinaryInfC{$\Gamma_2((\widetilde{Re_j}, \Gamma_1, Re_i); 
        (\widetilde{\Gamma'}, \mathbb{E}(\widetilde{\Gamma_3}; \Gamma_1))) \vdash H$}  
	\RightLabel{Proposition \ref{admissible_weakening_LBI3}} 
	\dottedLine 
    \UnaryInfC{$\Gamma_2((\widetilde{\Gamma'}, (\widetilde{\Gamma_3}; \Gamma_1)); 
        (\widetilde{\Gamma'}, \mathbb{E}(\widetilde{\Gamma_3}; \Gamma_1))) \vdash H$} 
	\RightLabel{Proposition \ref{admissible_EA2}} 
	\dottedLine 
    \UnaryInfC{$\Gamma_2((\widetilde{\Gamma'}, 
        \mathbb{E}(\widetilde{\Gamma_3}; \Gamma_1)); (\widetilde{\Gamma'}, 
        \mathbb{E}(\widetilde{\Gamma_3}; \Gamma_1))) \vdash H$} 
	\RightLabel{Theorem \ref{admissible_contraction_LBI3}} 
	\dottedLine 
    \UnaryInfC{$\Gamma_2(\widetilde{\Gamma'}, 
        \mathbb{E}(\widetilde{\Gamma_3}; \Gamma_1)) \vdash H$} 
	\DisplayProof
      \end{center}
      } 
  \end{description} 
\end{document}